\newcommand*\ie{i.\kern.1em e.\ }
\newcommand*\eg{e.\kern.1em g.\ }
\newcommand*\cf{c.\kern.1em f.\ }
\theoremstyle{plain}
\newtheorem{theorem}{Theorem}[section]
\newtheorem{lemma}[theorem]{Lemma}
\newtheorem{fact}[theorem]{Fact}
\newtheorem{proposition}[theorem]{Proposition}
\newtheorem{corollary}[theorem]{Corollary}
\newtheorem{conjecture}[theorem]{Conjecture}
\newtheorem{observation}[theorem]{Observation}
\crefname{claim}{Claim}{Claims}
\theoremstyle{definition}
\newtheorem{definition}[theorem]{Definition}
\newtheorem{remark}[theorem]{Remark}
\theoremstyle{plain}
\newcommand{\ignore}[1]{}
\DeclareMathOperator{\poly}{poly}
\newcommand{\Var}[1]{\mathrm{Var} \left[ #1 \right]}
\newcommand{\Ex}[1]{\bE \left[ #1 \right]}
\newcommand{\Exu}[2]{\underset{#1} \bE \left[ #2 \right] }
\renewcommand{\Pr}[1]{\bP \left[ #1 \right]} 
\newcommand{\Pru}[2]{\underset{ #1 }\bP \left[ #2 \right]}
\newcommand{\define}{\vcentcolon=}
\newcommand{\floor}[1]{\ensuremath{\lfloor #1 \rfloor}}
\DeclarePairedDelimiter{\abs}{\lvert}{\rvert}
\newcommand{\ind}[1]{\mathds{1} \left[ #1 \right] }
\newcommand{\zo}{\{0,1\}}
\newcommand{\pmset}{\{\pm 1\}}
\newcommand{\cP}{\ensuremath{\mathcal{P}}}
\newcommand{\bE}{\ensuremath{\mathbb{E}}}
\newcommand{\bN}{\ensuremath{\mathbb{N}}}
\newcommand{\bP}{\ensuremath{\mathbb{P}}}
\newcommand{\bR}{\ensuremath{\mathbb{R}}}
\newcommand{\bZ}{\ensuremath{\mathbb{Z}}}
\setlist[description]{leftmargin=\parindent,labelindent=\parindent}
\DeclarePairedDelimiterX{\inp}[2]{\langle}{\rangle}{#1, #2}
\newcommand{\grad}{\nabla}
\newcommand{\const}{\mathsf{const}}
\newcommand{\mono}{\mathsf{mono}}
\DeclareMathOperator{\Lip}{\mathsf{Lip}}
\newcommand{\lp}[2]{\left\| #1 \right\|_{L^{#2}}}
\newcommand{\lpdist}[3]{\left\| #1 \right\|_{L^{#2}\left(#3\right)}}
\begin{document}

\title{Directed Poincaré Inequalities and $L^1$ Monotonicity Testing of Lipschitz Functions}

\iftoggle{anonymous}{%
    \author{Anonymous Author}
    }{%
\author{%
  Renato Ferreira Pinto Jr.\thanks{Partly funded by an NSERC Canada Graduate Scholarship Doctoral
  Award.}\\
  University of Waterloo\\
  \texttt{r4ferrei@uwaterloo.ca}}}

\date{}

\maketitle

\begin{abstract}
    We study the connection between directed isoperimetric inequalities and monotonicity testing. In
    recent years, this connection has unlocked breakthroughs for testing monotonicity of functions
    defined on discrete domains. Inspired the rich history of isoperimetric inequalities in
    continuous settings, we propose that studying the relationship between directed isoperimetry and
    monotonicity in such settings is essential for understanding the full scope of this connection.

    Hence, we ask whether directed isoperimetric inequalities hold for functions $f : [0,1]^n \to
    \bR$, and whether this question has implications for monotonicity testing. We answer both
    questions affirmatively. For Lipschitz functions $f : [0,1]^n \to \bR$, we show the inequality
    $d^\mono_1(f) \lesssim \Ex{\|\grad^- f\|_1}$, which upper bounds the $L^1$ distance to
    monotonicity of $f$ by a measure of its ``directed gradient''. A key ingredient in our proof is
    the \emph{monotone rearrangement} of $f$, which generalizes the classical ``sorting operator''
    to continuous settings. We use this inequality to give an $L^1$ monotonicity tester for
    Lipschitz functions $f : [0,1]^n \to \bR$, and this framework also implies similar results for
    testing real-valued functions on the hypergrid.
\end{abstract}

\thispagestyle{empty}
\setcounter{page}{0}
\newpage

\section{Introduction}
In property testing, algorithms must make a decision about whether a function $f : \Omega \to R$
has some property $\cP$, or is \emph{far} (under some distance metric) from having that property,
using a small number of queries to $f$. One of the most well-studied problems in property testing is
\emph{monotonicity testing}, the hallmark case being that of testing monotonicity of Boolean
functions on the Boolean cube, $f : \zo^n \to \zo$. We call $f$ monotone if $f(x) \le f(y)$ whenever
$x \preceq y$, \ie $x_i \le y_i$ for every $i \in [n]$.

A striking trend emerging from this topic of research has been the connection between monotonicity
testing and \emph{isoperimetric inequalities}, in particular directed analogues of classical results
such as Poincaré and Talagrand inequalities. We preview that the focus of this work is to further
explore this connection by establishing directed isoperimetric inequalities for functions $f :
[0,1]^n \to \bR$ with continuous domain and range, and as an application obtain monotonicity testers
in such settings. Before explaining our results, let us briefly summarize the connection between
monotonicity testing and directed isoperimetry.

For a function $f : \zo^n \to \bR$, let $d^\const_1(f)$ denote its $L^1$ distance to any constant
function $g : \zo^n \to \bR$, and for any point $x$, define its discrete gradient $\grad f(x) \in
\bR^n$ by $(\grad f(x))_i \define f(x^{i \to 1}) - f(x^{i \to 0})$ for each $i \in [n]$, where $x^{i
\to b}$ denotes the point $x$ with its $i$-th coordinate set to $b$. Then the following
inequality\footnote{The left-hand side is usually written $\Var{f}$ instead; for Boolean functions,
the two quantities are equivalent up to a constant factor, and writing $d^\const_1(f)$ is more
consistent with the rest of our presentation.} is usually called the Poincaré inequality on the
Boolean cube (see \eg \cite{ODon14}): for every $f : \zo^n \to \zo$,
\begin{equation}
    \label{eq:poincare-on-hypercube}
    d^\const_1(f) \lesssim \Ex{\|\grad f\|_1} \,.
\end{equation}
(Here and going forward, we write $f \lesssim g$ to denote that $f \le c g$ for some universal
constant $c$, and similarly for $f \gtrsim g$. We write $f \approx g$ to denote that $f \lesssim g$
and $g \lesssim f$.)

Now, let $d^\mono_1(f)$ denote the $L^1$ distance from $f$ to any monotone function $g : \zo^n \to
\bR$, and for each point $x$ let $\grad^- f(x)$, which we call the \emph{directed gradient} of $f$,
be given by $\grad^- f(x) \define \min\{ \grad f(x), 0 \}$. Then \cite{CS16} were the first to
notice that the main ingredient of the work of \cite{GGLRS00}, who gave a monotonicity tester for
Boolean functions on the Boolean cube with query complexity $O(n/\epsilon)$, was the following
``directed analogue'' of \eqref{eq:poincare-on-hypercube}\footnote{Typically the left-hand side
would be the distance to a \emph{Boolean} monotone function, rather than any real-valued monotone
function, but the two quantities are equal; this may be seen via a maximum matching of violating
pairs of $f$, see \cite{FLNRRS02}.}: for every
$f : \zo^n \to \zo$,
\begin{equation}
    \label{eq:edge-tester-inequality}
    d^\mono_1(f) \lesssim \Ex{\|\grad^- f\|_1} \,.
\end{equation}
The tester of \cite{GGLRS00} is the ``edge tester''\!\!, which samples edges of the Boolean cube
uniformly at random and rejects if any sampled edge violates monotonicity. Inequality
\eqref{eq:edge-tester-inequality} shows that, if $f$ is far from monotone, then many edges are
violating, so the tester stands good chance of finding one.

In their breakthrough work, \cite{CS16} gave the first monotonicity tester with $o(n)$ query
complexity by showing a directed analogue of Margulis's inequality. This was improved by
\cite{CST14}, and eventually the seminal paper of \cite{KMS18} resolved the problem of (nonadaptive)
monotonicity testing of Boolean functions on the Boolean cube, up to polylogarithmic factors, by
giving a tester with query complexity $\widetilde O(\sqrt{n} / \epsilon^2)$. The key ingredient was
to show a directed analogue of \emph{Talagrand's inequality}. Talagrand's inequality
gives that, for every $f : \zo^n \to \zo$,
\[
    d^\const_1(f) \lesssim \Ex{\|\grad f\|_2} \,.
\]
Compared to \eqref{eq:poincare-on-hypercube}, this replaces the $\ell^1$-norm of the gradient with
its $\ell^2$-norm. \cite{KMS18} showed the natural directed analogue\footnote{In fact, they require
a \emph{robust} version of this inequality, but we omit that discussion for simplicity.} up to
polylogarithmic factors, which were later removed by \cite{PRW22}: for every $f : \zo^n \to \zo$,
\[
    d^\mono_1(f) \lesssim \Ex{\|\grad^- f\|_2} \,.
\]

Since then, directed isoperimetric inequalities have also unlocked results in monotonicity testing
of Boolean functions on the hypergrid \cite{BCS18,BCS22,BKKM22,BCS23} (see also \cite{BCS20,HY22})
and real-valued functions on the Boolean cube \cite{BKR20}.

Our discussion so far has focused on isoperimetric (\emph{Poincaré-type}) inequalities on
\emph{discrete} domains. On the other hand, a rich history in geometry and functional analysis,
originated in continuous settings, has established an array of isoperimetric inequalities for
functions defined on continuous domains, as well as an impressive range of connections to topics
such as partial differential equations \cite{Poi90}, Markov diffusion processes \cite{BGL14},
probability theory and concentration of measure \cite{BL97}, optimal transport \cite{BS16},
polynomial approximation \cite{VR99}, among others. (See \cref{sec:background-inequalities} for a
brief background on Poincaré-type inequalities.)

As a motivating starting point, we note that for suitably smooth (Lipschitz) functions $f : [0,1]^n
\to \bR$, an $L^1$ Poincaré-type inequality holds \cite{BH97}:
\begin{equation}
    \label{eq:intro-poincare-inequality}
    d^\const_1(f) \lesssim \Ex{\|\grad f\|_2} \,.
\end{equation}

Thus, understanding the full scope of the connection between classical isoperimetric inequalities,
their directed counterparts, and monotonicity seems to suggest the study of the continuous setting.
In this work, we ask: do \emph{directed} Poincaré-type inequalities hold for functions $f$ with
continuous domain and range? And if so, do such inequalities have any implications for monotonicity
testing? We answer both questions affirmatively: Lipschitz functions $f : [0,1]^n \to \bR$
admit a directed $L^1$ Poincaré-type inequality (\cref{thm:main-directed-inequality-continuous}),
and this inequality implies an upper bound on the query complexity of testing monotonicity of such
functions with respect to the $L^1$ distance (\cref{thm:main-tester}). (We view $L^1$ as the natural
distance metric for the continuous setting; see \cref{sec:discussion} for a discussion.) This
framework also yields results for $L^1$ testing monotonicity of real-valued functions on the
hypergrid $f : [m]^n \to \bR$. Our testers are \emph{partial derivative testers}, which naturally
generalize the classical \emph{edge testers} \cite{GGLRS00,CS13} to continuous domains.

We now introduce our model, and then summarize our results.

\subsection{$L^p$-testing}
\label{sec:lp-testing}

Let $(\Omega, \Sigma, \mu)$ be a probability space (typically for us, the unit cube or hypergrid
with associated uniform probability distribution). Let $R \subseteq \bR$ be a range, and $\cP$ a
property of functions $g : \Omega \to R$. Given a function $f : \Omega \to \bR$, we denote the $L^p$
distance of $f$ to property $\cP$ by $d_p(f, \cP) \define \inf_{g \in \cP} d_p(f,g)$, where
$d_p(f,g) \define \Exu{x \sim \mu}{\abs*{f(x)-g(x)}^p}^{1/p}$. For fixed domain $\Omega$, we write
$d^\const_p(f)$ for the $L^p$ distance of $f$ to the property of constant functions, and
$d^\mono_p(f)$ for the $L^p$ distance of $f$ to the property of monotone functions. (See
\cref{def:lp-distance} for a formal definition contemplating \eg the required measurability and
integrability assumptions.)

\begin{definition}[$L^p$-testers]
    Let $p \ge 1$. For probability space $(\Omega,\Sigma,\mu)$, range $R \subseteq \bR$, property
    $\cP \subseteq L^p(\Omega,\mu)$ of functions $g : \Omega \to R$, and proximity parameter
    $\epsilon > 0$, we say that randomized algorithm $A$ is an \emph{$L^p$-tester for $\cP$} with
    query complexity $q$ if, given \emph{oracle access} to an unknown input function $f : \Omega \to
    R \in L^p(\Omega,\mu)$, $A$ makes at most $q$ oracle queries and 1) accepts with probability at
    least $2/3$ if $f \in \cP$; 2) rejects with probability at least $2/3$ if $d_p(f, \cP) >
    \epsilon$.
\end{definition}

We say that $A$ has \emph{one-sided error} if it accepts functions $f \in \cP$ with probability $1$,
otherwise we say it has \emph{two-sided error}. It is \emph{nonadaptive} if it decides all of its
queries in advance (\ie before seeing output from the oracle), and otherwise it is \emph{adaptive}.
We consider two types of oracle:

\begin{description}
    \item[Value oracle:] Given point $x \in \Omega$, this oracle outputs the value $f(x)$.
    \item[Directional derivative oracle:] Given point $x \in \Omega$ and vector $v \in \bR^n$, this
        oracle outputs the derivative of $f$ along $v$ at point $x$, given by
        $\frac{\partial f}{\partial v}(x) = v \cdot \grad f(x)$, as long as $f$ is differentiable at
        $x$. Otherwise, it outputs a special symbol $\bot$.
\end{description}

A directional derivative oracle is weaker than a full first-order oracle, which would return the
entire gradient \cite{BV04}, and it seems to us like a reasonable model for the
high-dimensional setting; for example, obtaining the full gradient costs $n$ queries, rather than a
single query. This type of oracle has also been studied in optimization research, \eg see
\cite{CWZ21}. For our applications, only the \emph{sign} of the result will matter, in which case we
remark that, for sufficiently smooth functions (say, functions with bounded second derivatives) each
directional derivative query may be simulated using two value queries on sufficiently close together
points.

Our definition (with value oracle) coincides with that of \cite{BRY14a} when the range is $R =
[0,1]$. On the other hand, for general $R$, we keep the distance metric unmodified, whereas
\cite{BRY14a} normalize it by the magnitude of $R$. Intuitively, we seek testers that are
efficient even when $f$ may take large values as the dimension $n$ grows; see
\cref{sec:comparison-with-prior-lp} for more details.

\subsection{Results and main ideas}
\label{section:results}

\subsubsection{Directed Poincaré-type inequalities}
\label{sec:results-inequalities}

Our first result is a directed Poincaré inequality for Lipschitz functions $f : [0,1]^n \to \bR$,
which may be seen as the continuous analogue of inequality \eqref{eq:edge-tester-inequality} of
\cite{GGLRS00}.

\begin{theorem}
    \label{thm:main-directed-inequality-continuous}
    Let $f : [0,1]^n \to \bR$ be a Lipschitz function with monotone rearrangement $f^*$. Then
    \begin{equation}
        \label{eq:main-directed-inequality}
        d^\mono_1(f) \approx \Ex{\abs*{f - f^*}} \lesssim \Ex{\|\grad^- f\|_1} \,.
    \end{equation}
\end{theorem}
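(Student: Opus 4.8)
The plan is to establish the two parts of \eqref{eq:main-directed-inequality} separately, relying on the monotone rearrangement $f^*$, which we regard as the composition $f^* = S_n \circ \dots \circ S_1$ of one-dimensional sorting operators: $S_i$ replaces $f$, along each axis-parallel line in direction $i$, by its nondecreasing rearrangement. We use the following standard rearrangement facts, transplanted to the continuous setting: (i) each $S_i$ is an $L^1$-contraction on $L^1([0,1]^n)$; (ii) each $S_i$ preserves the mean along every line, hence preserves all integrals of boundary restrictions; (iii) each $S_i$ maps $L$-Lipschitz functions to $L$-Lipschitz functions; (iv) every monotone function is a fixed point of every $S_i$; and (v) for $j \neq i$, $S_j$ preserves monotonicity in direction $i$ --- because sorting preserves the pointwise order between two functions, so raising a coordinate $j$ (which raises the whole direction-$i$ profile pointwise) can only raise its quantiles. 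In particular $f^*$ is monotone.

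\emph{The equivalence $d^\mono_1(f) \approx \Ex{\abs{f - f^*}}$.} Since $f^*$ is monotone, $d^\mono_1(f) \le \Ex{\abs{f-f^*}}$. For the reverse direction, let $g$ be any monotone function; by (iv), $S_i g = g$ for all $i$, so by (i), $\|f^* - g\|_1 = \|S_n\cdots S_1 f - S_n \cdots S_1 g\|_1 \le \|f - g\|_1$. The triangle inequality gives $\Ex{\abs{f-f^*}} \le \|f - g\|_1 + \|f^* - g\|_1 \le 2\|f-g\|_1$, and taking the infimum over monotone $g$ yields $\Ex{\abs{f - f^*}} \le 2\, d^\mono_1(f)$.

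\emph{The inequality $\Ex{\abs{f - f^*}} \lesssim \Ex{\|\grad^- f\|_1}$.} Set $f_0 = f$ and $f_k = S_k f_{k-1}$, so $f_n = f^*$ and, telescoping, $\Ex{\abs{f-f^*}} \le \sum_{k=1}^n \|f_{k-1} - S_k f_{k-1}\|_1$. We bound each term by a one-dimensional estimate applied on every line in direction $k$. For an $L$-Lipschitz $\phi : [0,1] \to \bR$ with nondecreasing rearrangement $S\phi$, let $g_\phi(t) = \inf_{s \ge t}\phi(s)$ be the greatest nondecreasing minorant of $\phi$. One checks $S\phi \ge g_\phi$ pointwise (only a $t$-fraction of $\phi$'s mass lies below $g_\phi(t)$), and $\int S\phi = \int \phi$ by (ii), so $\|\phi - S\phi\|_1 \le \|\phi - g_\phi\|_1 + \|S\phi - g_\phi\|_1 = 2\|\phi - g_\phi\|_1$. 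Moreover $\phi(t) - g_\phi(t) = \sup_{s \ge t}(\phi(t) - \phi(s)) \le \int_t^1 \max\{-\phi'(s),0\}\,\deriv s$, and integrating over $t \in [0,1]$ and exchanging the order of integration gives $\|\phi - g_\phi\|_1 \le \int_0^1 \max\{-\phi'(s),0\}\,\deriv s$. Applying this on each line in direction $k$ (legitimate since $f_{k-1}$ is $L$-Lipschitz by (iii)) and integrating over the remaining coordinates, $\|f_{k-1} - S_k f_{k-1}\|_1 \le 2\, \Ex{\abs{(\grad^- f_{k-1})_k}}$. It remains to show $\Ex{\abs{(\grad^- f_{k-1})_k}} \le \Ex{\abs{(\grad^- f)_k}}$. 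This follows by iterating the claim that, for $j \neq k$ and any $L$-Lipschitz $h$, $\Ex{\abs{(\grad^- S_j h)_k}} \le \Ex{\abs{(\grad^- h)_k}}$: on each coordinate plane spanned by directions $j$ and $k$, the direction-$k$ ``downward mass'' of a function $u$ equals $\tfrac12\big(\int\abs{\partial_k u} - (\text{difference of its two direction-}k\text{ boundary restrictions})\big)$; by (ii), sorting in direction $j$ leaves the boundary term unchanged, while it does not increase $\int\abs{\partial_k u}$, since for any two levels $t < t'$ one has $\|Sh(\cdot,t) - Sh(\cdot,t')\|_1 \le \|h(\cdot,t) - h(\cdot,t')\|_1$ by (i) (approximate $\partial_k$ by difference quotients and pass to the limit). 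Iterating over $j = 1,\dots,k-1$ and summing over $k$ yields $\Ex{\abs{f - f^*}} \le 2\sum_{k=1}^n \Ex{\abs{(\grad^- f)_k}} = 2\,\Ex{\|\grad^- f\|_1}$.

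The main obstacle is the continuous rearrangement calculus underlying facts (i)--(v): verifying that $f^*$ and every intermediate $f_k$ are Lipschitz (so that the one-dimensional estimate and the identity for $\int\abs{\partial_k u}$ apply), that the monotone rearrangement is well-defined and measurable on $[0,1]^n$, and that ``sorting in direction $j$ does not increase the direction-$k$ total variation'' survives the passage from the $L^1$-contraction of one-dimensional rearrangement to the continuous derivative (difference quotients plus dominated convergence, with some care at boundary slices). The combinatorial skeleton --- sort one coordinate at a time and charge each sort to the negative-gradient budget it spends --- is exactly that of the discrete edge-tester inequality \eqref{eq:edge-tester-inequality} of \cite{GGLRS00}; the new content is the continuous analysis behind each rearrangement fact.
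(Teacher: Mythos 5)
Your proof is correct and shares the same skeleton as the paper's --- decompose $f^*$ as a composition of coordinate-wise sorts, telescope, bound each sort by a one-dimensional estimate, and transfer that estimate back to the original $f$ by showing that sorting cannot increase the negative-gradient budget --- but both of the key technical lemmas are proved by genuinely different routes, and the differences are worth noting.

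For the one-dimensional estimate, the paper's Lemma 3.5 argues pointwise: for each $x$ with $f^*(x) > f(x)$ it locates $x' \le x$ with $f(x') \ge f^*(x)$ via an extreme-value argument, so $f^*(x) - f(x) \le f(x') - f(x) \le \lp{\partial^- f}{1}$, and integrates over the set $\{f^* > f\}$. You instead introduce the greatest nondecreasing minorant $g_\phi$, show $S\phi \ge g_\phi$ (correct for continuous $\phi$), use equimeasurability to get $\|\phi - S\phi\|_1 \le 2\|\phi - g_\phi\|_1$, and bound $\phi - g_\phi$ by a Fubini computation. Both arguments deliver the same constant-$2$ bound.

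For the crucial ``sorting in direction $j$ does not increase $\int(\partial_k)^-$'' step, the divergence is more substantial. The paper (Proposition 3.12) applies a one-sided rearrangement contraction $\int(u^* - v^*)^- \le \int(u - v)^-$ due to Chong--Tyler directly to difference quotients and passes to the limit. You avoid this one-sided inequality entirely by splitting $\int(\partial_k u)^- = \tfrac12\bigl(\int\abs{\partial_k u} - (\text{boundary term})\bigr)$; the boundary term is invariant under $S_j$ by equimeasurability on slices, and $\int\abs{\partial_k u}$ is non-increasing under $S_j$ by the plain (two-sided) $L^1$-contraction of sorting applied to difference quotients. This is a nice trade: you buy a more elementary ingredient ($L^1$-contraction instead of the refined one-sided version) at the price of a second ingredient (the boundary-term bookkeeping). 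It hinges on absolute continuity in direction $k$ --- you need $\int_0^1 \partial_k u = u(\cdot,1) - u(\cdot,0)$ for each $k$-line --- so, exactly as in the paper's proof, you must first invoke Kawohl's lemma that each $S_i$ preserves Lipschitz continuity to keep the intermediate functions in the class where Rademacher's theorem and the bounded-convergence/Tonelli manipulations apply; you flag this (your fact (iii)) but it is load-bearing and should be cited rather than asserted. Modulo that, the argument goes through.
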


As hinted in the statement, a crucial tool for this result is the \emph{monotone rearrangement}
$f^*$ of $f$. We construct $f^*$ by a sequence of axis-aligned rearrangements $R_1, \dotsc, R_n$;
each $R_i$ is the \emph{non-symmetric monotone rearrangement} operator along dimension $i$, which
naturally generalizes the \emph{sorting} operator of \cite{GGLRS00} to the continuous case. For each
coordinate $i \in [n]$, the operator $R_i$ takes $f$ into an equimeasurable function $R_i f$ that is
monotone in the $i$-th coordinate, at a ``cost'' $\Ex{\abs{f - R_if}}$ that is upper bounded by
$\Ex{\abs{\partial^-_i f}}$, where $\partial^-_i f \define (\grad^- f)_i$ is the directed partial
derivative along the $i$-th coordinate. We show that each application $R_i$ can only decrease the
``cost'' associated with further applications $R_j$, so that the total cost of obtaining $f^*$ (\ie
the LHS of \eqref{eq:main-directed-inequality}) may be upper bounded, via the triangle inequality,
by the sum of all directed partial derivatives, \ie the RHS of \eqref{eq:main-directed-inequality}.

A technically simpler version of this argument also yields a directed Poincaré inequality for
real-valued functions on the hypergrid. We also note that
\cref{thm:main-directed-inequality-continuous,thm:main-directed-inequality-discrete} are both tight
up to constant factors.

\begin{theorem}
    \label{thm:main-directed-inequality-discrete}
    Let $f : [m]^n \to \bR$ and let $f^*$ be its monotone rearrangement. Then
    \[
        d^\mono_1(f) \approx \Ex{\abs*{f - f^*}} \lesssim m \Ex{\|\grad^- f\|_1} \,.
    \]
\end{theorem}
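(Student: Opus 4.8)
The plan is to run the argument behind \cref{thm:main-directed-inequality-continuous} in the (technically simpler) discrete setting, with the classical one-dimensional \emph{sorting operator} in the role of the non-symmetric monotone rearrangement. For $g : [m] \to \bR$ write $Sg$ for its non-decreasing rearrangement, and for $i \in [n]$ let $R_i$ be the operator on functions $[m]^n \to \bR$ that applies $S$ along every axis-$i$ line (\ie, for each fixing of the coordinates other than $i$, it sorts the resulting function of $z_i$); set $f^* \define R_n \circ \dotsb \circ R_1 f$, which matches the monotone rearrangement in the statement. Granting three facts --- (i) $f^*$ is monotone and $\Ex{\abs{f - f^*}} \approx d^\mono_1(f)$; (ii) $\Ex{\abs{f - R_i f}} \le m \, \Ex{\abs*{\partial^-_i f}}$ for every $i$; and (iii) $\Ex{\abs*{\partial^-_i(R_j h)}} \le \Ex{\abs*{\partial^-_i h}}$ for every $j \ne i$ --- the theorem follows from the telescoping triangle inequality $\Ex{\abs{f - f^*}} \le \sum_{i=1}^n \Ex{\abs{R_{i-1}\dotsm R_1 f - R_i\dotsm R_1 f}}$ (reading $R_0\dotsm R_1 f \define f$): applying (ii) to each summand with $h = R_{i-1}\dotsm R_1 f$, and then (iii) to strip the operators $R_{i-1}, \dotsc, R_1$ one at a time (all of which carry index $< i$, hence $\ne i$), we get $\Ex{\abs{f - f^*}} \le \sum_{i=1}^n m \, \Ex{\abs*{\partial^-_i(R_{i-1}\dotsm R_1 f)}} \le \sum_{i=1}^n m \, \Ex{\abs*{\partial^-_i f}} = m \, \Ex{\|\grad^- f\|_1}$, and then invoking (i).

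Fact (i) is classical. Each $R_i f$ is non-decreasing along coordinate $i$ by construction, and sorting the rows of a matrix all of whose columns are non-decreasing leaves the columns non-decreasing (a consequence of the sub-lemma that sorting two sequences preserves their entrywise order); hence $R_j$ preserves monotonicity along coordinate $i$ whenever $j \ne i$, so $f^*$ is non-decreasing in every coordinate, \ie monotone, which gives $d^\mono_1(f) \le \Ex{\abs{f - f^*}}$. For the reverse direction, $S$ is an $L^1$ contraction (again by the layer-cake identity below), hence so is each $R_i$ and their composition, while any monotone $g$ has $g^* = g$ since it is already sorted along every line; taking $g$ to realize $d^\mono_1(f)$ we get $\Ex{\abs{g - f^*}} = \Ex{\abs{g^* - f^*}} \le \Ex{\abs{g - f}} = d^\mono_1(f)$, so $\Ex{\abs{f - f^*}} \le \Ex{\abs{f - g}} + \Ex{\abs{g - f^*}} \le 2 \, d^\mono_1(f)$.

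Facts (ii) and (iii) both rest on the layer-cake identity $\abs{a - b} = \int_\bR \abs{\ind{a \ge t} - \ind{b \ge t}} \, dt$, which rewrites $L^1$ quantities as threshold-integrals of symmetric differences of super-level sets, together with the observation that sorting a single line turns its super-level set at threshold $t$ into the suffix of the same size. For (ii), on a fixed axis-$i$ line with restricted function $g$ this gives $\|g - Sg\|_1 = \int_\bR \abs{A_t \triangle A_t^{\mathrm{sort}}} \, dt$, where $A_t = \{x : g(x) \ge t\}$ and $A_t^{\mathrm{sort}}$ is the length-$\abs{A_t}$ suffix; the integrand is at most $m$ always, and is nonzero only when $t$ lies strictly below some descent of $g$, so the contributing set of thresholds has Lebesgue measure at most $\sum_x (g(x) - g(x+1))^+ = \sum_x \abs{\partial^- g(x)}$, whence $\|g - Sg\|_1 \le m \sum_x \abs{\partial^- g(x)}$; summing over the $m^{n-1}$ axis-$i$ lines and normalizing yields (ii). For (iii), fix all coordinates but $i$ and $j$, obtaining a matrix on which $R_j$ sorts each row (over $z_j$) while $\partial^-_i$ measures downward steps along columns (over $z_i$); at each threshold $t$, writing $B_t^{(r)}$ for the super-level set of row $r$, the number of downward column-crossings after sorting is $\sum_r (\abs{B_t^{(r)}} - \abs{B_t^{(r+1)}})^+ \le \sum_r \abs{B_t^{(r)} \setminus B_t^{(r+1)}}$, the count before sorting; integrating over $t$ and summing over the remaining fixed coordinates gives (iii).

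The main obstacle is fact (iii): it is the discrete incarnation of the claim (flagged in the discussion after \cref{thm:main-directed-inequality-continuous}) that each rearrangement can only decrease the cost of the ones that follow, and it is exactly what lets the telescoping argument close; everything else is either classical (the monotonicity-preservation and contraction facts in (i)) or a short layer-cake computation, modulo routine bookkeeping of boundary terms in the discrete-gradient normalization. Finally, the factor $m$ cannot be removed: for $f(z) = \ind{z_1 \le \floor{m/2}}$ one has $\Ex{\abs{f - f^*}} \approx d^\mono_1(f) = \Theta(1)$ while $\Ex{\|\grad^- f\|_1} = \Theta(1/m)$.
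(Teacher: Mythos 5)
Your proposal is correct and follows essentially the same high-level decomposition as the paper: define $f^* = R_n \dotsm R_1 f$ by successive axis-wise sortings, telescope $\lp{f-f^*}{1}$ via the triangle inequality, control each term with a one-dimensional bound, and strip earlier $R_j$'s using the ``rearrangement cannot increase the directed boundary'' fact, then establish near-optimality of $f^*$ via non-expansiveness. Your fact (iii) is precisely the discrete content of the Crandall--Tartar inequality (\cref{proposition:rearrangement-inequalities-equivalence}), which the paper cites; you instead prove it directly by the layer-cake identity and the elementary observation that for two suffixes $\tilde A, \tilde B \subseteq [m]$ of sizes $\abs{A}, \abs{B}$ one has $\abs*{\tilde A \setminus \tilde B} = (\abs{A}-\abs{B})^+ \le \abs{A \setminus B}$. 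Likewise, your fact (ii) is the one-dimensional bound (\cref{lemma:discrete-directed-poincare-1d}); the paper proves it by repeating the extreme-value-theorem argument from \cref{lemma:directed-poincare-1d}, whereas your layer-cake version bounds the threshold-measure by $\sum_x \abs{\partial^- g(x)}$ and the symmetric-difference size by $m$, even saving the factor of $2$. Fact (i) and the contraction argument for $\Ex{\abs{f-f^*}} \le 2d_1(f)$ match \cref{prop:rearrangement-is-monotone} and \cref{prop:rearrangement-is-optimal-discrete}. So this is the same route, with the two component lemmas established by self-contained layer-cake computations rather than by invoking \cite{CT80} and porting the continuous one-dimensional proof; this buys a fully elementary, purely combinatorial argument at the modest cost of re-deriving classical facts about sorting.
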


\cref{table:inequalities} places our results in the context of existing classical and directed
inequalities. In that table and going forward, for any $p,q \ge 1$ we call the inequalities
\[
    d^\const_p(f)^p \lesssim \Ex{\|\grad f\|_q^p}
    \qquad \text{and} \qquad
    d^\mono_p(f)^p \lesssim \Ex{\|\grad^- f\|_q^p}
\]
a \emph{classical} and \emph{directed $(L^p, \ell^q)$-Poincaré inequality}, respectively. Note that
the $L^p$ notation refers to the space in which we take norms, while $\ell^q$ refers to the geometry
in which we measure gradients. In this paper, we focus on the $L^1$ inequalities. See also
\cref{sec:background-inequalities} for an extended version of \cref{table:inequalities} including
other related hypergrid inequalities shown in recent work.

We also note that we have ignored in our discussion the issues of \emph{robust} inequalities, which
seem essential for some of the testing applications (see \cite{KMS18}), and the distinction between
\emph{inner} and \emph{outer boundary}, whereby some inequalities on Boolean $f$ may be made
stronger by setting $\grad f(x)=0$ when $f(x)=0$ (see \eg \cite{Tal93}). We refer the reader to the
original works for the strongest version of each inequality and a detailed treatment of these
issues.

\begin{table}[t!]
    \centering
    \begin{NiceTabular}{c | c || c | c | c}[cell-space-limits=0.3em]
        \Block{2-2}{\diagbox{\textbf{Inequality}}{\textbf{Setting} \,}}
            & & \Block{1-2}{\textbf{Discrete}} & & \textbf{Continuous} \\
        & & $\zo^n \to \zo$ & $\zo^n \to \bR$ & $[0,1]^n \to \bR$ \\ \hline \hline

        \Block{2-1}{$(L^1, \ell^1)$-Poincaré}
            & $d^\const_1(f) \lesssim \Ex{\|\grad f\|_1}$
            & * \cite{Tal93} & * \cite{Tal93} & * \cite{BH97} \\ \cline{2-5}
        & $d^\mono_1(f) \lesssim \Ex{\|\grad^- f\|_1}$
            & \cite{GGLRS00} & \cref{thm:main-directed-inequality-discrete}
            & \cref{thm:main-directed-inequality-continuous}
            \\ \hline

        \Block{2-1}{$(L^1, \ell^2)$-Poincaré}
            & $d^\const_1(f) \lesssim \Ex{\|\grad f\|_2}$
            & * \cite{Tal93} & \cite{Tal93} & \cite{BH97} \\ \cline{2-5}
        & $d^\mono_1(f) \lesssim \Ex{\|\grad^- f\|_2}$
            & \cite{KMS18} & ? & \cref{conjecture:better-inequality}
    \end{NiceTabular}
    \caption{Classical and directed Poincaré-type inequalities on discrete and continuous domains.
    Cells marked with * indicate inequalities that follow from another entry in the table.}
    \label{table:inequalities}
\end{table}

\subsubsection{Testing monotonicity on the unit cube and hypergrid}
Equipped with the results above, we give a monotonicity tester for Lipschitz functions $f : [0,1]^n
\to \bR$, and the same technique yields a tester for functions on the hypergrid as well. The testers
are parameterized by an upper bound $L$ on the best Lipschitz constant of $f$ in $\ell^1$ geometry,
which we denote $\Lip_1(f)$ (see \cref{def:lipschitz} for a formal definition).

Both of our testers are \emph{partial derivative testers}. These are algorithms which only have
access to a directional derivative oracle and, moreover, their queries are promised to be
axis-aligned vectors. In the discrete case, these are usually called \emph{edge testers}
\cite{GGLRS00,CS13}.

\begin{theorem}
    \label{thm:main-tester}
    There is a nonadaptive partial derivative $L^1$ monotonicity tester for Lipschitz functions $f :
    [0,1]^n \to \bR$ satisfying $\Lip_1(f) \le L$ with query complexity $O\left(\frac{n
    L}{\epsilon}\right)$ and one-sided error.

    Similarly, there is a nonadaptive partial derivative $L^1$ monotonicity tester for functions $f
    : [m]^n$ satisfying $\Lip_1(f) \le L$ with query complexity $O\left(\frac{nm
    L}{\epsilon}\right)$ and one-sided error.
\end{theorem}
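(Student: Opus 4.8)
The plan is to build the tester directly on top of the directed Poincaré inequality \cref{thm:main-directed-inequality-continuous} (respectively \cref{thm:main-directed-inequality-discrete}), in exact analogy with how the edge tester of \cite{GGLRS00} follows from \eqref{eq:edge-tester-inequality}. First I would describe the tester: sample a point $x \sim [0,1]^n$ uniformly and a coordinate $i \sim [n]$ uniformly, query the directional derivative oracle at $x$ along $e_i$ to obtain $\partial_i f(x)$, and reject if this value is strictly negative (in the hypergrid case, query the ``edge'' $(x, x^{i\to x_i+1})$ via its discrete partial derivative and reject if it is negative); repeat this basic test $\Theta(nL/\epsilon)$ times and accept iff no repetition rejected. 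One-sidedness is immediate: if $f$ is monotone then every partial derivative is nonnegative (a.e.), so the tester never rejects, and I would note that the measure-zero set of nondifferentiability is harmless since a Lipschitz function is differentiable almost everywhere (Rademacher) and we only ever sample $x$ from a continuous distribution.

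For the soundness analysis, suppose $d^\mono_1(f) > \epsilon$. By \cref{thm:main-directed-inequality-continuous}, $\Ex{\|\grad^- f\|_1} \gtrsim \epsilon$, i.e. $\sum_{i=1}^n \Exu{x}{|\partial_i^- f(x)|} \gtrsim \epsilon$. The key remaining step is to convert this lower bound on the \emph{magnitude} of the directed gradient into a lower bound on the \emph{probability} that a single basic test rejects, namely $p \define \Pru{x,i}{\partial_i f(x) < 0}$. Here is where the Lipschitz bound $\Lip_1(f) \le L$ enters: for each coordinate $i$ and each point $x$, we have $|\partial_i f(x)| \le \Lip_1(f) \le L$ pointwise (since $\ell^1$-Lipschitz constant $L$ bounds each partial derivative by $L$), hence $|\partial_i^- f(x)| \le L \cdot \ind{\partial_i f(x) < 0}$. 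Averaging over $x$ and $i$ gives $\frac{1}{n}\sum_i \Exu{x}{|\partial_i^- f(x)|} \le L \cdot p$, so $p \ge \frac{1}{nL}\sum_i \Exu{x}{|\partial_i^- f(x)|} \gtrsim \frac{\epsilon}{nL}$. Therefore a single basic test rejects with probability $\gtrsim \epsilon/(nL)$, and repeating it $\Theta(nL/\epsilon)$ times drives the probability of never rejecting below $1/3$ by the standard $(1-p)^{1/p} \le e^{-1}$ argument. The hypergrid case is identical after substituting \cref{thm:main-directed-inequality-discrete}, which costs an extra factor of $m$, and noting that the discrete partial derivative $f(x^{i\to x_i+1}) - f(x^{i\to x_i})$ has magnitude at most $\Lip_1(f) \le L$ as well, yielding query complexity $O(nmL/\epsilon)$.

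The main obstacle, and the only place any real care is needed, is the measure-theoretic bookkeeping around differentiability: I would need to confirm that $\Ex{\|\grad^- f\|_1}$ as it appears in \cref{thm:main-directed-inequality-continuous} is defined via the a.e.-existing gradient, that the directional derivative oracle's $\bot$ outputs occur only on a null set and can be treated (say) as ``accept'' without affecting either completeness or the soundness bound, and that $\Lip_1(f) \le L$ does indeed imply $\|\grad f(x)\|_\infty \le L$ wherever the gradient exists — all of which follow from standard facts (Rademacher's theorem and the characterization of the Lipschitz constant as the essential sup of the gradient norm) that I would collect in a short preliminary lemma or simply cite. Everything else is the routine amplification calculation, so I would keep that brief. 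It is also worth remarking that the tester is manifestly nonadaptive (all sample points and coordinates are chosen in advance) and uses only axis-aligned directional-derivative queries, so it is a partial derivative tester as claimed.
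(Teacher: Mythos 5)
Your proof is correct and follows essentially the same route as the paper: the tester is identical (uniform sample of $x$ and $i$, reject on a negative partial derivative, repeat $\Theta(nL/\epsilon)$ times), and the soundness argument is the same chain $d^\mono_1(f) > \epsilon \implies \Ex{\|\grad^- f\|_1} \gtrsim \epsilon \implies \Pru{x,i}{\partial_i f(x) < 0} \gtrsim \epsilon/(nL)$, using the pointwise bound $|\partial_i f| \le L$ from the $\ell^1$-Lipschitz assumption, with the same Rademacher bookkeeping on the differentiability set. The only cosmetic difference is that you argue the rejection probability directly whereas the paper phrases it as a contradiction; the underlying inequality is the same.
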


The testers work by sampling points $x$ and coordinates $i \in [n]$ uniformly at random, and using
directional derivative queries to reject if $\partial^-_i f(x) < 0$. Their correctness is shown
using \cref{thm:main-directed-inequality-continuous,thm:main-directed-inequality-discrete}, which
imply that, when $f$ is $\epsilon$-far from monotone in $L^1$-distance, the total magnitude of its
negative partial derivatives must be large---and since each partial derivative is at most $L$ by
assumption, the values $\partial^-_i f(x)$ must be strictly negative in a set of large measure,
which the tester stands good chance of hitting with the given query complexity.

\subsubsection{Testing monotonicity on the line}
The results above, linking a Poincaré-type inequality with a monotonicity tester that uses partial
derivative queries and has linear dependence on $n$, seem to suggest a close parallel with the case
of the edge tester on the Boolean cube \cite{GGLRS00,CS13}. On the other hand, we also show a strong
separation between Hamming and $L^1$ testing. Focusing on the simpler problem of monotonicity
testing \emph{on the line}, we show that the tight query complexity of $L^1$ monotonicity testing
Lipschitz functions grows with the square root of the size of the (continuous or discrete) domain:

\begin{theorem}
    \label{thm:tester-line}
    There exist nonadaptive $L^1$ monotonicity testers for Lipschitz functions $f : [0,m] \to \bR$
    and $f : [m] \to \bR$ satisfying $\Lip_1(f) \le L$ with query complexity $\widetilde
    O\left(\sqrt{mL/\epsilon}\right)$. The testers use value queries and have one-sided error.
\end{theorem}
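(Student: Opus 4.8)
The plan is to reduce the continuous case $f:[0,m]\to\bR$ to the discrete case $f:[m]\to\bR$ (the continuous-to-discrete reduction should be routine: sample a fine net of points, use the Lipschitz bound to control discretization error in $L^1$ distance to monotonicity, and simulate directional derivative queries by value queries on nearby points if needed — though here we are promised value queries anyway). So I will focus on $f:[m]\to\bR$ with $\Lip_1(f)\le L$. The key idea is to combine the classical $\widetilde O(\sqrt m)$ monotonicity tester for functions on the line (in the Hamming/Boolean setting, due to Ergün–Kannan–Kumar–Rubinfeld–Viswanathan and Ailon–Chazelle, via tree/binary-search testers) with the Lipschitz promise to convert $L^1$-farness into Hamming-farness at the right scale.

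The main steps, in order. First I would establish a \emph{bucketing} lemma: if $f:[m]\to\bR$ is $\epsilon$-far from monotone in $L^1$ and $\Lip_1(f)\le L$, then consider rounding the range into buckets of width $w \approx \sqrt{\epsilon L / m}$ (to be optimized); the rounded function $\tilde f$ is still Lipschitz-ish and is $\Omega(\epsilon)$-far from monotone in $L^1$, hence $\tilde f$ (now taking few distinct values relative to the scale) must be $\Omega(\epsilon / (mw)) = \Omega(\sqrt{\epsilon/(mL)})$-far from monotone in \emph{Hamming} distance — because changing the value at one point moves $L^1$ distance by at most $O(w)$ after rounding, wait, more carefully: a repair of $\tilde f$ to a monotone function changing a $\delta$-fraction of points can only change functions whose values differ by $O(w + Lm)$ pointwise, so $L^1$-distance $\le \delta \cdot O(Lm)$... the right statement is that Hamming distance to monotonicity of $\tilde f$ is at least $\epsilon' \define \Omega(\epsilon/(Lm) \cdot (1/\text{something}))$; I will pin down the exponent so that $1/\sqrt{\epsilon'} \cdot (\text{query cost per ``test''}) = \widetilde O(\sqrt{mL/\epsilon})$. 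Second, I would run the classical binary-search-based Hamming monotonicity tester for the line on $\tilde f$ (simulating $\tilde f$-queries by value queries to $f$), which has query complexity $\widetilde O(1/\epsilon')$, and argue one-sided error: if $f$ is monotone then $\tilde f$ is monotone (rounding a monotone function keeps it monotone), so the tester never rejects.

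The step I expect to be the main obstacle is the quantitative bucketing lemma — getting the dependence on $L$, $m$, $\epsilon$ exactly right so that the final bound is $\widetilde O(\sqrt{mL/\epsilon})$ and not something worse. The subtlety is that the Lipschitz constraint couples the range scale to the domain scale: over an interval of length $\ell$ the function varies by at most $L\ell$, so the ``natural'' vertical resolution at which $L^1$-farness becomes Hamming-farness depends on balancing the horizontal resolution (how many points we can afford to treat as a block) against the vertical one. I would formalize this by partitioning $[m]$ into $m/\ell$ blocks of length $\ell$, noting that within each block $f$ lies in a window of height $L\ell$, and that an $L^1$-monotone-repair restricted to a block costs at least (height of the ``violation'' in that block) times (its horizontal extent); summing and using Cauchy–Schwarz to relate $\sum(\text{area of violations})=\Omega(\epsilon m)$ to the number of ``bad blocks'' times $L\ell^2$ gives that an $\Omega(\epsilon/(L\ell))$-fraction of blocks are bad, and each bad block forces a Hamming-violation detectable by the classical line tester run at block granularity. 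Choosing $\ell \approx \sqrt{m\epsilon/L}$ balances the $\sqrt{\#\text{blocks}} = \sqrt{m/\ell}$ cost of the line tester against... — I will optimize $\ell$ at the end; the target $\widetilde O(\sqrt{mL/\epsilon})$ tells me which way the trade-off must go. The remaining pieces (continuous reduction, one-sided error, success amplification to $2/3$) are standard and I would dispatch them briefly.
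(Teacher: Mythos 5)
Your high-level plan --- discretize the continuous line, convert $L^1$-farness to Hamming-farness via the Lipschitz constraint, then run the classical Hamming line tester with proximity parameter $\epsilon' \approx \sqrt{\epsilon/(mL)}$ --- is exactly the paper's approach, and the final arithmetic ($\widetilde O(1/\epsilon') = \widetilde O(\sqrt{mL/\epsilon})$) is right. But the quantitative step you correctly single out as the main obstacle is where the proposal remains hand-wavy, and the bucketing and block ideas you float are detours that do not obviously close the gap.

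The paper proves the needed lemma cleanly, with no range bucketing and no fixed-size blocks. For $L'$-Lipschitz $\overline f : [m'] \to \bR$, it shows $d_0(\overline f) \ge \sqrt{d_1(\overline f)/(m'L')}$ as follows: take a minimal set $S$ of size $d_0(\overline f)\,m'$ whose modification makes $\overline f$ monotone, write $S$ as a disjoint union of maximal intervals $I_1,\dots,I_k$, and define a monotone repair $\overline g$ by extending $\overline f$ from outside $S$ into each $I_j$ as a flat region. The Lipschitz bound then gives $|\overline f(i)-\overline g(i)| \le L'|I_j|$ for $i\in I_j$, hence $d_1(\overline f)\le \frac{L'}{m'}\sum_j|I_j|^2 \le \frac{L'}{m'}\bigl(\sum_j|I_j|\bigr)^2 = L'm'\,d_0(\overline f)^2$. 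Note the only inequality doing real work is the elementary $\sum_j|I_j|^2\le(\sum_j|I_j|)^2$; Cauchy--Schwarz, which you invoke, points in the wrong direction here. Your block argument as sketched has a concrete bug: counting only whether each length-$\ell$ block is ``bad'' lets a bad block contribute as little as one point to $d_0$, so the fraction you get is $\Omega(\epsilon/(L\ell))$ of \emph{blocks}, i.e.\ only $\Omega(\epsilon/(L\ell^2))$ of \emph{points}, which after optimizing $\ell$ is far weaker than $\sqrt{\epsilon/(mL)}$. The fix you gesture at --- ``run the line tester at block granularity'' --- is itself a nontrivial reduction (which representative value for a block? how does farness survive the coarsening?) and is really closer to the paper's separate discretization lemma for $[0,m]\to[m']$, not a substitute for the $d_0$-vs-$d_1$ comparison. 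The range bucketing, meanwhile, buys nothing: the classical line tester already handles arbitrary real-valued functions, so rounding the range is unnecessary. (One small slip: you first describe the Hamming line tester as $\widetilde O(\sqrt m)$ --- that is the Boolean-cube bound; the line tester is $O(\log(m)/\epsilon)$, which you later use correctly.)
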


This result (along with the near-tight lower bounds in \cref{sec:intro-lower-bounds}) is in contrast
with the case of Hamming testing functions $f : [m] \to \bR$, which has sample complexity
$\Theta(\log m)$ \cite{EKKRV98,Fis04,BRY14b,Bel18}. Intuitively, this difference arises because a
Lipschitz function may violate monotonicity with rate of change $L$, so the area under the curve may
grow quadratically on violating regions. The proof is in fact a reduction to the Hamming case, using
the Lipschitz assumption to establish a connection between the $L^1$ and Hamming distances to
monotonicity.

\subsubsection{Lower bounds}
\label{sec:intro-lower-bounds}

We give two types of lower bounds: under no assumptions about the tester and for constant $n$, we
show that the dependence of \cref{thm:main-tester} on $L/\epsilon$ is close to optimal\footnote{Note
that one may always multiply the input values by $1/L$ to reduce the problem to the case with
Lipschitz constant $1$ and proximity parameter $\epsilon/L$, so this is the right ratio to look
at.}. We give stronger bounds for the special case of partial derivative testers (such as the ones
from \cref{thm:main-tester}), essentially showing that our analysis of the partial derivative tester
is tight.

\begin{theorem}
    \label{thm:lower-bound-constant-n}
    Let $n$ be a constant. Any $L^1$ monotonicity tester (with two-sided error, and adaptive value
    and directional derivative queries) for Lipschitz functions $f : [0,1]^n \to \bR$ satisfying
    $\Lip_1(f) \le L$ requires at least $\Omega\left((L/\epsilon)^{\frac{n}{n+1}}\right)$ queries.

    Similarly, any $L^1$ monotonicity tester (with two-sided error and adaptive queries) for
    functions $f : [m]^n \to \bR$ satisfying $\Lip_1(f) \le L$ requires at least $\Omega\left(
    \min\left\{ (mL/\epsilon)^{\frac{n}{n+1}}, m^n \right\} \right)$ queries.
\end{theorem}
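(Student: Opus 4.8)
The plan is to construct, for each regime, a family of "hard" instances that are all far from monotone but locally look monotone except on a small region, so that a tester making few queries cannot distinguish them from a monotone function. I will use Yao's principle: exhibit a distribution $\cD_{\mathrm{no}}$ over Lipschitz functions that are $\epsilon$-far from monotone in $L^1$, and a distribution $\cD_{\mathrm{yes}}$ over monotone Lipschitz functions, such that any deterministic algorithm making $q$ queries (value or directional-derivative) has total variation distance $o(1)$ between its view under the two distributions unless $q$ is large. Since the theorem allows adaptive queries and both oracle types, the coupling argument must ensure that, with high probability over the hidden instance, \emph{every} query the algorithm makes returns exactly the answer it would get from a fixed monotone reference function; then adaptivity does not help.

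For the continuous case, fix $n$ constant and work on $[0,1]^n$. The idea is to hide a small "bump" of the form $b(x) = -c \cdot g((x - z)/r)$ glued onto the monotone baseline $h(x) = L(x_1 + \dots + x_n)/n$ (or simply $x \mapsto L x_1 / n$), where $g$ is a fixed smooth nonnegative bump supported on the unit cube, $z$ is a uniformly random center, and the radius $r$ and depth $c$ are chosen so that (i) the perturbed function is still $L$-Lipschitz in $\ell^1$, forcing $c \lesssim L r$; and (ii) the function is $\epsilon$-far from monotone in $L^1$. A localized anti-monotone bump of radius $r$ and depth $\asymp Lr$ contributes distance to monotonicity of order (depth) $\times$ (volume) $\asymp Lr \cdot r^n = L r^{n+1}$, so to get distance $\epsilon$ we need $r \asymp (\epsilon/L)^{1/(n+1)}$. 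There are $\asymp r^{-n}$ disjoint translates of such a bump, and a value or derivative query at a point $x$ reveals the bump only if $x$ lies in (a neighborhood of) its support, an event of probability $\asymp r^n$ over the random center; a union bound over $q$ queries shows that unless $q \gtrsim r^{-n} = (L/\epsilon)^{n/(n+1)}$, the algorithm sees the monotone baseline with probability $1 - o(1)$, and its acceptance probability is therefore essentially the same as on the monotone instance $\cD_{\mathrm{yes}}$ (which is just $h$ itself, or $h$ with the bump replaced by a monotone perturbation of the same "footprint" to defeat one-sided intuition — here two-sided error is allowed so we only need statistical indistinguishability of transcripts). The derivative oracle is handled the same way: outside an $O(r)$-neighborhood of the bump, $\grad f = \grad h$ exactly, so a derivative query there is uninformative.

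For the hypergrid case $[m]^n$, the construction is the discrete analogue: the baseline is $h(x) = (L/n)(x_1 + \dots + x_n)$ and the bump is a discretized anti-monotone perturbation on a subcube of side length $s$ (an integer), with depth $\asymp Ls$ to respect $\Lip_1(f) \le L$, contributing $L^1$-distance to monotonicity $\asymp Ls \cdot s^n / m^n$ relative to the uniform measure on $[m]^n$; wait — one must be careful whether distances are normalized by $m^n$ or not, but in either convention setting this equal to $\epsilon$ gives $s \asymp (m\epsilon/L)^{1/(n+1)}$, and the number of disjoint placements is $\asymp (m/s)^n$, so the query lower bound is $\asymp (m/s)^n \asymp (mL/\epsilon)^{n/(n+1)}$, capped at the trivial bound $m^n$ (the whole domain size) when the bump would exceed the grid. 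The $\min\{\cdot, m^n\}$ in the statement exactly records this saturation: if $(mL/\epsilon)^{n/(n+1)} \ge m^n$ then a single bump of size $s \asymp m$ already works and learning the whole grid costs $m^n$.

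The main obstacle, and where care is needed, is ruling out that adaptive queries combined with the \emph{derivative} oracle leak more information than the crude "hit the support" bound suggests — in particular, a derivative query near the boundary of the bump's support could in principle detect the bump from slightly outside it. This is handled by (a) using a smooth bump $g$ that is genuinely $0$ together with all derivatives outside its support, so that "near the support" means literally "inside the support," and (b) choosing the bump's linear size slightly smaller than the spacing between candidate centers, so the $q$ query points partition $[0,1]^n$ into $q$ cells and a random center lands in one of them with probability $\le q r^n / \Theta(1)$; as long as this is $o(1)$ the transcripts coincide. A secondary technical point is verifying the $\ell^1$-Lipschitz bound on $h + b$: since $\grad h$ has $\ell^\infty$-norm $L/n$ and hence $\ell^1$-norm $\le L$ in the relevant sense used by $\Lip_1$, we need $\|\grad b\|$ small enough that the combined gradient still has the right norm, which constrains the constant in $c \asymp Lr$; this is a routine choice of constants once the definition of $\Lip_1$ (Definition \ref{def:lipschitz}) is pinned down, and in the worst case one simply rescales $h$ by a constant factor, affecting only constants in the final bound.
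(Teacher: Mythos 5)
Your proposal follows essentially the same route as the paper: both construct a family of functions by randomly placing a small anti-monotone region of $\ell^1$-radius $r \asymp (\epsilon/L)^{1/(n+1)}$ (depth $\asymp Lr$, volume $\asymp r^n$, hence $d_1 \asymp Lr^{n+1}$), note there are $\asymp r^{-n}$ disjoint placements each of which a query can rule out only by landing inside it, and apply Yao's principle to get the $\Omega\bigl((L/\epsilon)^{n/(n+1)}\bigr)$ bound, with the identical discrete calculation yielding the $\min\{\cdot, m^n\}$ cap on the hypergrid. Two small points to tighten: the paper uses the constant-$0$ baseline with the $\ell^1$-cone $f(x) = -r + \|x-c\|_1$, so value and derivative queries are identically zero outside the ball with no smoothness fuss, and it actually proves $d_1(f) \gtrsim Lr^{n+1}$ via a line-restriction argument rather than asserting the ``depth times volume'' heuristic; also your hypergrid side length should be $s \asymp (m^n\epsilon/L)^{1/(n+1)}$, not $(m\epsilon/L)^{1/(n+1)}$, though your stated final bound is nonetheless correct.
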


Notice that the bounds above cannot be improved beyond logarithmic factors, due to the upper bounds
for the line in \cref{thm:tester-line}. It also follows that adaptivity (essentially) does not help
with $L^1$ monotonicity testing on the line, matching the situation for Hamming testing
\cite{Fis04,CS14,Bel18}.

\cref{thm:lower-bound-constant-n} is obtained via a ``hole'' construction, which hides a
non-monotone region of $f$ inside an $\ell^1$-ball $B$ of radius $r$. We choose $r$ such the
violations of monotonicity inside $B$ are large enough to make $f$ $\epsilon$-far from monotone, but
at the same time, the ball $B$ is hard to find using few queries. However, this construction has
poor dependence on $n$.

To lower bound the query complexity of partial derivative testers with better dependence on $n$, we
employ a simpler ``step'' construction, which essentially chooses a coordinate $i$ and hides a small
negative-slope region on every line along coordinate $i$. These functions are far from monotone, but
a partial derivative tester must correctly guess both $i$ and the negative-slope region to detect
them. We conclude that \cref{thm:main-tester} is optimal for partial derivative testers on the unit
cube, and optimal for edge testers on the hypergrid for constant $\epsilon$ and $L$:

\begin{theorem}
    \label{thm:lower-bound-partial-derivative-testers}
    Any partial derivative $L^1$ monotonicity tester for Lipschitz functions $f : [0,1]^n \to \bR$
    satisfying $\Lip_1(f) \le L$ (with two-sided error and adaptive queries) requires at least
    $\Omega(nL/\epsilon)$ queries.

    For sufficiently small constant $\epsilon$ and constant $L$, any partial derivative $L^1$
    monotonicity tester for functions $f : [m]^n \to \bR$ satisfying $\Lip_1(f) \le L$ (with
    two-sided error and adaptive queries) requires at least $\Omega(nm)$ queries.
\end{theorem}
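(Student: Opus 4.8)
The plan is to build explicit "hard" families of functions that are $\epsilon$-far from monotone in $L^1$, yet look monotone to any partial derivative tester unless it queries a specific coordinate at a specific location. I will handle the continuous case $f:[0,1]^n\to\bR$ first and then transfer the construction to the hypergrid. The heart of the argument is the \emph{step construction}: fix a "hidden" coordinate $i^\star\in[n]$ and a "hidden" interval $I\subseteq[0,1]$ of length $\ell$, chosen so that along every line in direction $e_{i^\star}$ the function decreases with slope $-L$ on the $t_{i^\star}\in I$ portion and is otherwise flat in that coordinate (and constant in all other coordinates, up to an additive term making it genuinely defined on the cube). Concretely one can take $f_{i^\star,I}(x) = g(x_{i^\star})$ where $g$ is $1$-dimensional, $1$-Lipschitz after scaling, has derivative $-L$ on $I$ and derivative $0$ outside a slightly larger window, normalized so that $g$ returns to its starting value (so the "dip" is local). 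Then a partial derivative query at point $x$ in axis direction $e_j$ reveals something nonzero only if $j=i^\star$ \emph{and} $x_{i^\star}\in I$; every other query returns $0$ (or a value identical to what a genuinely monotone function would return). A standard Yao-type / indistinguishability argument then shows that distinguishing $f_{i^\star,I}$ from the constant (monotone) function requires, in expectation, hitting the event $\{j=i^\star,\ x_{i^\star}\in I\}$, which has probability $\frac{1}{n}\cdot\ell$ per query; choosing $\ell = \Theta(\epsilon/L)$ to make $d_1^{\mono}(f_{i^\star,I})>\epsilon$ (using that the $L^1$ distance to monotonicity of a one-dimensional dip of slope $-L$ over length $\ell$ is $\Theta(L\ell^2)$, cf.\ the reduction behind \cref{thm:tester-line}) forces $\Omega(n/\ell) = \Omega(nL/\epsilon)$ queries.

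To make this rigorous against \emph{adaptive, two-sided} testers I would (i) fix the "yes" instance to be a single monotone function (e.g.\ identically zero, or a fixed monotone function with the right Lipschitz constant), and take the "no" distribution to be: pick $i^\star$ uniformly in $[n]$ and $I$ uniformly among $\Theta(1/\ell)$ disjoint candidate intervals; (ii) argue that on the "no" distribution, the transcript of any deterministic adaptive partial derivative tester making $q$ queries is, with probability $\ge 1 - q\cdot O(\ell/n)$ over the choice of $(i^\star,I)$, \emph{identical} to the transcript it would see on the fixed "yes" instance, because none of its queries touches the hidden dip; (iii) conclude by the usual averaging argument that if $q = o(nL/\epsilon)$ the tester cannot have success probability $\ge 2/3$ on both the "yes" and "no" cases. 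The measurability/differentiability subtleties (the oracle returns $\bot$ at non-differentiability points) are handled by smoothing the corners of $g$ so that $f_{i^\star,I}$ is everywhere differentiable, which changes the distances only negligibly and only helps the tester (so the lower bound is unaffected).

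For the hypergrid statement, I would discretize the same construction: on $[m]^n$ with constant $L$ and small constant $\epsilon$, the analogue of the dip along coordinate $i^\star$ must span a constant number of grid steps to create a $\Theta(1)$ $L^1$-violation when $L$ is constant, but the hidden interval $I$ now ranges over $\Theta(m)$ disjoint locations along the $[m]$ axis (one can afford $I$ of constant length but placed at one of $\Theta(m)$ offsets). An edge query $(x,e_j)$ again reveals the dip only when $j=i^\star$ and $x_{i^\star}$ lies in the $O(1)$-window of $I$, an event of probability $\Theta(1/(nm))$ per query, giving the $\Omega(nm)$ bound; alternatively one hides the dip so that it is detected only by a single specific edge out of the $\Theta(nm^n/m) = \Theta(nm^{n-1})$ axis edges per... — more simply, the same transcript-coupling argument as above with hit probability $\Theta(1/(nm))$ per query yields $q = \Omega(nm)$.

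The main obstacle I anticipate is \textbf{controlling adaptivity together with the $L^1$ distance-to-monotonicity lower bound}: I must verify that the hidden dip, which is one-dimensional in nature, really is $\epsilon$-far in the full $n$-dimensional $L^1$ metric (it is, since the cost $\Theta(L\ell^2)$ is incurred on a full-measure slab of lines, independent of $n$), \emph{and} that a clever adaptive tester cannot amplify the per-query hit probability beyond $O(\ell/n)$ by using the structure of earlier answers — but since every answer before a "hit" is information-theoretically identical to the "yes" instance, adaptivity provably buys nothing, which is exactly the point that closes the argument. A secondary technical point is matching the constants so that the construction simultaneously satisfies $\Lip_1(f)\le L$ and $d_1^{\mono}(f)>\epsilon$; this is a one-dimensional calibration of $\ell$ against $L$ and $\epsilon$ and poses no real difficulty.
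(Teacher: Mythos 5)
Your overall strategy — a hidden coordinate $i^\star$ and a hidden slope region, fool a partial derivative tester by making queries away from the region indistinguishable from a monotone function, and close with Yao/transcript coupling — is exactly the paper's approach, and that part of the proposal is fine. The gap is in the construction itself: you use a \emph{dip} (a function that decreases with slope $-L$ over an interval $I$ and then ``returns to its starting value''), whereas the paper uses a \emph{step} (the function drops from a constant value to $0$ and stays there). This difference is not cosmetic, and your own numbers reveal the problem: you correctly state that a dip of width $\ell$ and slope $L$ has $d^\mono_1 = \Theta(L\ell^2)$, but then set $\ell = \Theta(\epsilon/L)$, which gives $d^\mono_1 = \Theta(\epsilon^2/L) \ll \epsilon$; those functions are not $\epsilon$-far. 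If you instead fix the dip by taking $\ell = \Theta(\sqrt{\epsilon/L})$, the number of disjoint candidate intervals drops to $\Theta(\sqrt{L/\epsilon})$ and the bound you get is only $\Omega\bigl(n\sqrt{L/\epsilon}\bigr)$, quadratically short of the target.

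The paper's step avoids this because its distance to monotonicity scales \emph{linearly} in the drop height. Concretely, the paper takes $g_z$ equal to $\epsilon$ for $x<z$, linearly decreasing on $[z,z+\epsilon]$, and $0$ for $x>z+\epsilon$, with $z\in[1/3,2/3-\epsilon]$. Then for every $x\in[0,1/3]$ and any monotone $h$, $|g_z(x)-h(x)|+|g_z(2/3+x)-h(2/3+x)|\ge g_z(x)-g_z(2/3+x)=\epsilon$, so $d^\mono_1(g_z)\gtrsim\epsilon$ --- the drop height (not its squared width) controls the distance, because the two constant-measure slabs before and after the transition are locked to incompatible values. This lets the transition region have width $\Theta(\epsilon/L)$, yielding $\Theta(L/\epsilon)$ disjoint positions and the correct $\Omega(nL/\epsilon)$. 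Since the tester only sees partial derivatives (which are $0$ for both the all-zero function and $g_z$ except on the width-$\epsilon/L$ transition slab in direction $i^\star$), the indistinguishability you invoke still holds for the step; the ``returns to its starting value'' normalization is unnecessary and is precisely what kills your bound. The same issue carries over to your hypergrid argument: a dip of constant width on $[m]$ has $d^\mono_1=\Theta(1/m)$, not $\Theta(1)$, so it is not $\epsilon$-far for constant $\epsilon$; the paper instead uses the hard step $g_z(x)=\ind{x<z}$, which is $\Theta(1)$-far with $\Lip_1=1$ and admits $\Theta(m)$ positions for $z$, giving $\Omega(nm)$.
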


\cref{tab:testers} summarizes our upper and lower bounds for testing monotonicity on the unit cube
and hypergrid, along with the analogous Hamming testing results for intuition and bounds for $L^1$
testing from prior works. See
\cref{sec:comparison-with-prior-lp,sec:comparison,sec:bry14b-lower-bound} for a discussion and
details of how prior works imply the results in that table, since to our knowledge the problem of
$L^1$ monotonicity testing parameterized by the Lipschitz constant has not been explicitly studied
before. See also \cref{sec:related-work} for a broader overview of prior works on a spectrum of
monotonicity testing models.

\begin{table}[t!]
    \centering
    \begin{NiceTabular}{c || c || c | c}[cell-space-limits=0.3em]
        \textbf{Domain}
            & \Block{}{\textbf{Hamming testing} \\ $f : \Omega \to \bR$}
            & \Block{}{\textbf{$L^1$-testing} (prior works) \\
                $f : \Omega \to \bR$, $\Lip_1(f) \le L$}
            & \Block{}{\textbf{$L^1$-testing} (this work) \\
                $f : \Omega \to \bR$, $\Lip_1(f) \le L$} \\
                \hline \hline
        \Block{2-1}{$\Omega = [0,1]^n$}
            & \Block{2-1}{Infeasible}
            & $\widetilde O\left( \frac{n^2 L}{\epsilon} \right)$ (*) \cite{BRY14a}
            & $O\left( \frac{nL}{\epsilon} \right)$ p.d.t.
            \\ \cline{3-4}
            &
            & ---
            & \Block{}{
                \vspace{0.5em}
                $\Omega\left( \left(\frac{L}{\epsilon}\right)^{\frac{n}{n+1}} \right)$ const. $n$ \\
                $\Omega\left( \frac{nL}{\epsilon} \right)$ p.d.t.
            }
            \\ \hline \hline
        \Block{2-1}{$\Omega = [m]^n$}
            & \Block{}{
                $O\left( \frac{n\log m}{\epsilon} \right)$ \cite{CS13} \\
            }
            & \Block{}{
                $\widetilde O\left( \frac{n^2 mL}{\epsilon} \right)$ (*) \cite{BRY14a} \\
            }
            & \Block{}{
                $O\left( \frac{nmL}{\epsilon} \right)$ p.d.t. \\
            } \\ \cline{2-4}
            & \Block{}{
                $\Omega\left( \frac{n\log(m) - \log(1/\epsilon)}{\epsilon} \right)$
                    \cite{CS14} \\
            }
            & \Block{}{
                \vspace{0.5em}
                $\widetilde \Omega\left( \frac{L}{\epsilon} \right)$ n.a. 1-s. \cite{BRY14a} \\
                $\Omega(n\log m)$ n.a. \cite{BRY14b}
            }
            & \Block{}{
                \vspace{0.5em}
                $\Omega\left( \left(\frac{mL}{\epsilon}\right)^{\frac{n}{n+1}} \right)$ const. $n$
                \\
                $\Omega(nm)$ p.d.t.
            } \\
    \end{NiceTabular}
    \caption{Query complexity bounds for testing monotonicity on the unit cube and hypergrid. Upper
    bounds are for nonadaptive (n.a.) algorithms with one-sided error (1-s.), and lower bounds are
    for adaptive algorithms with two-sided error, unless stated otherwise. For $L^1$-testing, the
    upper bounds derived from prior works (*) are specialized to the Lipschitz case by us; see the
    text for details. Our lower bounds hold either for constant (const.) $n$, or for partial
    derivative testers (p.d.t.).}
    \label{tab:testers}
\end{table}

\subsection{Discussion and open questions}
\label{sec:discussion}

\subsubsection{Stronger directed Poincaré inequalities?}

Classical Poincaré inequalities are usually of the $\ell^2$ form, which seems natural \eg due to
basis independence. On the other hand, in the directed setting, the weaker $\ell^1$ inequalities (as
in \cite{GGLRS00} and
\cref{thm:main-directed-inequality-continuous,thm:main-directed-inequality-discrete}) have more
straightforward proofs than $\ell^2$ counterparts such as \cite{KMS18}. A perhaps related
observation is that monotonicity is \emph{not} a basis-independent concept, since it is defined in
terms of the standard basis. It is not obvious whether directed $\ell^2$ inequalities ought to hold
in every (real-valued, continuous) setting. Nevertheless, in light of the parallels and context
established thus far, we are hopeful that such an equality does hold. Otherwise, we believe that the
reason should be illuminating. For now, we conjecture:

\begin{conjecture}
    \label{conjecture:better-inequality}
    For every Lipschitz function $f : [0,1]^n \to \bR$, it holds that
    \[
        d^\mono_1(f) \lesssim \Ex{\|\grad^- f\|_2} \,.
    \]
\end{conjecture}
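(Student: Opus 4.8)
The plan is to reduce \cref{conjecture:better-inequality} to a directed isoperimetric inequality for \emph{sets} in the solid cube---a continuous analogue of the (robust) directed Talagrand inequality of \cite{KMS18}---by combining a layer-cake decomposition with a directed coarea formula, and then to attack that set-level statement either directly or by discretizing to the hypergrid.

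\emph{Reduction to sets.} We may assume $f \ge 0$, since shifting $f$ by a constant changes neither $\grad^- f$ nor $d^\mono_1(f)$; then $f(x) = \int_0^\infty \ind{f(x) > t}\, \deriv t$. For $t \ge 0$ let $E_t \define \{x : f(x) > t\}$ and let $M_t$ be the up-set with $(\id_{E_t})^* = \id_{M_t}$. Since $E_t$ shrinks in $t$ and the operators $R_1, \dotsc, R_n$ are order-preserving, the $M_t$ shrink in $t$, so $g(x) \define \int_0^\infty \id_{M_t}(x)\, \deriv t = \sup\{t \ge 0 : x \in M_t\}$ is a bounded \emph{monotone} function (each $M_t$ is an up-set). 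By the triangle inequality for integrals and \cref{thm:main-directed-inequality-continuous} applied to each Boolean $\id_{E_t}$,
\[
    d^\mono_1(f) \;\le\; \Ex{\abs{f - g}} \;\le\; \int_0^\infty \Ex{\abs{\id_{E_t} - \id_{M_t}}}\, \deriv t \;\lesssim\; \int_0^\infty d^\mono_1(\id_{E_t})\, \deriv t \,.
\]
Hence it would suffice to bound $\int_0^\infty d^\mono_1(\id_{E_t})\, \deriv t$ by $\Ex{\|\grad^- f\|_2}$.

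\emph{Directed coarea formula.} For Lipschitz $f$, by Rademacher's theorem and the coarea formula, for a.e.\ $t$ the set $E_t$ has rectifiable boundary $\partial^\ast E_t$ with outer unit normal $\nu = \nu_{E_t}$, agreeing $\mathcal H^{n-1}$-a.e.\ with the level set $f^{-1}(t)$; there $\grad f$ points into $E_t$, so (writing $v^+$ for the coordinatewise positive part) $(\grad^- f)_i = -\|\grad f\|_2\, \nu_i^+$ and thus $\|\grad^- f\|_2 = \|\grad f\|_2 \cdot \|\nu^+\|_2$. Defining the \emph{directed perimeter} $\mathrm{DirPer}(E) \define \int_{\partial^\ast E} \|\nu_E^+\|_2\, \deriv\mathcal H^{n-1}$, the coarea formula yields $\Ex{\|\grad^- f\|_2} = \int_{[0,1]^n} \|\grad f\|_2\,\|\nu^+\|_2\, \deriv x = \int_0^\infty \mathrm{DirPer}(E_t)\, \deriv t$. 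Combining with the previous display, \cref{conjecture:better-inequality} reduces to the set-level inequality
\[
    d^\mono_1(\id_E) \;\lesssim\; \mathrm{DirPer}(E) \qquad \text{for measurable } E \subseteq [0,1]^n \text{ of finite perimeter,}
\]
i.e.\ the measure of the symmetric difference between $E$ and its closest up-set is controlled by the $\ell^2$-mass of the ``upward-facing'' part of $\partial E$. This is the continuous form of directed Talagrand: in \cite{KMS18} the governing quantity $\sum_x \sqrt{\Gamma^-(x)}$, with $\Gamma^-(x)$ the number of coordinates along which $x$ is the top of a violated edge, is exactly the Boolean-cube analogue of $\int_{\partial^\ast E} \sqrt{\sum_{i : \nu_i > 0} \nu_i^2}\, \deriv\mathcal H^{n-1}$.

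\emph{Proving the set inequality---the main obstacle.} There are two natural routes. One is to transpose the semigroup / random-walk argument of \cite{KMS18}, including its robustness (``projection'') step, to indicator functions on $[0,1]^n$, where continuity may streamline the combinatorics. The other is to discretize: approximate $E$ by $E_m \subseteq [m]^n$, verify that $m^{-n}\abs{E_m \triangle M_{E_m}} \to \mu(E \triangle M_E)$ while the suitably normalized hypergrid directed-Talagrand functional of $E_m$ converges to $\mathrm{DirPer}(E)$, and invoke a directed $(L^1,\ell^2)$-Poincaré inequality on the hypergrid---e.g.\ the natural $\ell^2$ strengthening $d^\mono_1(h) \lesssim m\,\Ex{\|\grad^- h\|_2}$ of \cref{thm:main-directed-inequality-discrete}. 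The obstacle is that neither ingredient is currently known: a continuous, or a correctly-$m$-scaled hypergrid, directed $\ell^2$-isoperimetric inequality would both be new, and even the discrete analogue for real-valued functions on the Boolean cube is open (the ``?'' in \cref{table:inequalities}). We also stress that the one-coordinate-at-a-time argument behind \cref{thm:main-directed-inequality-continuous} cannot suffice: bounding $d^\mono_1(f) \le \sum_i \Ex{\abs{\partial^-_i f}}$ and then applying Cauchy--Schwarz loses a factor $\sqrt n$ against $\Ex{\|\grad^- f\|_2}$, so any proof must---as in \cite{KMS18}---charge a single non-monotone region to many coordinates at once.
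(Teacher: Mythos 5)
Note first that the statement you were asked to ``prove'' is labeled a \emph{conjecture} in the paper; there is no in-paper proof to compare against. Your write-up is accordingly not a proof but a reduction, and you are right and commendably explicit about that.

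The reduction itself is the natural one and is essentially sound. Shifting so that $f \ge 0$, the layer-cake identity $f = \int_0^\infty \id_{E_t}\,\deriv t$ with $E_t = \{f > t\}$ combines with the rearrangement $R_n \dotsm R_1$ (which sends indicators to indicators of up-sets, since each $R_i$ is order-preserving---the Lipschitz hypothesis in \cref{prop:rearrangement-is-monotone} is not really used there) to give a monotone $g = \int_0^\infty \id_{M_t}\,\deriv t$, hence $d^\mono_1(f) \le \int_0^\infty \Ex{\abs{\id_{E_t} - \id_{M_t}}}\,\deriv t \lesssim \int_0^\infty d^\mono_1(\id_{E_t})\,\deriv t$; in the last step you need the \emph{proof}, not merely the statement, of \cref{prop:rearrangement-is-optimal}, since $\id_{E_t}$ is not Lipschitz, but that proof rests only on \cref{cor:rearrangement-non-expansive}. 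Pairing this with the coarea computation $\Ex{\|\grad^- f\|_2} = \int_0^\infty \int_{\partial^\ast E_t} \|\nu^+\|_2\,\deriv\mathcal H^{n-1}\,\deriv t$ correctly identifies the set-level target and its Boolean-cube analogue in \cite{KMS18}, and your closing observation---that the coordinate-at-a-time argument behind \cref{thm:main-directed-inequality-continuous} structurally cannot beat $\ell^1$, so a genuinely global argument à la \cite{KMS18} is needed---is the right diagnosis. But the set-level inequality $d^\mono_1(\id_E) \lesssim \int_{\partial^\ast E} \|\nu_E^+\|_2\,\deriv\mathcal H^{n-1}$ is exactly as open as the conjecture itself; you have reformulated the problem, not solved it, which is the same place the paper leaves it.
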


Accordingly, we also ask whether an $L^1$ tester with $O(\sqrt{n})$ complexity exists, presumably
with a dependence on the $\Lip_2(f)$ constant rather than $\Lip_1(f)$ since $\ell^2$ is the relevant
geometry above.

\subsubsection{Query complexity bounds}
Our lower bounds either have weak dependence on $n$, or only apply to a specific family of
algorithms (partial derivative testers). Previous works have established tester-independent lower
bounds with strong dependence on $n$ by using reductions from communication complexity
\cite{BBM12,BRY14b}, whose translation to the continuous setting is not obvious\footnote{Note that
there is no obvious reduction from testing on the hypergrid to testing on the unit cube---one idea
is to simulate the unit cube tester on a multilinear interpolation of the function defined on the
hypergrid, but the challenge is that simulating each query to the unit cube naively requires an
exponential number of queries to the hypergrid.}, by reduction to comparison-based testers
\cite{CS14}, whose connection to $L^1$ testing setting seems less immediate, or directly via a
careful construction \cite{Bel18}. We believe that finding strong tester-independent lower bounds
for $L^1$ testing Lipschitz functions on the unit cube is an interesting direction for further
study.

We also remark that even a tight lower bound matching \cref{thm:main-tester} may not rule out
testers with better dependence on $n$ if, for example, such a tester were parameterized by
$\Lip_2(f)$, which can be a factor of $\sqrt{n}$ larger than $\Lip_1(f)$. We view the possibility of
better testers on the unit cube, or otherwise a conceptual separation with \cite{KMS18}, as an
exciting direction for future work.

\subsubsection{Relation to prior work on $L^p$-testing}
\label{sec:comparison-with-prior-lp}

\cite{BRY14a} initiated the systematic study of $L^p$-testing and, most relevant to the present
work, established the first (and, to our knowledge, only) results on $L^p$ testing of the
monotonicity property, on the hypergrid and on the discrete line. While our models are broadly
compatible, a subtle but crucial distinction must be explained.

\cite{BRY14a} focused their exposition on the case of functions $f : \Omega \to [0,1]$, and in this
regime, $L^1$ testing can only be easier than Hamming testing, which they show via a reduction based
on Boolean threshold functions. On the other hand, for functions with other ranges, say $f : \Omega
\to [a, b]$, their definition normalizes the notion of distance by a factor of $\frac{1}{b-a}$. In
our terminology, letting $r \define b-a$ and $g \define f/r$, it follows that $d_1(g) = d_1(f)/r$,
so testing $f$ with proximity parameter $\epsilon$ reduces to testing $g$ with proximity parameter
$\epsilon/r$. For Hamming testers with query complexity that depends linearly on $1/\epsilon$, this
amounts to paying a factor of $r$ in the reduction to the Boolean case\footnote{This factor can also
be tracked explicitly in the characterization of the $L^1$ distance to monotonicity of
\cite{BRY14a}: it arises in Lemmas 2.1 and 2.2, where an integral from $0$ to $1$ must be changed to
an integral from $a$ to $b$, so the best threshold function is only guaranteed to be
$\epsilon/r$-far from monotone.}. This loss is indeed necessary, because by the same reasoning,
testing $g$ with proximity parameter $\epsilon$ reduces to testing $f$ with proximity parameter $r
\epsilon$. Therefore the problems of testing $f$ with proximity parameter $\epsilon$ and testing
$f/r$ with proximity parameter $\epsilon/r$ have the same query complexity.

In this work, we do not normalize the distance metric by $r$; we would like to handle functions $f$
that may take large values as the dimension $n$ grows, as long as $f$ satisfies a Lipschitz
assumption, and our goal is to beat the query complexity afforded by the reduction to the Boolean
case. We derive these benchmarks by assuming that the input $f$ is Lipschitz, and inferring an upper
bound on $r$ based on the Lipschitz constant and the size of the domain. Combined with the hypergrid
tester of \cite{BRY14a} and a discretization argument for the unit cube inspired by
\cite{BCS20,HY22}, we establish benchmarks for our testing problem. See \cref{sec:comparison} for
details.


With the discussion above in mind, it is instructive to return to \cref{tab:testers}. We note that
our upper bounds have polynomially smaller dependence on $n$ than the benchmarks, suggesting that
our use of the Lipschitz assumption---via the directed Poincaré inequalities in
\cref{thm:main-directed-inequality-continuous,thm:main-directed-inequality-discrete}---exploits
useful structure underlying the monotonicity testing problem (whereas the benchmark testers must
work for every function with bounded range, not only the Lipschitz ones). Our lower bounds introduce
an almost-linear dependence on the hypergrid length $m$; intuitively, this dependence is not implied
by the previous bounds in \cite{BRY14a,BRY14b} because those construct the violations of
monotonicity via Boolean functions, whereas our constructions exploit the fact that a Lipschitz
function can ``keep growing'' along a given direction, which exacerbates the $L^1$ distance to
monotonicity in the region where that happens. Our lower bounds for partial derivative testers show
that the analysis of our algorithms is essentially tight, so new (upper or lower bound) ideas are
required to establish the optimal query complexity for arbitrary testers.

\paragraph*{On the choice of $L^1$ distance and Lipschitz assumption.} We briefly motivate our
choice of distance metric and Lipschitz assumption. For continuous range and domain, well-known
counterexamples rule out testing with respect to Hamming distance: given any tester with finite
query complexity, a monotone function may be made far from monotone by arbitrarily small, hard to
detect perturbations. Testing against $L^1$ distance is then a natural choice, since this metric
takes into account the magnitude of the change required to make a function monotone (\cite{BRY14a}
also discuss connections with learning and approximation theory). However, an arbitrarily small
region of the input may still have disproportionate effect on the $L^1$ distance if the function is
arbitrary, so again testing is infeasible. Lipschitz continuity seems like a natural enough
assumption which, combined with the choice of $L^1$ distance, makes the problem tractable. Another
benefit is that Lipschitz functions are differentiable almost everywhere by Rademacher's theorem, so
the gradient is well-defined almost everywhere, which enables the connection with Poincaré-type
inequalities.

\paragraph*{Organization.} \cref{sec:prelim} introduces definitions and conventions that will be
used throughout the paper. In \cref{sec:inequalities} we prove our directed Poincaré inequalities on
the unit cube and hypergrid, and in \cref{sec:testers} we give our $L^1$ monotonicity testers for
these domains. \cref{sec:line} gives the upper bound for testing functions on the line, and in
\cref{sec:lower-bounds} we prove our lower bounds. Finally, in \cref{sec:related-work} we give a
broader overview of prior works on monotonicity testing for the reader's convenience.

\section{Preliminaries}
\label{sec:prelim}

In this paper, $\bN$ denotes the set of strictly positive integers $\{1, 2, \dotsc\}$. For $m \in
\bN$, we write $[m]$ to denote the set $\{i \in \bN : i \le m\}$. For any $c \in \bR$, we write
$c^+$ for $\max\{0, c\}$ and $c^-$ for $-\min\{0, c\}$. We denote the closure of an open set $B
\subset \bR^n$ by $\overline B$.

For a (continuous or discrete) measure space $(\Omega, \Sigma, \nu)$ and measurable function $f :
\Omega \to \bR$, we write $\int_\Omega f \odif \nu$ for the Lebesgue integral of $f$ over this
space. Then for $p \ge 1$, the space $L_p(\Omega, \nu)$ is the set of measurable functions $f$ such
that $\abs{f}^p$ is Lebesgue integrable, \ie $\int_\Omega \abs{f}^p \odif \nu < \infty$, and we
write the $L^p$ norm of such functions as $\lp{f}{p} = \lpdist{f}{p}{\nu} = \left(\int_\Omega
\abs{f}^p \odif \nu\right)^{1/p}$. We will write $\nu$ to denote the Lebesgue measure when $\Omega
\subset \bR^n$ is a continuous domain (in which case we will simply write $L^p(\Omega)$ for
$L^p(\Omega, \nu)$) and the counting measure when $\Omega \subset \bZ^n$ is a
discrete domain, and reserve $\mu$ for the special case of probability measures.

\subsection{Lipschitz functions and $L^p$ distance}

We first define Lipschitz functions with respect to a choice of $\ell^p$ geometry.

\begin{definition}
    \label{def:lipschitz}
    Let $p \ge 1$. We say that $f : \Omega \to \bR$ is \emph{$(\ell^p, L)$-Lipschitz} if,
    for every $x,y \in \Omega$, $\abs{f(x)-f(y)} \le L \|x-y\|_p$. We say that $f$ is Lipschitz if
    it is $(\ell^p, L)$-Lipschitz for any $L$ (in which case this also holds for any other choice of
    $\ell^q$), and in this case we denote by $\Lip_p(f)$ the best possible Lipschitz constant:
    \[
        \Lip_p(f) \define \inf_L \left\{ \text{$f$ is $(\ell^p, L)$-Lipschitz} \right\} \,.
    \]
    It follows that $\Lip_p(f) \le \Lip_q(f)$ for $p \le q$.
\end{definition}

We now formally define $L^p$ distances, completing the definition of $L^p$-testers from
\cref{sec:lp-testing}.

\begin{definition}[$L^p$-distance]
    \label{def:lp-distance}

    Let $p \ge 1$, let $R \subseteq \bR$, and let $(\Omega, \Sigma, \mu)$ be a probability space.
    For a property $\cP \subseteq L^p(\Omega,\mu)$ of functions $g : \Omega \to R$ and function $f :
    \Omega \to R \in L^p(\Omega,\mu)$, we define the distance from $f$ to $\cP$ as $d_p(f, \cP)
    \define \inf_{g \in \cP} d_p(f,g)$, where
    \[
        d_p(f,g) \define \lpdist{f-g}{p}{\mu}
        = \Exu{x \sim \mu}{\abs{f(x)-g(x)}^p}^{1/p} \,.
    \]
    For $p=0$, we slightly abuse notation and, taking $0^0 = 0$, write $d_0(f,g)$ for the Hamming
    distance between $f$ and $g$ weighted by $\mu$ (and $\cP$ may be any set of measurable functions
    on $(\Omega,\Sigma,\mu)$).
\end{definition}

In our applications, we will always take $\mu$ to be the uniform distribution over
$\Omega$\footnote{More precisely: when $\Omega = [0,1]^n$, $\mu$ will be the Lebesgue measure on
$\Omega$ (with associated $\sigma$-algebra $\Sigma$), and when $\Omega = [m]^n$, $\mu$ will be the
uniform distribution over $\Omega$ (with the power set of $\Omega$ as the $\sigma$-algebra
$\Sigma$).}. As a shorthand, when $(\Omega,\Sigma,\mu)$ is understood from the context and $R =
\bR$, we will write
\begin{enumerate}
    \item $d^\const_p(f) \define d_p(f, \cP^\const)$ where
        $\cP^\const \define \{ f : \Omega \to \bR \in L^p(\Omega,\mu) : f = c, c \in \bR \}$; and
    \item $d^\mono_p(f) \define d_p(f, \cP^\mono)$ where
        $\cP^\mono \define \{ f : \Omega \to \bR \in L^p(\Omega,\mu) : f \text{ is monotone} \}$.
\end{enumerate}

Going forward, we will also use the shorthand $d_p(f) \define d^\mono_p(f)$.

\subsection{Directed partial derivatives and gradients}

We first consider functions on continuous domains. Let $B$ be an open subset of $\bR^n$, and let $f
: B \to \bR$ be Lipschitz. Then by Rademacher's theorem $f$ is differentiable almost everywhere in
$B$. For each $x \in B$ where $f$ is differentiable, let $\grad f(x) = (\partial_1 f(x), \dotsc,
\partial_n f(x))$ denote its gradient, where $\partial_i f(x)$ is the partial derivative of $f$
along the $i$-th coordinate at $x$. Then, let $\partial^-_i \define \min\{0, \partial_i\}$, \ie for
every $x$ where $f$ is differentiable we have $\partial^-_i f(x) = -\left( \partial_i f(x)
\right)^-$. We call $\partial^-_i$ the \emph{directed partial derivative} operator in direction $i$.
Then we define the \emph{directed gradient} operator by $\grad^- \define (\partial^-_1, \dotsc,
\partial^-_n)$, again defined on every point $x$ where $f$ is differentiable.

Now considering the hypergrid domains, let $f : [m]^n \to \bR$. Fix $x \in [m]^n$ and $i \in [n]$,
and write $e_i$ for the $i$-th basis vector, \ie $e_i$ takes value $1$ in its $i$-th component and
$0$ elsewhere. We then define the (discrete) partial derivative of $f$ along the $i$-th coordinate
at $x$ by $\partial_i f(x) \define f(x + e_i) - f(x)$ if $x_i < m$, and $\partial_i f(x) \define 0$
if $x_i = m$. We then define its discrete gradient by $\grad \define (\partial_1, \dotsc,
\partial_n)$. Their directed counterparts are defined as above: $\partial^-_i \define \min\{0,
\partial_i\}$ and $\grad^- \define (\partial^-_1, \dotsc, \partial^-_n)$.

Note that this definition for the discrete discrete gradient on the hypergrid is slightly different
from how we introduced the discrete gradient on the Boolean cube in the opening (\cf inequality
\eqref{eq:poincare-on-hypercube}) and its use in \cref{table:inequalities}, where we allowed each
edge $(x, y)$ to ``contribute'' to both $\partial_i f(x)$ and $\partial_i f(y)$. In contrast, the
definition above (which we will use going forward) only allows the ``contribution'' to $\partial_i
f(x)$, since on domain $[m]^n$ with $m=2$, the point $y$ falls under the case $y_i = m$, so
$\partial_i f(y) \define 0$. The definition we choose seems more natural for the hypergrid settings,
but we also remark that for $\ell^1$ inequalities, the choice does not matter up to constant factors
(\ie each edge is counted once or twice). For $\ell^2$ inequalities, this choice is related to the
issues of inner/outer boundaries and robust inequalities \cite{Tal93,KMS18}.

\section{Directed Poincaré inequalities for Lipschitz functions}
\label{sec:inequalities}

In this section, we establish
\cref{thm:main-directed-inequality-continuous,thm:main-directed-inequality-discrete}. We start with
the one-dimensional case, \ie functions on the line, and then generalize to higher dimensions. In
each subsection, we will focus our presentation on the setting where the domain is continuous
(corresponding to our results for the unit cube), and then show how the same proof strategy (more
easily) yields analogous results for discrete domains (corresponding to our results for the
hypergrid).

\subsection{One-dimensional case}
Let $m > 0$, let $I \define (0,m)$, and let $f : \overline I \to \bR$ be a measurable function. We
wish to show that $\lp{f-f^*}{1} \lesssim m \lp{\partial^- f}{1}$, where $f^*$ is the monotone
rearrangement of $f$. We first introduce the monotone rearrangement, and then show this inequality
using an elementary calculus argument.

\subsubsection{Monotone rearrangement}
Here, we introduce the (non-symmetric, non-decreasing) monotone rearrangement of a one-dimensional
function. We follow the definition of \cite{Kaw85}, with the slight modification that we
are interested in the \emph{non-decreasing} rearrangement, whereas most of the literature usually
favours the non-increasing rearrangement. The difference is purely syntactic, and our choice more
conveniently matches the convention in the monotonicity testing literature. Up to this choice, our
definition also agrees with that of \cite[Chapter 2]{BS88}, and we refer the reader to these two
texts for a comprehensive treatment.

We define the (lower) \emph{level sets} of $f : \overline I \to \bR$ as the sets
\[
    \overline I_c \define \left\{ x \in \overline I : f(x) \le c \right\}
\]
for all $c \in \bR$. For nonempty measurable $S \subset \bR$ of finite measure,
the \emph{rearrangement} of $S$ is the set
\[
    S^* \define \left[ 0, \nu(S) \right]
\]
(recall that $\nu$ stands for the Lebesgue measure here),
and we define $\emptyset^* \define \emptyset$. For a level set $\overline I_c$, we write
$\overline I_c^*$ to mean $\left(\overline I_c\right)^*$.

\begin{definition}
    \label{def:monotone-rearrangement}
    The \emph{monotone rearrangement} of $f$ is the function $f^* : \overline I \to \bR$ given by
    \begin{equation}
        \label{eq:rearrangement}
        f^*(x) \define \inf\left\{ c \in \bR : x \in \overline I_c^* \right\} \,.
    \end{equation}
\end{definition}

Note that $f^*$ is always a non-decreasing function.

We note two well-known properties of the monotone rearrangement:
equimeasurability and order preservation. Two functions $f, g$ are called \emph{equimeasurable} if
$\nu\{f \ge c\} = \nu\{g \ge c\}$ for every $c \in \bR$. A mapping $u \mapsto u^*$ is called
\emph{order preserving} if $f(x) \le g(x)$ for all $x \in \overline I$ implies $f^*(x) \le g^*(x)$
for all $x \in \overline I$. See \cite[Chapter 2, Proposition 1.7]{BS88} for a proof of the
following:

\begin{fact}
    \label{prop:rearrangement-equimeasurable}
    Let $f : \overline I \to \bR$ be a measurable function. Then $f$ and $f^*$ are equimeasurable.
\end{fact}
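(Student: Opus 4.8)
The plan is to prove the stronger fact that the lower level sets of $f^*$ coincide, up to a $\nu$-null set, with the \emph{rearranged} lower level sets of $f$; equimeasurability then drops out by taking measures. Concretely, I will identify $\{x \in \overline I : f^*(x) \le c\}$ with $\overline I_c^*$ for every $c$, and recall that $\nu(\overline I_c^*) = \nu(\overline I_c)$ by construction of the set rearrangement.

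First I introduce the \emph{distribution function} $\lambda : \bR \to [0,m]$, $\lambda(c) \define \nu(\overline I_c) = \nu\{x \in \overline I : f(x) \le c\}$. Because the sets $\overline I_c$ are increasing in $c$, with $\bigcap_{c' > c}\overline I_{c'} = \overline I_c$, and $\nu(\overline I) = m < \infty$, continuity from above of $\nu$ shows that $\lambda$ is non-decreasing and right-continuous. By the definition of the set rearrangement, $\overline I_c^* = [0,\lambda(c)]$ whenever $\overline I_c \neq \emptyset$ (and $\overline I_c^* = \emptyset$ otherwise — an edge case that will matter only at the endpoint $x = 0$).

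The crux is to compute the lower level sets of $f^*$. Unwinding \eqref{eq:rearrangement}, for $x \in I = (0,m)$ one has $x \in \overline I_{c'}^* \iff x \le \lambda(c')$ (valid even when $\overline I_{c'} = \emptyset$, since then $\lambda(c') = 0 < x$), so $f^*(x) = \inf\{c' \in \bR : x \le \lambda(c')\}$, and I claim $f^*(x) \le c \iff x \le \lambda(c)$. The implication ``$\Leftarrow$'' is immediate: $x \le \lambda(c)$ puts $x \in \overline I_c^*$, so $c$ lies in the set over which \eqref{eq:rearrangement} takes its infimum. For ``$\Rightarrow$'', if $f^*(x) \le c$ then for every $\epsilon > 0$ there is $c' < c + \epsilon$ with $x \le \lambda(c') \le \lambda(c+\epsilon)$ by monotonicity, and letting $\epsilon \downarrow 0$ and invoking right-continuity of $\lambda$ gives $x \le \lambda(c)$. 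Hence $\{f^* \le c\}$ equals $[0,\lambda(c)]$ up to a $\nu$-null set, so $\nu\{f^* \le c\} = \lambda(c) = \nu\{f \le c\}$ for all $c \in \bR$. Finally I convert to the form in the statement: $\nu\{f < c\} = \lim_{c' \uparrow c}\nu\{f \le c'\} = \lim_{c' \uparrow c}\nu\{f^* \le c'\} = \nu\{f^* < c\}$, whence $\nu\{f \ge c\} = m - \nu\{f < c\} = m - \nu\{f^* < c\} = \nu\{f^* \ge c\}$, which is equimeasurability.

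I expect the only real care needed is bookkeeping the degenerate cases rather than any genuine difficulty: level sets $\overline I_c$ that are empty or of measure zero make the set identity $\{f^* \le c\} = \overline I_c^*$ fail at the endpoints $x \in \{0,m\}$ (but never on a positive-measure set), and one should note that $f^*$ is finite $\nu$-almost everywhere since $\lambda(c) \to m$ as $c \to +\infty$ and $\lambda(c) \to 0$ as $c \to -\infty$. Since none of this affects the measures $\nu\{f^* \le c\}$, the conclusion is unaffected.
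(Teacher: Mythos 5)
Your proof is correct. The paper does not prove this fact itself but cites it from \cite[Chapter 2, Proposition 1.7]{BS88}; the argument you give---introduce the distribution function $\lambda(c) = \nu(\overline I_c)$, use right-continuity of $\lambda$ (from continuity from above of $\nu$ on the finite-measure interval) to show $\{f^* \le c\}$ agrees with $[0,\lambda(c)]$ up to a $\nu$-null set, then pass from $\le$-level sets to $\ge$-level sets by a monotone limit---is precisely the standard distribution-function proof in that reference, adapted to the non-decreasing convention, and your handling of the degenerate endpoint cases is sound.
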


\begin{fact}
    \label{prop:rearrangement-order-preserving}
    The mapping $f \mapsto f^*$ is order preserving.
\end{fact}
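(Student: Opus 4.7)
The plan is to unfold the definition of the rearrangement and observe that everything reduces to a nested containment of level sets. The key point is that lower level sets are monotone in the function in the reverse direction: if $f(x) \le g(x)$ for every $x \in \overline I$, then
\[
    \{x \in \overline I : g(x) \le c\} \subseteq \{x \in \overline I : f(x) \le c\}
\]
for every $c \in \bR$, since $g(x) \le c$ forces $f(x) \le g(x) \le c$. Writing $\overline I_c^f$ and $\overline I_c^g$ for the level sets of $f$ and $g$ respectively, this gives $\nu(\overline I_c^g) \le \nu(\overline I_c^f)$, and therefore
\[
    (\overline I_c^g)^* = [0, \nu(\overline I_c^g)] \subseteq [0, \nu(\overline I_c^f)] = (\overline I_c^f)^*,
\]
with the convention $\emptyset^* = \emptyset$ handling the case where some level set is empty.

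From this containment, order preservation is immediate from the definition \eqref{eq:rearrangement}. Fix any $x \in \overline I$; then
\[
    \{c \in \bR : x \in (\overline I_c^g)^*\} \subseteq \{c \in \bR : x \in (\overline I_c^f)^*\},
\]
and taking infima (which reverses containment) yields
\[
    f^*(x) = \inf\{c : x \in (\overline I_c^f)^*\} \le \inf\{c : x \in (\overline I_c^g)^*\} = g^*(x),
\]
which is exactly the claim.

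I do not anticipate any serious obstacle: the argument is essentially a one-line chase through the definition. The only mild care required is in the degenerate cases where a level set is empty, has full measure, or the witnessing set of $c$'s is empty; but these are all absorbed by the conventions $\emptyset^* = \emptyset$ and $\inf \emptyset = +\infty$ without disrupting the containment chain above. Since the statement appears only to be used downstream in comparing $f$ to its truncations or to other candidate monotone approximants, this level of proof should suffice.
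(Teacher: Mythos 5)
Your proof is correct. The paper does not give its own argument for this fact---it cites \cite[Chapter 2, Proposition 1.7]{BS88}---so there is no internal proof to compare against. Your direct derivation from \cref{def:monotone-rearrangement} is the standard one: pointwise domination $f \le g$ reverses inclusion of lower level sets, measures are monotone under inclusion so $\nu(\overline I_c^g) \le \nu(\overline I_c^f)$, the set-rearrangement $S \mapsto [0,\nu(S)]$ preserves this ordering (all nonempty rearranged sets are intervals anchored at $0$), hence the witnessing sets of thresholds $\{c : x \in \overline I_c^*\}$ are nested, and passing to infima flips the inclusion into the inequality $f^*(x) \le g^*(x)$. Your handling of the degenerate cases via $\emptyset^* = \emptyset$ and $\inf\emptyset = +\infty$ is sound and in fact automatic, since no step in the containment chain relies on nonemptiness.
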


\subsubsection{Absolutely continuous functions and the one-dimensional Poincaré inequality}
Let $f : \overline I \to \bR$ be absolutely continuous. It follows that $f$ has a derivative
$\partial f$ almost everywhere (\ie outside a set of measure zero),
$\partial f \in L^1(I)$ (\ie its derivative is Lebesgue integrable), and
\[
    f(x) = f(0) + \int_0^x \partial f(t) \odif t
\]
for all $x \in \overline I$.
It also follows that $\partial^- f \in L^1(I)$.

We may now show our one-dimensional inequality:

\begin{lemma}
    \label{lemma:directed-poincare-1d}
    Let $f : \overline I \to \bR$ be absolutely continuous. Then
    $\lp{f - f^*}{1} \le 2m \lp{\partial^- f}{1}$.
\end{lemma}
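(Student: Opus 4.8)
The plan is to bound $\lp{f-f^*}{1}$ by comparing $f$ to the specific monotone function obtained by ``running up the minimum'' of $f$, and then to relate that comparison to the integral of $\partial^- f$. Concretely, I would first reduce to controlling $\lp{f - g}{1}$ for a well-chosen monotone $g$, and second I would show $\lp{f-f^*}{1} \le 2 \lp{f-g}{1}$ by a triangle-inequality argument that exploits equimeasurability and order preservation of the rearrangement (Facts~\ref{prop:rearrangement-equimeasurable} and~\ref{prop:rearrangement-order-preserving}). For the first step, a natural candidate is the running infimum $g(x) \define \inf_{t \le x} f(x)$ — wait, rather I want a monotone \emph{majorant-from-below} style object; more precisely define $g(x) \define \sup_{t \le x} f(t)$? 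Let me instead use the ``negative variation'' viewpoint: since $f$ is absolutely continuous, write $f(x) = f(0) + \int_0^x (\partial f)^+ - \int_0^x (\partial f)^-$, and set $P(x) \define \int_0^x (\partial f(t))^+ \odif t$, which is non-decreasing, and $N(x) \define \int_0^x (\partial f(t))^- \odif t$, also non-decreasing, so $f = f(0) + P - N$. Then $h \define f(0) + P$ is monotone (non-decreasing), and $\lp{f - h}{1} = \lp{N}{1} = \int_0^m N(x)\odif x \le \int_0^m \left(\int_0^m (\partial f(t))^- \odif t\right)\odif x = m\,\lp{\partial^- f}{1}$, using $N(x) \le N(m) = \lp{\partial^- f}{1}$ for all $x$.

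The second step is where the actual work lies: I need $\lp{f - f^*}{1} \le 2\,\lp{f - h}{1}$ whenever $h$ is monotone non-decreasing and equimeasurable considerations apply. The key inequality is that the monotone rearrangement is the $L^1$-closest \emph{equimeasurable} monotone function, but more useful here is the contraction-type property: for any monotone non-decreasing $h$, $\lp{f^* - h^*}{1} \le \lp{f - h}{1}$ — actually since $h$ is already monotone, $h^* = h$ up to the fact that $h^*$ re-sorts $h$ into its unique non-decreasing equimeasurable form, which \emph{is} $h$ itself when $h$ is already non-decreasing (both are non-decreasing and equimeasurable, hence equal a.e.). So if the rearrangement map is an $L^1$-contraction, i.e. $\lp{f^* - g^*}{1} \le \lp{f-g}{1}$ for all $f,g$, then taking $g = h$ gives $\lp{f^* - h}{1} \le \lp{f - h}{1}$, and the triangle inequality yields $\lp{f - f^*}{1} \le \lp{f - h}{1} + \lp{h - f^*}{1} \le 2\,\lp{f-h}{1} \le 2m\,\lp{\partial^- f}{1}$, as desired.

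Hence the main obstacle is establishing (or citing) the $L^1$-contraction property of the monotone rearrangement: $\lp{f^* - g^*}{1} \le \lp{f - g}{1}$. I would prove this via the layer-cake / co-area representation, writing $\lp{f-g}{1} = \int_{\bR} \nu\big(\{f \le c < g\} \cup \{g \le c < f\}\big)\odif c$ (symmetric difference of level sets), and similarly for $f^*, g^*$; since level sets of $f^*$ are intervals $[0,\nu(\overline I_c)]$, the measure of the symmetric difference of $\overline I_c^{f^*}$ and $\overline I_c^{g^*}$ is exactly $\big|\nu(\overline I_c^f) - \nu(\overline I_c^g)\big|$, which is at most $\nu(\overline I_c^f \,\triangle\, \overline I_c^g)$, and integrating over $c$ gives the claim. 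This is a standard fact (it appears in \cite[Chapter 2]{BS88}), so in the writeup I would either cite it or include the two-line layer-cake argument. The only care needed is measurability/integrability bookkeeping, which is routine given that $f$ is absolutely continuous and $\partial^- f \in L^1(I)$.
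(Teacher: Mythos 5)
Your proof is correct, but it takes a genuinely different route from the paper's. The paper argues directly: after using equimeasurability to reduce to bounding $\int_S (f^*-f)\,\odif\nu$ over the set $S=\{f^*>f\}$, it shows by a continuity/extreme-value argument that for each $x$ there is $x'\in[0,x]$ with $f(x') \ge f^*(x)$, whence $f^*(x) - f(x) \le f(x')-f(x) = -\int_{x'}^x \partial f \,\odif t \le \lp{\partial^- f}{1}$; integrating over $S$ and using $\nu(S)\le m$ gives the factor $2m$. You instead decompose $f = f(0)+P-N$ via the positive and negative parts of $\partial f$ (the Jordan decomposition afforded by absolute continuity), note that $h \define f(0)+P$ is monotone with $\lp{f-h}{1}=\lp{N}{1}\le m\lp{\partial^- f}{1}$, and then pass to $\lp{f-f^*}{1}$ via $h^*=h$, the $L^1$-nonexpansiveness of the rearrangement, and the triangle inequality. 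Your route is more modular: it cleanly factors the lemma into the two assertions ``$d_1(f)\le m\lp{\partial^- f}{1}$'' and ``$\lp{f-f^*}{1}\le 2\,d_1(f)$'', and the latter is exactly \cref{prop:rearrangement-is-optimal}, which the paper proves later from the same contraction fact (\cref{proposition:rearrangement-inequalities-equivalence}, citing \cite{CT80}). The trade-off is that you invoke nonexpansiveness as a black box at this early point in the exposition, whereas the paper's argument is self-contained here; both give the constant $2m$, and both adapt to the discrete line and, via line restrictions, to higher dimensions.
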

\begin{proof}
    Let $S \define \left\{ x \in \overline I : f^*(x) > f(x) \right\}$,
    and note that $S$ is a measurable set because $f, f^*$ are measurable functions
    (the latter by \cref{prop:rearrangement-equimeasurable}).
    Moreover, since $f$ and $f^*$ are equimeasurable (by the same result),
    we have $\int f \odif \nu = \int f^* \odif \nu$ and therefore
    \begin{align*}
        \lp{f-f^*}{1}
        &= \int_I \abs*{f-f^*} \odif \nu
        = \int_S (f^*-f) \odif \nu + \int_{I \setminus S} (f-f^*) \odif \nu \\
        &= \int_S (f^*-f) \odif \nu +
            \left( \int_I (f-f^*) \odif \nu - \int_S (f-f^*) \odif \nu \right)
        = 2 \int_S (f^*-f) \odif \nu \,.
    \end{align*}
    Hence our goal is to show that
    \[
        \int_S (f^*-f) \odif \nu \le m \lp{\partial^- f}{1} \,.
    \]
    Let $x \in \overline I$. We claim that there exists $x' \in [0,x]$ such that $f(x') \ge f^*(x)$.
    Suppose this is not the case. Then since $f$ is continuous on $[0,x]$, by the extreme value
    theorem it attains its maximum and therefore there exists $c < f^*(x)$ such that
    $f(y) \le c$ for all $y \in [0,x]$. Thus $[0,x] \subseteq \overline I_c$, so
    $\nu\left(\overline I_c\right) \ge x$ and hence $x \in \overline I_c^*$. Then, by
    \cref{def:monotone-rearrangement}, $f^*(x) \le c < f^*(x)$, a contradiction. Thus the claim is
    proved.

    Now, let $x \in S$ and fix some $x' \in [0,x]$ such that $f(x') \ge f^*(x)$. Since $f$ is
    absolutely continuous, we have
    \[
        f^*(x) - f(x)
        \le f(x') - f(x)
        = -\int_{x'}^x \partial f(t) \odif t
        \le -\int_0^m \partial^- f(t) \odif t
        = \lp{\partial^- f}{1} \,.
    \]
    The result follows by applying this estimate to all $x$:
    \[
        \int_S (f^*-f) \odif \nu
        \le \int_S \lp{\partial^- f}{1} \odif \nu
        = \nu(S) \lp{\partial^- f}{1}
        \le m \lp{\partial^- f}{1} \,. \qedhere
    \]
\end{proof}

\subsubsection{Discrete case}
Let $m \in \bN$ and let $I \define [m]$. We may define the monotone rearrangement $f^* : I \to \bR$
of $f : I \to \bR$ as in \cref{def:monotone-rearrangement} by identifying $\overline I$ with $I$ and
writing $S^* \define [\abs{S}]$ for each finite $S \subset \bN$. More directly, $f^*$ is the
function such that $f^*(1) \le f^*(2) \le \dotsm \le f^*(m)$ is the \emph{sorted sequence} of the
values $f(1), f(2), \dotsc, f(m)$. It is easy to show that the directed version of
\cref{lemma:directed-poincare-1d} holds, and in fact one may simply repeat the proof of that lemma.

\begin{lemma}
    \label{lemma:discrete-directed-poincare-1d}
    Let $f : [m] \to \bR$. Then $\lp{f-f^*}{1} \le 2m \lp{\partial^- f}{1}$.
\end{lemma}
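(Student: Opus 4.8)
The plan is to mirror the continuous argument from \cref{lemma:directed-poincare-1d} verbatim, replacing Lebesgue integrals by sums and the extreme value theorem by the trivial fact that a finite set of reals attains its maximum. First I would observe that the counting-measure version of equimeasurability still holds: the sorted sequence $f^*(1) \le \dotsm \le f^*(m)$ is a permutation of $f(1), \dotsc, f(m)$, hence $\sum_{x \in [m]} f(x) = \sum_{x \in [m]} f^*(x)$. As in the proof of \cref{lemma:directed-poincare-1d}, setting $S \define \{x \in [m] : f^*(x) > f(x)\}$ and using this identity gives $\lp{f-f^*}{1} = \sum_{x \in [m]} |f(x) - f^*(x)| = 2 \sum_{x \in S} (f^*(x) - f(x))$, so it suffices to bound $\sum_{x \in S} (f^*(x) - f(x)) \le m \lp{\partial^- f}{1}$.

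Next I would establish the discrete analogue of the key claim: for every $x \in [m]$ there exists $x' \in [x] = \{1, \dotsc, x\}$ with $f(x') \ge f^*(x)$. This follows directly from the definition of the sorted sequence: $f^*(x)$ is the $x$-th smallest value among $f(1), \dotsc, f(m)$, so at most $x-1$ of the original values are strictly below $f^*(x)$; equivalently, among any $x$ indices at least one has value $\ge f^*(x)$, and in particular this holds for the prefix $\{1, \dotsc, x\}$. (Alternatively, one repeats the level-set argument: if $f(y) < f^*(x)$ for all $y \le x$, then taking $c \define \max_{y \le x} f(y) < f^*(x)$ gives $[x] \subseteq \overline I_c$, so $|\overline I_c| \ge x$, hence $x \in \overline I_c^* = [\,|\overline I_c|\,]$, contradicting $f^*(x) = \inf\{c : x \in \overline I_c^*\} \le c$.) Then for $x \in S$, fixing such an $x' \le x$, the discrete fundamental-theorem-of-calculus identity $f(x') - f(x) = -\sum_{t=x'}^{x-1} \partial_t f(t) \le -\sum_{t=1}^{m-1} \partial^-_t f(t) = \lp{\partial^- f}{1}$ yields $f^*(x) - f(x) \le f(x') - f(x) \le \lp{\partial^- f}{1}$. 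Summing over $x \in S$ gives $\sum_{x \in S}(f^*(x) - f(x)) \le |S| \cdot \lp{\partial^- f}{1} \le m \lp{\partial^- f}{1}$, and combining with the first step completes the proof.

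There is essentially no obstacle here; the only points requiring minor care are bookkeeping ones: confirming that the telescoping sum $\sum_{t=x'}^{x-1}(f(t+1)-f(t)) = f(x) - f(x')$ matches the convention $\partial_i f(x) = f(x+e_i) - f(x)$ with $\partial_m f \define 0$, and that dropping the positive terms $\partial^+_t f$ and extending the summation range to all of $[m-1]$ only increases $-\sum \partial^-_t f(t)$, which equals $\sum_t |\partial^-_t f(t)| = \lp{\partial^- f}{1}$ since $\partial^- f \le 0$. Since the excerpt itself remarks that ``one may simply repeat the proof of that lemma,'' I would keep the write-up terse, pointing to \cref{lemma:directed-poincare-1d} for the structure and only spelling out the two substitutions (sorting for rearrangement, finite maximum for the extreme value theorem).
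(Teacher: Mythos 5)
Your proof is correct and takes exactly the approach the paper intends: the paper simply remarks that ``one may simply repeat the proof'' of \cref{lemma:directed-poincare-1d}, and your write-up is precisely that repetition with Lebesgue integrals replaced by counting-measure sums and the extreme value theorem replaced by finiteness. Your direct argument for the key claim (that $f^*(x)$ is the $x$-th order statistic, so some index in $\{1,\dotsc,x\}$ attains a value $\ge f^*(x)$) is a slight simplification of the level-set argument and is also valid; the rest of the bookkeeping (telescoping to $f(x')-f(x)$, dropping positive parts, extending the summation range, and recalling $\partial f(m)\define 0$) all checks out.
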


\subsection{Multidimensional case}

In the continuous case, we ultimately only require an inequality on the unit cube $[0,1]^n$.
However, we will first work in slightly more generality and consider functions defined on a
\emph{box} in $\bR^n$, defined below. This approach makes some of the steps more transparent, and
also gives intuition for the discrete case of the hypergrid.

\begin{definition}
    Let $a \in \bR^n_{> 0}$. The \emph{box of size $a$} is the closure $\overline B \subset \bR^n$
    of $B = (0, a_1) \times \dotsm \times (0, a_n)$.
\end{definition}

\noindent
Going forward, $\overline B \subset \bR^n$ will always denote such a box.

\paragraph*{Notation.} For $x \in \bR^n$, $y \in \bR$ and $i \in [n]$, we will use the notation
$x^{-i}$ to denote the vector in $\bR^{[n] \setminus \{i\}}$ obtained by removing the $i$-th
coordinate from $x$ (note that the indexing is not changed),
and we will write $(x^{-i}, y)$ as a shorthand for the vector
$(x_1, \dotsc, x_{i-1}, y, x_{i+1}, \dotsc, x_n) \in \bR^n$. We will also write $x^{-i}$ directly
to denote any vector in $\bR^{[n] \setminus \{i\}}$.
For function $f : \overline B \to \bR$ and $x^{-i} \in \bR^{[n] \setminus \{i\}}$,
we will write $f_{x^{-i}}$ for
the function given by $f_{x^{-i}}(y) = f(x^{-i}, y)$ for all $(x^{-i}, y) \in \overline B$.
For any set $D \in \bR^n$, we will denote by $D^{-i}$ the projection
$\{x^{-i} : x \in D\}$, and extend this notation in the natural way to more indices, \eg
$D^{-i-j}$.

\begin{definition}[Rearrangement in direction $i$]
    \label{def:rearrangement-in-direction-i}
    Let $f : \overline B \to \bR$ be a measurable function and let $i \in [n]$.
    The \emph{rearrangement of $f$ in direction $i$} is the function $R_i f : \overline B \to \bR$
    given by
    \begin{equation}
        \label{eq:rearrangement-in-direction}
        (R_i f)_{x^{-i}} \define \left( f_{x^{-i}} \right)^*
    \end{equation}
    for all $x^{-i} \in \left(\overline B\right)^{-i}$.
    We call each $R_i$ the \emph{rearrangement operator in direction $i$}.
\end{definition}

We may put \eqref{eq:rearrangement-in-direction} in words as follows: on each line in
direction $i$ determined by point $x^{-i}$, the restriction of $R_i f$ to that line is the
monotone rearrangement of the restriction of $f$ to that line.

\begin{proposition}
    \label{prop:distance-to-rearrangement-in-direction-i}
    Let $\overline B$ be the box of size $a \in \bR^n$, and let $f : \overline B \to \bR$ be Lipschitz
    continuous. Then for each $i \in [n]$,
    \[
        \lp{f - R_i f}{1} \le 2 a_i \lp{\partial_i^- f}{1} \,.
    \]
\end{proposition}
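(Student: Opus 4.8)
The plan is to reduce the multidimensional statement to the one-dimensional Lemma \ref{lemma:directed-poincare-1d} by slicing the box along direction $i$ and applying Fubini's theorem. First I would observe that $f$ being Lipschitz on $\overline B$ implies that for (Lebesgue-almost) every $x^{-i} \in (\overline B)^{-i}$, the restriction $f_{x^{-i}} : \overline I_{x^{-i}} \to \bR$ (where $\overline I_{x^{-i}} = [0, a_i]$ is the line in direction $i$) is absolutely continuous — indeed it is Lipschitz with the same constant, hence absolutely continuous, so Lemma \ref{lemma:directed-poincare-1d} applies on each such slice with $m = a_i$. Moreover $(R_i f)_{x^{-i}} = (f_{x^{-i}})^*$ by Definition \ref{def:rearrangement-in-direction-i}, so on each slice we get
\[
    \int_0^{a_i} \abs*{f_{x^{-i}}(y) - (R_i f)_{x^{-i}}(y)} \odif y
    \le 2 a_i \int_0^{a_i} \abs*{\partial^-_i f(x^{-i}, y)} \odif y \,,
\]
since the derivative of $f_{x^{-i}}$ in its single variable is exactly the $i$-th partial derivative $\partial_i f$ evaluated along the slice.

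Next I would integrate this inequality over $x^{-i} \in (\overline B)^{-i}$ and apply Fubini/Tonelli (valid since all integrands are nonnegative and, by the Lipschitz assumption, everything is integrable on the bounded box) to obtain
\[
    \lp{f - R_i f}{1}
    = \int_{(\overline B)^{-i}} \int_0^{a_i} \abs*{f - R_i f} \odif y \odif x^{-i}
    \le 2 a_i \int_{(\overline B)^{-i}} \int_0^{a_i} \abs*{\partial^-_i f} \odif y \odif x^{-i}
    = 2 a_i \lp{\partial^-_i f}{1} \,,
\]
which is exactly the claimed bound.

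The main technical point to be careful about — though I expect it to be routine rather than a genuine obstacle — is the measurability bookkeeping: one must check that $(x^{-i}, y) \mapsto (R_i f)(x^{-i}, y)$ is jointly measurable on $\overline B$ (so that the double integral makes sense and Fubini applies), and that the slicewise derivative $\partial_i f_{x^{-i}}(y)$ agrees almost everywhere with the partial derivative $\partial_i f(x^{-i}, y)$ furnished by Rademacher's theorem. The first follows from the fact that the rearrangement is built from level sets (\cref{def:monotone-rearrangement}) and the joint measurability of $f$; the second is a standard consequence of Fubini's theorem applied to the set where $f$ fails to be differentiable (a null set in $\bR^n$, hence its slices are null for a.e.\ $x^{-i}$). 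A minor subtlety is that Lemma \ref{lemma:directed-poincare-1d} is stated for absolutely continuous functions on an open interval closure $\overline I = [0,m]$, so I would note explicitly that Lipschitz-on-a-slice implies absolutely-continuous-on-that-slice before invoking it. With these checks in place, the Fubini argument delivers the proposition immediately.
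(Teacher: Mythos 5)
Your proposal is correct and takes essentially the same approach as the paper: slice the box along coordinate $i$, observe that each restriction $f_{x^{-i}}$ is Lipschitz hence absolutely continuous so that \cref{lemma:directed-poincare-1d} applies with $m=a_i$, and integrate over $x^{-i}$ via Tonelli's theorem. The additional measurability bookkeeping you flag is routine and is left implicit in the paper's two-sentence proof, but your instinct to verify it is sound.
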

\begin{proof}
    Since $f$ is Lipschitz continuous, each $f_{x^{-i}} : [0,a_i] \to \bR$ is Lipschitz continuous
    and \emph{a fortiori} absolutely continuous. The result follows from
    \cref{lemma:directed-poincare-1d}, using Tonelli's theorem to choose the order of integration.
\end{proof}

A key ingredient in our multi-dimensional argument is that the rearrangement operator preserves
Lipschitz continuity:

\begin{lemma}[{\cite[Lemma 2.12]{Kaw85}}]
    \label{lemma:lipschitz-preserved}
    If $f : \overline B \to \bR$ is Lipschitz continuous (with Lipschitz constant $L$), then
    $R_i f$ is Lipschitz continuous (with Lipschitz constant $2L$).
\end{lemma}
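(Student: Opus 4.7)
The plan is to reduce the multidimensional claim to two one-dimensional properties of the monotone rearrangement $g \mapsto g^*$ on a bounded interval: \textbf{(a) Lipschitz preservation,} namely if $g$ is $L$-Lipschitz then so is $g^*$; and \textbf{(b) $L^\infty$-contraction,} namely $\|g^* - h^*\|_\infty \le \|g - h\|_\infty$ for any measurable $g, h$. Granting these, I would prove the full statement via the triangle inequality: for $p, q \in \overline B$,
\[
    R_i f(p) - R_i f(q) = \bigl[R_i f(p^{-i}, p_i) - R_i f(q^{-i}, p_i)\bigr]
        + \bigl[R_i f(q^{-i}, p_i) - R_i f(q^{-i}, q_i)\bigr].
\]
The second bracket is the increment of the one-dimensional rearrangement $(f_{q^{-i}})^*$ between heights $p_i$ and $q_i$. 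Since $f_{q^{-i}}$ is $L$-Lipschitz (being a restriction of $f$), property (a) bounds it by $L |p_i - q_i|$. The first bracket equals $(f_{p^{-i}})^*(p_i) - (f_{q^{-i}})^*(p_i)$, and since $\|f_{p^{-i}} - f_{q^{-i}}\|_\infty \le L \|p^{-i} - q^{-i}\|$ by Lipschitz continuity of $f$, property (b) bounds it by $L \|p^{-i} - q^{-i}\|$. Adding and using $|p_i - q_i| + \|p^{-i} - q^{-i}\| \le 2 \|p - q\|$ (valid in every $\ell^p$ geometry on $\bR^n$) yields the desired $2L$-Lipschitz bound on $R_i f$.

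Fact (b) should follow quickly from the translation equivariance $(g + c)^* = g^* + c$ (immediate from the level-set definition, since the level sets of $g + c$ are shifts of those of $g$) combined with order preservation (\cref{prop:rearrangement-order-preserving}): setting $c = \|g - h\|_\infty$, the sandwich $h - c \le g \le h + c$ implies $h^* - c \le g^* \le h^* + c$, and therefore $\|g^* - h^*\|_\infty \le c$.

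Fact (a) is the main technical obstacle, because it requires converting the global information provided by equimeasurability into a pointwise slope estimate on $g^*$. The plan is a direct level-set argument. Fix $0 \le x < y \le a_i$ and set $s \define g^*(x)$, $t \define g^*(y)$; by monotonicity of $g^*$ and equimeasurability (\cref{prop:rearrangement-equimeasurable}), $\nu\{g \le s\} \ge x$ and $\nu\{g \ge t\} \ge a_i - y$, so $\nu\{s < g < t\} \le y - x$. On the other hand, continuity of $g$ and compactness of the domain yield points $u, v$ with $g(u) = s$, $g(v) = t$, and $|u - v|$ as small as possible; by this minimality and the intermediate value theorem, $g$ must take values strictly between $s$ and $t$ throughout the open interval between $u$ and $v$, and then Lipschitz continuity forces $|u - v| \ge (t - s)/L$. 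Combining the two inequalities gives $t - s \le L(y - x)$. Boundary and degeneracy issues (e.g., $g$ not attaining $s$ or $t$ exactly when $x = 0$ or $y = a_i$) can be handled by perturbing to $s + \delta$ and $t - \delta$ and sending $\delta \to 0$.
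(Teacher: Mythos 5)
The paper does not prove this lemma itself; it cites it directly from Kawohl~\cite{Kaw85}. Your proposal therefore supplies a self-contained argument, and I evaluate it on its own terms: it is correct.

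Your decomposition $R_i f(p) - R_i f(q) = \bigl[(f_{p^{-i}})^*(p_i) - (f_{q^{-i}})^*(p_i)\bigr] + \bigl[(f_{q^{-i}})^*(p_i) - (f_{q^{-i}})^*(q_i)\bigr]$ is sound, and both one-dimensional ingredients check out. Fact (b) follows exactly as you say from translation equivariance $(g+c)^* = g^* + c$ together with order preservation (\cref{prop:rearrangement-order-preserving}). Fact (a) is also correct: equimeasurability (\cref{prop:rearrangement-equimeasurable}) gives $\nu\{g \le s\} \ge x$ and $\nu\{g \ge t\} \ge a_i - y$, hence $\nu\{s < g < t\} \le y - x$, while the minimal-length-interval argument plus the intermediate value theorem forces $\nu\{s < g < t\} \ge |u-v| \ge (t-s)/L$. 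One minor remark: the perturbation you mention at the end is not actually needed. Since $g = f_{q^{-i}}$ is Lipschitz on the compact interval $[0,a_i]$, its range is exactly $[\min g, \max g]$, and the definition of the rearrangement gives $g^*(0) = \min g$ and $g^*(a_i) = \max g$, so every value of $g^*$---in particular $s$ and $t$---is attained by $g$; the closed set of pairs $\{(u,v) : g(u)=s,\, g(v)=t\}$ is then nonempty and compact, so a minimizer of $|u-v|$ exists outright. Finally, the estimate $|p_i - q_i| + \|p^{-i} - q^{-i}\|_p \le 2\|p - q\|_p$ does hold in every $\ell^p$ geometry (with equality-free slack: one even gets constant $L$ in $\ell^1$ and $\sqrt{2}L$ in $\ell^2$), matching the stated $2L$. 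Altogether this is a clean, complete proof of a fact the paper treats as a black box, which is a genuine addition.
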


We are now ready to define the (multidimensional) monotone rearrangement $f^*$:

\begin{definition}
    \label{def:monotone-rearrangement-continuous}
    Let $f : \overline B \to \bR$ be a measurable function. The \emph{monotone rearrangement}
    of $f$ is the function
    \[
        f^* \define R_n R_{n-1} \dotsm R_1 f \,.
    \]
\end{definition}

We first show that $f^*$ is indeed a monotone function:

\begin{proposition}
    \label{prop:rearrangement-is-monotone}
    Let $f : \overline B \to \bR$ be Lipschitz continuous. Then $f^*$ is monotone.
\end{proposition}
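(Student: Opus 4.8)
**Proof proposal for Proposition (f\* is monotone).**

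The plan is to show, by reverse induction on $k = n, n-1, \dotsc, 1$, that the partially-rearranged function $g_k \define R_n R_{n-1} \dotsm R_k f$ is monotone in coordinates $k, k+1, \dotsc, n$. The base case $k = n$ is immediate: $g_n = R_n f$ is monotone in direction $n$ by construction, since each one-dimensional restriction $(R_n f)_{x^{-n}} = (f_{x^{-n}})^*$ is non-decreasing by the remark following \cref{def:monotone-rearrangement}. The target statement is the case $k=1$, which gives monotonicity in all $n$ coordinates, i.e. $f^* = g_1$ is monotone. Note that throughout we may freely invoke \cref{lemma:lipschitz-preserved} to guarantee that each $g_k$ is Lipschitz continuous, hence the subsequent rearrangement operators are well-defined and the one-dimensional facts apply on every line.

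For the inductive step, suppose $g_{k+1}$ is monotone in directions $k+1, \dotsc, n$, and consider $g_k = R_k g_{k+1}$. Applying $R_k$ rearranges $g_{k+1}$ along each line in direction $k$, so $g_k$ is automatically monotone in direction $k$; the content is to show that $R_k$ \emph{preserves} monotonicity in each of the directions $j \in \{k+1, \dotsc, n\}$. Fix such a $j$ and fix two points $x \preceq y$ differing only in coordinate $j$, say $x_j < y_j$. I want $g_k(x) \le g_k(y)$. The key observation is that monotonicity of $g_{k+1}$ in direction $j$ means that, comparing the two lines in direction $k$ through $x^{-k}$ and through $y^{-k}$ respectively, we have a pointwise inequality between the one-dimensional restrictions: $(g_{k+1})_{x^{-k}}(t) \le (g_{k+1})_{y^{-k}}(t)$ for every $t$ (since the two lines differ only by increasing coordinate $j$ from $x_j$ to $y_j$, and $g_{k+1}$ is non-decreasing in direction $j$). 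By \cref{prop:rearrangement-order-preserving}, the one-dimensional monotone rearrangement is order preserving, hence $\big((g_{k+1})_{x^{-k}}\big)^*(t) \le \big((g_{k+1})_{y^{-k}}\big)^*(t)$ for every $t$; evaluating at $t = x_k = y_k$ gives exactly $g_k(x) \le g_k(y)$, as desired.

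I expect the main (minor) obstacle to be bookkeeping rather than conceptual: one must be careful that comparing the two direction-$k$ restrictions of $g_{k+1}$ uses only monotonicity in direction $j$ and not in direction $k$, and that the order-preservation hypothesis is applied to restrictions defined on the \emph{same} interval $[0, a_k]$ so that \cref{prop:rearrangement-order-preserving} applies verbatim. A second point of care is domains: for the argument to make sense the relevant points must lie in $\overline B$, which holds because $\overline B$ is an axis-aligned box and hence its restrictions to lines and its sections are intervals. Once these are handled, the induction closes and taking $k=1$ yields that $f^*$ is monotone in every coordinate, completing the proof.
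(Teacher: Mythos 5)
Your key technical step is the same as the paper's: fixing two points differing only in coordinate $j$, observing that monotonicity in direction $j$ gives a pointwise inequality between the corresponding one-dimensional restrictions, and then invoking \cref{prop:rearrangement-order-preserving} to pass that inequality through the rearrangement. That part is correct and is precisely the lemma the paper isolates (``if $f$ is monotone in direction $j$, then $R_i f$ is monotone in direction $j$'').

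However, the inductive scaffolding has a genuine bookkeeping error. The paper defines $f^* = R_n R_{n-1} \dotsm R_1 f$, so under standard composition notation $R_1$ is applied \emph{first}, i.e.\ $f^* = R_n(R_{n-1}(\dotsm R_1(f)\dotsm))$. With your definition $g_k \define R_n R_{n-1} \dotsm R_k f$, the relation you use, $g_k = R_k g_{k+1}$, is false: it would require $R_n \dotsm R_{k+1} R_k f = R_k R_n \dotsm R_{k+1} f$, i.e.\ $R_k$ commuting with the outer operators. The $R_i$ do \emph{not} commute in general (already for $f : [2]^2 \to \bR$ given by the matrix $\begin{pmatrix} 3 & 1 \\ 0 & 2 \end{pmatrix}$, sorting rows then columns gives $\begin{pmatrix} 0 & 2 \\ 1 & 3 \end{pmatrix}$, while columns then rows gives $\begin{pmatrix} 0 & 1 \\ 2 & 3 \end{pmatrix}$). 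Reading the composition left-to-right instead would make $g_k = R_k g_{k+1}$ true but would then give $g_1 \ne f^*$, so no reinterpretation saves the indexing. Consequently the step ``$g_k$ is automatically monotone in direction $k$'' is unjustified, since $g_k$ is \emph{not} obtained from anything by a final application of $R_k$.

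The fix is simple and turns your argument into the paper's implicit one: define $h_0 = f$ and $h_k = R_k h_{k-1}$ for $k = 1, \dotsc, n$, so that $h_n = f^*$ and each $h_k$ really is a single $R_k$ applied to the previous stage. Then run a \emph{forward} induction showing $h_k$ is monotone in directions $1, \dotsc, k$: the base case is $h_1 = R_1 f$, and the inductive step is exactly your order-preservation argument together with the observation that $R_k h_{k-1}$ is monotone in direction $k$ by construction. (\cref{lemma:lipschitz-preserved} is still what guarantees each $h_k$ is Lipschitz so that the one-dimensional facts apply, as you noted.)
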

\begin{proof}
    Say that $g : \overline B \to \bR$ is \emph{monotone in direction $i$} if $g_{x^{-i}}$ is
    non-decreasing for all $x^{-i} \in \left(\overline B\right)^{-i}$. Then $g$ is monotone if and
    only if it is monotone in direction $i$ for every $i \in [n]$. Note that $R_i f$ is monotone in
    direction $i$ by definition of monotone rearrangement. Therefore, it suffices to prove that if
    $f$ is monotone in direction $j$, then $R_i f$ is also monotone in direction $j$.

    Suppose $f$ is monotone in direction $j$, and
    suppose $i < j$ without loss of generality.
    Let $a \in \bR^n$ be the size of $B$.
    Let $x^{-j} \in \left(\overline B\right)^{-j}$ and
    $0 \le y_1 < y_2 \le a_j$, so that $(x^{-j}, y_1), (x^{-j}, y_2) \in \overline B$.
    We need to show that $(R_i f)(x^{-j}, y_1) \le (R_i f)(x^{-j}, y_2)$.
    Let $\overline I_i \define [0, a_i]$.
    For each $k \in \{1,2\}$, let $g_k : \overline I_i \to \bR$ be given by
    \[
        g_k(z) \define
        f(x_1, \dotsc, x_{i-1}, z, x_{i+1}, \dotsc, x_{j-1}, y_k, x_{j+1}, \dotsc, x_n) \,.
    \]
    Note that
    \[
        g_k^*(z) =
        (R_i f)(x_1, \dotsc, x_{i-1}, z, x_{i+1}, \dotsc, x_{j-1}, y_k, x_{j+1}, \dotsc, x_n)
    \]
    for every $z \in \overline I_i$, and therefore our goal is to show that
    $g_1^*(x_i) \le g_2^*(x_i)$. But $f$ being monotone in direction $j$ means that $g_1(z) \le
    g_2(z)$ for all $z \in \overline I_i$, so by the order preserving property
    (\cref{prop:rearrangement-order-preserving}) of the monotone rearrangement we get that
    $g_1^*(x_i) \le g_2^*(x_i)$, concluding the proof.
\end{proof}

It is well-known that the monotone rearrangement is a non-expansive operator. Actually a stronger
fact holds, as we note below.

\begin{proposition}[\cite{CT80}]
    \label{proposition:rearrangement-inequalities-equivalence}
    Let $m > 0$ and let $f, g \in L^1[0,m]$. Then $f^*, g^*$ satisfy
    \[
        \int_{[0,m]} \left( f^* - g^* \right)^- \odif \nu
        \le
        \int_{[0,m]} \left( f - g \right)^- \odif \nu
    \]
    and
    \[
        \int_{[0,m]} \abs*{f^* - g^*} \odif \nu
        \le
        \int_{[0,m]} \abs*{f - g} \odif \nu \,.
    \]
\end{proposition}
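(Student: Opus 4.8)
The plan is to reproduce the classical argument of Crandall and Tartar \cite{CT80}: a mean-preserving, order-preserving operator on $L^1$ is automatically non-expansive, and both hypotheses are available here. Indeed, the one-dimensional monotone rearrangement is order-preserving by \cref{prop:rearrangement-order-preserving} and, being equimeasurable with its argument (\cref{prop:rearrangement-equimeasurable}), is mean-preserving: $\int_{[0,m]} f^* \odif \nu = \int_{[0,m]} f \odif \nu$, and likewise for $g$. (To upgrade from the stated form of equimeasurability, $\nu\{f \ge c\} = \nu\{f^* \ge c\}$, to equality of integrals one invokes the layer-cake formula, using that $[0,m]$ has finite measure so that the upper and lower level sets determine one another.)

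First I would observe that the two displayed inequalities are \emph{equivalent}. Using the elementary pointwise identity $\abs{a} = a + 2a^-$ together with mean preservation, which gives $\int_{[0,m]} (f^* - g^*) \odif \nu = \int_{[0,m]} f \odif\nu - \int_{[0,m]} g \odif\nu = \int_{[0,m]} (f - g) \odif \nu$, one gets
\[
    \int_{[0,m]} \abs*{f^* - g^*} \odif \nu - \int_{[0,m]} \abs*{f - g} \odif \nu
    = 2 \left( \int_{[0,m]} (f^* - g^*)^- \odif \nu - \int_{[0,m]} (f - g)^- \odif \nu \right) \,,
\]
so it suffices to prove the first inequality.

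Second, to bound $\int_{[0,m]} (f^* - g^*)^- \odif \nu$, the key move is to introduce $h \define \min\{f, g\} \in L^1[0,m]$ (closure of $L^1$ under pointwise minimum being immediate from $\abs{h} \le \abs{f} + \abs{g}$). Since $h \le f$ and $h \le g$ everywhere, order preservation gives $h^* \le f^*$ and $h^* \le g^*$ everywhere, hence $h^* \le \min\{f^*, g^*\}$. Combining this with the pointwise identity $(u-v)^- = v - \min\{u,v\}$ and the mean preservation applied to $h$ and to $g$,
\[
    \int_{[0,m]} (f^* - g^*)^- \odif \nu
    = \int_{[0,m]} \left( g^* - \min\{f^*, g^*\} \right) \odif \nu
    \le \int_{[0,m]} \left( g^* - h^* \right) \odif \nu
    = \int_{[0,m]} \left( g - h \right) \odif \nu
    = \int_{[0,m]} (f - g)^- \odif \nu \,,
\]
which is the first inequality, and the second then follows from the equivalence above. (Alternatively, the second inequality follows by running the symmetric argument with $\max\{f,g\}$ in place of $\min\{f,g\}$ to bound $\int (f^*-g^*)^+$, and adding.) I do not expect a genuine obstacle: once the choice $h = \min\{f,g\}$ is made every step is forced, and the only places needing care are the measurability and $L^1$ bookkeeping and the passage from the level-set form of equimeasurability to equality of integrals.
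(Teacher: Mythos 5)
The paper does not prove this proposition itself; it simply cites \cite{CT80}, so there is no in-paper argument to compare against. Your reconstruction is precisely the Crandall--Tartar argument: an order-preserving, mean-preserving map on $L^1$ of a finite-measure space is an $L^1$-contraction, proved by comparing $f,g$ with $\min\{f,g\}$ (equivalently, $\max\{f,g\}$ for the $(\cdot)^+$ part). The steps all check out: the identity $\abs{a} = a + 2a^-$ with the paper's convention $a^- = -\min\{0,a\}$ is correct and, with mean preservation, makes the two displayed inequalities equivalent; $h = \min\{f,g\}$ is in $L^1$; order preservation yields $h^* \le \min\{f^*,g^*\}$; the identity $(u-v)^- = v - \min\{u,v\}$ is correct; and mean preservation (via the layer-cake formula on a finite-measure domain, as you note) closes the chain. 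This is a correct and complete proof.
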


The result above is stated for functions on the interval. Taking the integral over the box $B$ and
repeating for each operator $R_i$ yields the non-expansiveness of our monotone rearrangement
operator, as also noted by \cite{Kaw85}:

\begin{corollary}
    \label{cor:rearrangement-non-expansive}
    Let $f,g \in L^1(\overline B)$. Then $\lp{f^* - g^*}{1} \le \lp{f - g}{1}$.
\end{corollary}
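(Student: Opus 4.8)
The plan is to bootstrap the one-dimensional non-expansiveness in \cref{proposition:rearrangement-inequalities-equivalence} up to the box, one coordinate at a time, and then compose along $f^* = R_n R_{n-1} \dotsm R_1 f$. Concretely, I would first establish that each single-coordinate operator $R_i$ is non-expansive on $L^1(\overline B)$, i.e.\ $\lp{R_i f - R_i g}{1} \le \lp{f - g}{1}$, and then conclude by telescoping.

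For the single-coordinate step, fix $i \in [n]$ and let $a \in \bR^n_{>0}$ be the size of $B$. By Tonelli's theorem applied to $\abs{f-g} \in L^1(\overline B)$, for almost every $x^{-i} \in (\overline B)^{-i}$ the restrictions $f_{x^{-i}}, g_{x^{-i}}$ lie in $L^1[0,a_i]$, so \cref{proposition:rearrangement-inequalities-equivalence} applies on that line and gives
\[
    \int_0^{a_i} \bigl| (f_{x^{-i}})^* - (g_{x^{-i}})^* \bigr| \odif \nu
    \le \int_0^{a_i} \bigl| f_{x^{-i}} - g_{x^{-i}} \bigr| \odif \nu \,,
\]
which by \cref{def:rearrangement-in-direction-i} is exactly $\int_0^{a_i} \bigl| (R_i f)_{x^{-i}} - (R_i g)_{x^{-i}} \bigr| \odif \nu \le \int_0^{a_i} \bigl| f_{x^{-i}} - g_{x^{-i}} \bigr| \odif \nu$. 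Integrating over $x^{-i}$ and applying Tonelli again on both sides yields $\lp{R_i f - R_i g}{1} \le \lp{f-g}{1}$, provided $R_i f$ and $R_i g$ are measurable as functions on $\overline B$; joint measurability of $R_i f$ follows from the level-set construction in \cref{def:monotone-rearrangement} and is part of the classical rearrangement theory we cite, so I would invoke it as a known fact (or, if one wants to avoid that, note that $R_i$ is Lipschitz on Lipschitz inputs by \cref{lemma:lipschitz-preserved} and extend by density of Lipschitz functions in $L^1(\overline B)$ using the non-expansiveness just proved). Moreover, equimeasurability on each line (\cref{prop:rearrangement-equimeasurable}) together with the layer-cake formula gives $\lp{R_i h}{1} = \lp{h}{1}$ for every $h \in L^1(\overline B)$, so $R_i$ maps $L^1(\overline B)$ into itself and the $n$-fold composition in \cref{def:monotone-rearrangement-continuous} is well-defined on $L^1(\overline B)$.

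The corollary then follows by telescoping: set $f_0 = f$, $g_0 = g$, and $f_k = R_k f_{k-1}$, $g_k = R_k g_{k-1}$ for $k = 1, \dotsc, n$; each $f_k, g_k \in L^1(\overline B)$ by the previous paragraph, and the single-coordinate bound applied with $i = k$ gives $\lp{f_k - g_k}{1} \le \lp{f_{k-1} - g_{k-1}}{1}$. Chaining these, $\lp{f^* - g^*}{1} = \lp{f_n - g_n}{1} \le \lp{f_0 - g_0}{1} = \lp{f - g}{1}$, as claimed.

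I do not expect a genuine obstacle: this is a routine ``reduce to one dimension and integrate'' reduction. The only points needing care are the measurability/integrability bookkeeping --- that $R_i$ sends $L^1(\overline B)$ into itself so the composition makes sense, and that the line restrictions are in $L^1$ so the one-dimensional \cref{proposition:rearrangement-inequalities-equivalence} genuinely applies --- plus the (standard) joint measurability of $R_i f$ on the box.
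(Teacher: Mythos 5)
Your proposal is correct and is essentially the same argument the paper sketches for this corollary: apply the one-dimensional non-expansiveness (\cref{proposition:rearrangement-inequalities-equivalence}) on each line, integrate over the box via Tonelli to get non-expansiveness of each $R_i$, and then telescope across $R_n \dotsm R_1$. You simply spell out the measurability and integrability bookkeeping that the paper leaves implicit.
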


We show that the rearrangement operator can only make the norm of the directed partial derivatives
smaller, \ie decrease the violations of monotonicity, which is the key step in this proof.

\begin{proposition}
    \label{prop:rearrangement-improves-boundary}
    Let $f : \overline B \to \bR$ be Lipschitz continuous and let $i, j \in [n]$.
    Then $\lp{\partial^-_j (R_i f)}{1} \le \lp{\partial^-_j f}{1}$.
\end{proposition}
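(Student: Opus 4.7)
The plan is to reduce the claim to the one-dimensional directed non-expansiveness of the monotone rearrangement (\cref{proposition:rearrangement-inequalities-equivalence}) by a finite-difference argument combined with Tonelli's theorem. The case $i = j$ is immediate: by \cref{def:rearrangement-in-direction-i}, $R_i f$ is non-decreasing along direction $i$, so $\partial^-_i (R_i f) = 0$ almost everywhere. In what follows I assume $i \ne j$.

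Fix $h > 0$, a value $y \in [0, a_j - h]$ for the $j$-th coordinate, and a tuple of values for the remaining $n-2$ coordinates of $x$. Let $g_1, g_2 : [0, a_i] \to \bR$ denote the one-dimensional slices of $f$ along direction $i$ where the $j$-th coordinate is set to $y$ and $y+h$, respectively. By \cref{def:rearrangement-in-direction-i}, the corresponding slices of $R_i f$ are precisely $g_1^*$ and $g_2^*$. Applying \cref{proposition:rearrangement-inequalities-equivalence} yields
\[
    \int_0^{a_i} \bigl(g_2^*(z) - g_1^*(z)\bigr)^- \odif z
    \;\le\; \int_0^{a_i} \bigl(g_2(z) - g_1(z)\bigr)^- \odif z \,.
\]
Divide by $h$, integrate over $y$ and the remaining $n-2$ coordinates (using Tonelli's theorem), and write $B_h \define \{x \in \overline B : x_j \le a_j - h\}$ to obtain
\[
    \int_{B_h} \frac{\bigl((R_i f)(x + h e_j) - (R_i f)(x)\bigr)^-}{h} \odif \nu(x)
    \;\le\; \int_{B_h} \frac{\bigl(f(x + h e_j) - f(x)\bigr)^-}{h} \odif \nu(x) \,.
\]

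To finish, I take $h \to 0^+$. Both $f$ and $R_i f$ are Lipschitz (the latter by \cref{lemma:lipschitz-preserved}), so the finite-difference quotients on both sides are uniformly bounded in $h$, and by Rademacher's theorem they converge pointwise almost everywhere to $\partial_j f$ and $\partial_j (R_i f)$, respectively. Since the map $t \mapsto t^-$ is continuous, dominated convergence (along with the fact that $\overline B \setminus B_h$ has measure $O(h)$ with a bounded integrand) yields
\[
    \lp{\partial^-_j (R_i f)}{1} = \int_{\overline B} (\partial_j R_i f)^- \odif \nu
    \;\le\; \int_{\overline B} (\partial_j f)^- \odif \nu = \lp{\partial^-_j f}{1} \,,
\]
which is the desired inequality. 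The only subtle point is justifying this interchange of limit and integral, which follows routinely from \cref{lemma:lipschitz-preserved}; no deeper obstacle arises.
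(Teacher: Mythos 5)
Your proof is correct and follows essentially the same strategy as the paper's: both reduce to the one-dimensional directed non-expansiveness (\cref{proposition:rearrangement-inequalities-equivalence}) via Tonelli, invoke \cref{lemma:lipschitz-preserved} plus Rademacher to get a.e.\ differentiability and uniform bounds on the difference quotients, and pass to the limit by bounded/dominated convergence. The only cosmetic difference is that you apply the 1-d inequality first and then integrate and take $h \to 0$, whereas the paper starts from the integral and pushes the limit outside; the ingredients and justifications are identical.
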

\begin{proof}
    We may assume that $i \ne j$, since otherwise the LHS is zero.
    We will use the following convention for variables names:
    $w \in \bR^n$ will denote points in $B$;
    $z \in \bR^{[n] \setminus \{i,j\}}$ will denote points in $B^{-i-j}$;
    $x \in \bR$ will denote points in $(0, a_i)$ (indexing the $i$-th dimension); and
    $y \in \bR$ will denote points in $(0, a_j)$ (indexing the $j$-th dimension).
    For each $i \in [n]$, let $e_i$ denote the $i$-th basis vector.

    Since $f$ is Lipschitz, so is $R_i f$ by \cref{lemma:lipschitz-preserved}. By Rademacher's
    theorem, these functions are differentiable almost everywhere. Therefore, let
    $D \subseteq B$ be a measurable set such that $f$ and $R_i f$ are differentiable in $D$ and
    $\nu(D) = \nu(B)$. We have
    \begin{align*}
        \lp{\partial^-_j (R_i f)}{1}
        &= \int_D \abs*{\partial^-_j (R_i f)} \odif \nu \\
        &= \int_D \left[
            \lim_{h \to 0} \left( \frac{(R_i f)(w + h e_j) - (R_i f)(w)}{h} \right)^-
            \right] \odif \nu(w) \\
        &\overset{(BC1)}{=}
            \lim_{h \to 0}
            \int_D \left( \frac{(R_i f)(w + h e_j) - (R_i f)(w)}{h} \right)^- \odif \nu(w) \\
        &\overset{(D1)}{=}
            \lim_{h \to 0}
            \int_B \left( \frac{(R_i f)(w + h e_j) - (R_i f)(w)}{h} \right)^- \odif \nu(w) \\
        &\overset{(T1)}{=}
            \lim_{h \to 0}
            \int_{B^{-i-j}} \int_{(0,a_j)} \int_{(0,a_i)}
                \left( \frac{(R_i f)(z, y+h, x) - (R_i f)(z, y, x)}{h} \right)^-
                \odif \nu(x) \odif \nu(y) \odif \nu(z) \\
        &\le \lim_{h \to 0}
            \int_{B^{-i-j}} \int_{(0,a_j)} \int_{(0,a_i)}
                \left( \frac{f(z, y+h, x) - f(z, y, x)}{h} \right)^-
                \odif \nu(x) \odif \nu(y) \odif \nu(z) \\
        &\overset{(T2)}{=}
            \lim_{h \to 0}
            \int_B \left( \frac{f(w + h e_j) - f(w)}{h} \right)^- \odif \nu(w) \\
        &\overset{(D2)}{=}
            \lim_{h \to 0}
            \int_D \left( \frac{f(w + h e_j) - f(w)}{h} \right)^- \odif \nu(w) \\
        &\overset{(BC2)}{=}
            \int_D \left[
            \lim_{h \to 0} \left( \frac{f(w + h e_j) - f(w)}{h} \right)^-
            \right] \odif \nu(w) \\
        &= \int_D \abs*{\partial^-_j f} \odif \nu \\
        &= \lp{\partial^-_j f}{1} \,.
    \end{align*}
    Equalities (BC1) and (BC2) hold by the bounded convergence theorem, which applies because the
    difference quotients are uniformly bounded by the Lipschitz constants of $R_i f$ and $f$
    (respectively), and because $R_i f$ and $f$ are differentiable in $D$ (which gives pointwise
    convergence of the limits).
    Equalities (D1) and (D2) hold again by the uniform
    boundedness of the difference quotients, along with the fact that $\nu(B \setminus D) = 0$.
    Equalities (T1) and (T2) hold by Tonelli's theorem.
    Finally, the inequality holds by \cref{proposition:rearrangement-inequalities-equivalence},
    since $(R_i f)(z, y+h, \cdot)$ is the monotone rearrangement of $f(z, y+h, \cdot)$ and
    $(R_i f)(z, y, \cdot)$ is the monotone rearrangement of $f(z, y, \cdot)$.
\end{proof}

We are now ready to prove our directed $(L^1, \ell^1)$-Poincaré inequality.

\begin{theorem}
    \label{thm:directed-l1-isoperimetric-inequality}
    Let $B$ be the box of size $a \in \bR^n$ and let $f : \overline B \to \bR$ be Lipschitz
    continuous. Then
    \[
        \lp{f - f^*}{1} \le 2 \sum_{i=1}^n a_i \lp{\partial^-_i f}{1} \,.
    \]
\end{theorem}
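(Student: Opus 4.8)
The plan is to prove this by a telescoping argument over the sequence of one-dimensional rearrangements defining $f^*$, using the two structural facts already established: that each $R_i$ moves $f$ by at most $2a_i \|\partial_i^- f\|_1$ in $L^1$ (\cref{prop:distance-to-rearrangement-in-direction-i}), and that applying any $R_i$ cannot increase $\|\partial_j^- \cdot\|_1$ for any $j$ (\cref{prop:rearrangement-improves-boundary}).

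Concretely, I would set $f_0 \define f$ and $f_k \define R_k f_{k-1} = R_k R_{k-1} \dotsm R_1 f$ for $k = 1, \dotsc, n$, so that $f_n = f^*$. By \cref{lemma:lipschitz-preserved}, each $f_k$ is Lipschitz continuous (with Lipschitz constant at most $2^k$ times that of $f$), so all the cited results apply to every $f_k$. By the triangle inequality,
\[
    \lp{f - f^*}{1} \le \sum_{k=1}^n \lp{f_{k-1} - f_k}{1} = \sum_{k=1}^n \lp{f_{k-1} - R_k f_{k-1}}{1}
    \le \sum_{k=1}^n 2 a_k \lp{\partial^-_k f_{k-1}}{1} \,,
\]
where the last step is \cref{prop:distance-to-rearrangement-in-direction-i} applied to $f_{k-1}$. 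It then remains to bound $\lp{\partial^-_k f_{k-1}}{1}$ by $\lp{\partial^-_k f}{1}$. Since $f_{k-1} = R_{k-1} R_{k-2} \dotsm R_1 f$ is obtained from $f$ by a sequence of rearrangements $R_i$ with $i \le k-1 < k$, iterating \cref{prop:rearrangement-improves-boundary} (each intermediate function being Lipschitz by \cref{lemma:lipschitz-preserved}, so the proposition applies at every step) gives
\[
    \lp{\partial^-_k f_{k-1}}{1} \le \lp{\partial^-_k f_{k-2}}{1} \le \dotsm \le \lp{\partial^-_k f_0}{1} = \lp{\partial^-_k f}{1} \,.
\]
Substituting this into the previous display yields $\lp{f - f^*}{1} \le 2 \sum_{k=1}^n a_k \lp{\partial^-_k f}{1}$, as desired.

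I do not expect a serious obstacle here: the only thing to be careful about is that every function appearing in the telescope is Lipschitz, which is exactly what \cref{lemma:lipschitz-preserved} guarantees (and the blow-up of the Lipschitz constant by factors of $2$ is harmless, since \cref{prop:distance-to-rearrangement-in-direction-i,prop:rearrangement-improves-boundary} hold for any Lipschitz function irrespective of its constant). The one conceptual point worth emphasizing in the write-up is \emph{why} the argument needs \cref{prop:rearrangement-improves-boundary} at all: a naive triangle inequality would bound $\lp{f - f^*}{1}$ by $\sum_k \lp{f_{k-1} - R_k f_{k-1}}{1}$, but the $k$-th term involves the directed derivative of the \emph{already-rearranged} function $f_{k-1}$, not of $f$ itself; \cref{prop:rearrangement-improves-boundary} is precisely what lets us trade this back for the directed derivative of the original $f$. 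Note that \cref{cor:rearrangement-non-expansive} is not needed for this particular theorem (it is used later, to relate $d^\mono_1(f)$ to $\lp{f - f^*}{1}$).
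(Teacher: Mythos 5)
Your proof is correct and is essentially identical to the paper's: the same telescoping via the triangle inequality, the same use of \cref{prop:distance-to-rearrangement-in-direction-i} to bound each term, and the same iterated application of \cref{prop:rearrangement-improves-boundary} (with \cref{lemma:lipschitz-preserved} ensuring the propositions apply at every stage). Your closing observations about why \cref{prop:rearrangement-improves-boundary} is needed and why \cref{cor:rearrangement-non-expansive} is not are both accurate.
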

\begin{proof}
    We have
    \begin{align*}
        \lp{f - f^*}{1}
        &\le \sum_{i=1}^n \lp{R_{i-1} \dotsm R_1 f - R_i \dotsm R_1 f}{1}
            & \text{(Triangle inequality)} \\
        &\le 2 \sum_{i=1}^n a_i \lp{\partial^-_i (R_{i-1} \dotsm R_1 f)}{1}
            & \text{(\cref{lemma:lipschitz-preserved,prop:distance-to-rearrangement-in-direction-i})} \\
        &\le 2 \sum_{i=1}^n a_i \lp{\partial^-_i f}{1}
            & \text{(\cref{lemma:lipschitz-preserved,prop:rearrangement-improves-boundary})} \,.
    \end{align*}
\end{proof}

Setting $B = (0,1)^n$ yields the inequality portion of
\cref{thm:main-directed-inequality-continuous}:

\begin{corollary}
    \label{cor:main-directed-inequality-continuous}
    Let $B = (0,1)^n$ and let $f : \overline B \to \bR$ be Lipschitz continuous. Then
    \[
        \Ex{\abs*{f-f^*}}
        = \lp{f - f^*}{1}
        \le 2 \int_B \| \grad^- f \|_1 \odif \nu
        = 2 \Ex{\|\grad^- f\|_1} \,.
    \]
\end{corollary}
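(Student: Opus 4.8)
The plan is to derive \cref{cor:main-directed-inequality-continuous} as the special case $B = (0,1)^n$ of \cref{thm:directed-l1-isoperimetric-inequality}. That theorem is the substantive statement; the corollary just records what it says when the box has all sides equal to $1$, so the proof is essentially bookkeeping. First I would set $a = (1, \dotsc, 1)$, so that $\overline B = [0,1]^n$, and observe that with this choice the Lebesgue measure $\nu$ restricted to $B$ is a probability measure, hence $\nu = \mu$ and $\int_B g \odif \nu = \Ex{g}$ for any integrable $g$; in particular $\lp{f-f^*}{1} = \int_B \abs{f - f^*} \odif \nu = \Ex{\abs{f-f^*}}$. This handles the first equality in the display.

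Next I would invoke \cref{thm:directed-l1-isoperimetric-inequality} directly: since $f : \overline B \to \bR$ is Lipschitz continuous and each $a_i = 1$, the theorem gives
\[
    \lp{f - f^*}{1} \le 2 \sum_{i=1}^n a_i \lp{\partial^-_i f}{1} = 2 \sum_{i=1}^n \lp{\partial^-_i f}{1} \,.
\]
Then I would rewrite the right-hand side in terms of the $\ell^1$ norm of the directed gradient. For almost every $x \in B$ (by Rademacher, since $f$ is Lipschitz) the vector $\grad^- f(x) = (\partial^-_1 f(x), \dotsc, \partial^-_n f(x))$ is defined, and $\|\grad^- f(x)\|_1 = \sum_{i=1}^n \abs{\partial^-_i f(x)}$. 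Integrating this identity over $B$ and using Tonelli to interchange the sum and the integral yields $\int_B \|\grad^- f\|_1 \odif \nu = \sum_{i=1}^n \int_B \abs{\partial^-_i f} \odif \nu = \sum_{i=1}^n \lp{\partial^-_i f}{1}$. Combining, $\lp{f-f^*}{1} \le 2 \int_B \|\grad^- f\|_1 \odif \nu$, and since $\nu = \mu$ on $B$ this last integral equals $\Ex{\|\grad^- f\|_1}$, giving the final equality in the display.

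There is no real obstacle here — the only things to be careful about are: (i) that $\overline B = [0,1]^n$ really does carry the uniform probability measure, so one may freely swap $\lp{\cdot}{1}$ and $\Ex{\cdot}$; and (ii) that the pointwise identity $\|\grad^- f\|_1 = \sum_i \abs{\partial^-_i f}$ only needs to hold almost everywhere, which is fine since $f$ is differentiable almost everywhere and all the integrals ignore null sets. I would simply write the chain of (in)equalities displayed in the corollary statement, annotating each step with the justification above, and conclude.
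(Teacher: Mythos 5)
Your proof is correct and follows exactly the paper's route: specialize \cref{thm:directed-l1-isoperimetric-inequality} to $a = (1,\dotsc,1)$, note that Lebesgue measure on $[0,1]^n$ is a probability measure so $\lp{\cdot}{1}$ and $\Ex{\cdot}$ coincide, and unpack $\|\grad^- f\|_1 = \sum_i \abs{\partial^-_i f}$ almost everywhere via Rademacher. The only cosmetic quibble is that interchanging a finite sum with an integral is plain linearity rather than Tonelli, but that does not affect correctness.
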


To complete the proof of \cref{thm:main-directed-inequality-continuous}, we need to show that
$d_1(f) \approx \Ex{\abs*{f-f^*}}$, \ie that the monotone rearrangement is ``essentially optimal''
as a target monotone function for $f$. The inequality $d_1(f) \le \Ex{\abs*{f-f^*}}$ is clear from
the fact that $f^*$ is monotone. The inequality in the other direction follows from the
non-expansiveness of the rearrangement operator, with essentially the same proof as that of
\cite{KMS18} for the Boolean cube:

\begin{proposition}
    \label{prop:rearrangement-is-optimal}
    Let $f : [0,1]^n \to \bR$ be Lipschitz continuous. Then $\Ex{\abs*{f-f^*}} \le 2d_1(f)$.
\end{proposition}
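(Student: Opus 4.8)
The plan is to prove that the monotone rearrangement $f^*$ is a factor-$2$ optimal monotone approximation of $f$ in $L^1$, by combining the non-expansiveness of the rearrangement operator (\cref{cor:rearrangement-non-expansive}) with the fact that every monotone function is a fixed point of the rearrangement, mirroring the Boolean-cube argument of \cite{KMS18}. Since $\Ex{\abs*{f-f^*}} = \lp{f-f^*}{1}$ and $d_1(f) = \inf_{g \in \cP^\mono} \lp{f-g}{1}$ (here $\Ex{\cdot}$ is integration against the uniform/Lebesgue measure on $[0,1]^n$), it suffices to show $\lp{f-f^*}{1} \le 2\lp{f-g}{1}$ for an arbitrary $g \in \cP^\mono$ and then take the infimum over $g$.

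The first step is the fixed-point claim: if $g \in \cP^\mono$, then $g^* = g$ as elements of $L^1$. I would use that ``monotone'' is equivalent to ``monotone in direction $i$ for every $i$'' (as in the proof of \cref{prop:rearrangement-is-monotone}), so that every axis-aligned line restriction $g_{x^{-i}}$ is non-decreasing. For a non-decreasing function $h$ on an interval, \cref{def:monotone-rearrangement} leaves $h$ unchanged at each continuity point of $h$---the level set $\{h \le c\}$ is an initial segment of the interval, so its rearrangement differs from it only by (possibly) an endpoint---and since $h$ has at most countably many discontinuities, $h^* = h$ almost everywhere. Applying this linewise to $g_{x^{-i}}$ and integrating with Tonelli's theorem gives $R_i g = g$ a.e.; since the rearrangement operator is well-defined on a.e.-equivalence classes (a consequence of \cref{cor:rearrangement-non-expansive}), iterating over $i = 1, \dotsc, n$ through \cref{def:monotone-rearrangement-continuous} yields $g^* = R_n \dotsm R_1 g = g$ a.e.

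The second step is the triangle-inequality estimate. For $g \in \cP^\mono$ (which lies in $L^1$, and with $f \in L^1$ as it is Lipschitz and hence bounded on $[0,1]^n$),
\[
    \lp{f - f^*}{1} \;\le\; \lp{f-g}{1} + \lp{g - f^*}{1} \;=\; \lp{f-g}{1} + \lp{g^* - f^*}{1} \;\le\; \lp{f-g}{1} + \lp{g-f}{1} \;=\; 2\lp{f-g}{1}\,,
\]
where the middle equality uses $g = g^*$ from the first step and the second inequality is \cref{cor:rearrangement-non-expansive}. Taking the infimum over $g \in \cP^\mono$ then gives $\Ex{\abs*{f-f^*}} = \lp{f-f^*}{1} \le 2 d_1(f)$.

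I do not expect a genuine obstacle, since the heart of the argument is two lines. The one point deserving mild care is the fixed-point claim $g^* = g$: a monotone $g$ need not be continuous, so its level sets are initial segments only up to an endpoint, and $g$ is pinned down only almost everywhere as an $L^1$ element---but all the resulting discrepancies live on null sets and hence do not affect any of the $L^1$ norms above.
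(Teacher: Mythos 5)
Your proof is correct and matches the paper's argument essentially verbatim: you invoke $g = g^*$ for monotone $g$, apply the triangle inequality together with \cref{cor:rearrangement-non-expansive}, and take the infimum over $g$, exactly as the paper does (and as it credits to \cite{KMS18}). The only difference is that you spell out the a.e.~fixed-point claim $g^* = g$ in detail, which the paper states as immediate; your elaboration (level sets of a non-decreasing one-dimensional function are initial segments up to an endpoint, discontinuities are countable, and $R_i$ respects a.e.~equivalence so the iteration goes through) is accurate and does not change the substance of the proof.
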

\begin{proof}
    Let $g \in L^1([0,1]^n)$ be any monotone function. It follows that $g^* = g$. By
    \cref{cor:rearrangement-non-expansive}, we have that $\lp{f^*-g^*}{1} \le \lp{f-g}{1}$. Using
    the triangle inequality, we obtain
    \[
        \lp{f-f^*}{1} \le \lp{f-g}{1} + \lp{g-f^*}{1} = \lp{f-g}{1} + \lp{f^*-g^*}{1}
        \le 2\lp{f-g}{1} \,.
    \]
    The claim follows by taking the infimum over the choice of $g$.
\end{proof}

\paragraph*{Tightness of the inequality.}
To check that \cref{cor:main-directed-inequality-continuous} is tight up to constant factors, it
suffices to take the linear function $f : [0,1]^n \to \bR$ given by $f(x) = 1-x_1$ for all $x \in
[0,1]^n$. Then $f^*$ is given by $f^*(x) = x_1$, so $\Ex{f-f^*} = 1/2$ while $\Ex{\|\grad^- f\|_1} =
1$, as needed.

\subsubsection{Discrete case}

The proof above carries over to the case of the hypergrid almost unmodified, as we now outline. We
now consider functions $f : [m]^n \to \bR$, so the box $B$ is replaced with $[m]^n$ and its
dimensions $a_i$ are all replaced with the length $m$ of the hypergrid. We define the rearrangement
in direction $i$, $R_i f$, as in \cref{def:rearrangement-in-direction-i} by sorting the restrictions
of $f$ to each line along direction $i$. We also define $f^*$ as in
\cref{def:monotone-rearrangement-continuous} by subsequent applications of each operator $R_i$. Then
\cref{prop:distance-to-rearrangement-in-direction-i} carries over by applying the one-dimensional
\cref{lemma:discrete-directed-poincare-1d}, and the proof of \cref{prop:rearrangement-is-monotone}
carries over unmodified.

The non-expansiveness properties
\cref{proposition:rearrangement-inequalities-equivalence,cor:rearrangement-non-expansive} also carry
over unmodified, and the key \cref{prop:rearrangement-improves-boundary} carries over with a more
immediate proof: the use of \cref{proposition:rearrangement-inequalities-equivalence} remains the
same, but rather than expanding the definition of derivative and reasoning about the limit, the
discrete argument boils down to showing the inequality
\[
    \int_{[m]^n} \left( (R_i f)(w + e_j) - (R_i f)(w) \right)^- \odif \nu(w)
    \le \int_{[m]^n} \left( f(w + e_j) - f(w) \right)^- \odif \nu(w) \,,
\]
which follows immediately from the discrete version of
\cref{proposition:rearrangement-inequalities-equivalence} by summing over all lines in direction
$i$. Then, the hypergrid version of \cref{thm:directed-l1-isoperimetric-inequality} follows by the
same application of the triangle inequality, and we conclude the inequality portion of
\cref{thm:main-directed-inequality-discrete}:

\begin{theorem}
    \label{thm:hypergrid-inequality}
    Let $f : [m]^n \to \bR$. Then $\Ex{\abs*{f-f^*}} \le 2m \Ex{\|\grad^- f\|_1}$.
\end{theorem}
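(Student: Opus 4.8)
The plan is to mirror, in the discrete hypergrid setting, the same three-step argument used to prove \cref{thm:directed-l1-isoperimetric-inequality} for boxes, relying throughout on the discrete analogues of the one-dimensional lemmas that have already been established. The target inequality $\Ex{\abs{f-f^*}} \le 2m\Ex{\|\grad^- f\|_1}$ is, after unfolding the probability/uniform-measure normalization (which is the same on both sides and cancels), equivalent to $\lp{f-f^*}{1} \le 2m\sum_{i=1}^n \lp{\partial_i^- f}{1}$ where norms are taken with respect to the counting measure on $[m]^n$. So first I would state this equivalent counting-measure form and reduce to proving it.

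Next, following the telescoping structure of the continuous proof, I would write $f^* = R_n R_{n-1}\dotsm R_1 f$ and apply the triangle inequality to get $\lp{f-f^*}{1} \le \sum_{i=1}^n \lp{R_{i-1}\dotsm R_1 f - R_i\dotsm R_1 f}{1}$. For each term, I would invoke the discrete analogue of \cref{prop:distance-to-rearrangement-in-direction-i}: since $R_i$ acts line-by-line along direction $i$, summing the one-dimensional bound of \cref{lemma:discrete-directed-poincare-1d} over all lines (using Tonelli/finite summation to reorder) gives $\lp{g - R_i g}{1} \le 2m \lp{\partial_i^- g}{1}$ for any $g : [m]^n \to \bR$; applied to $g = R_{i-1}\dotsm R_1 f$ this bounds the $i$-th term by $2m\lp{\partial_i^-(R_{i-1}\dotsm R_1 f)}{1}$. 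Then I would apply the discrete version of the monotonicity-preservation fact \cref{prop:rearrangement-is-monotone} together with the key \cref{prop:rearrangement-improves-boundary} — whose discrete proof, as the excerpt notes, reduces to summing the discrete \cref{proposition:rearrangement-inequalities-equivalence} over all lines in direction $i$ — to conclude inductively that $\lp{\partial_i^-(R_{i-1}\dotsm R_1 f)}{1} \le \lp{\partial_i^- f}{1}$. Chaining these estimates yields $\lp{f-f^*}{1} \le 2m\sum_{i=1}^n \lp{\partial_i^- f}{1}$, which is exactly what was needed.

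The one point requiring a little care — and the only real ``obstacle,'' though a mild one — is verifying that \cref{prop:rearrangement-improves-boundary} genuinely goes through in the discrete setting, specifically that one may compare $\partial_j^-(R_i f)$ with $\partial_j^- f$ line-by-line along direction $i$. The subtlety is that $\partial_j$ involves the pair of points $w$ and $w+e_j$, while $R_i$ rearranges along direction $i$; so one fixes the coordinates outside $\{i,j\}$ and the value of the $j$-th coordinate pair, obtaining two one-dimensional functions along direction $i$ (at $j$-coordinate $y$ and at $y+1$), and applies the first inequality of the discrete \cref{proposition:rearrangement-inequalities-equivalence} with $f \mapsto f(\cdot\text{ at }y+1)$ and $g \mapsto f(\cdot\text{ at }y)$, whose rearrangements are exactly $R_i f$ at those slices. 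Summing $(\cdot)^-$ over all such line-pairs and all $z \in [m]^{-i-j}$, $y \in [m-1]$ recovers $\lp{\partial_j^-(R_i f)}{1} \le \lp{\partial_j^- f}{1}$ with no limits or bounded-convergence machinery needed. Once this is in hand the rest is bookkeeping, and the constant $2$ (rather than $1$) comes, as in the continuous case, from \cref{lemma:discrete-directed-poincare-1d}.

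In fact, since the excerpt has already spelled out each of these discrete analogues in the surrounding prose, the cleanest presentation is simply to record the chain of inequalities explicitly, citing \cref{lemma:discrete-directed-poincare-1d} (via the line-by-line version of \cref{prop:distance-to-rearrangement-in-direction-i}), the discrete analogue of \cref{prop:rearrangement-is-monotone}, and the discrete analogue of \cref{prop:rearrangement-improves-boundary}, exactly paralleling the proof of \cref{thm:directed-l1-isoperimetric-inequality} with every $a_i$ replaced by $m$. I do not anticipate any genuinely new difficulty beyond confirming these analogues, all of which have strictly simpler proofs than their continuous counterparts.
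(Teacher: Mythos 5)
Your proposal is correct and follows essentially the same path as the paper: decompose $f^* = R_n \dotsm R_1 f$, telescope via the triangle inequality, bound each term using the discrete one-dimensional \cref{lemma:discrete-directed-poincare-1d}, and iterate the discrete analogue of \cref{prop:rearrangement-improves-boundary} (obtained from \cref{proposition:rearrangement-inequalities-equivalence} applied to pairs of lines along direction $i$ at adjacent $j$-slices). The passing invocation of \cref{prop:rearrangement-is-monotone} in the inequality chain is superfluous for this particular theorem (which only asserts the bound, not monotonicity of $f^*$), but this is a cosmetic slip, not a gap.
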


The discrete version of \cref{prop:rearrangement-is-optimal} follows identically, and we state it
here for convenience:

\begin{proposition}
    \label{prop:rearrangement-is-optimal-discrete}
    Let $f : [m]^n \to \bR$. Then $\Ex{\abs*{f-f^*}} \le 2d_1(f)$.
\end{proposition}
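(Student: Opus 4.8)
The plan is to mirror the continuous argument (\cref{prop:rearrangement-is-optimal}) almost verbatim, since all the ingredients it relies on have been transferred to the hypergrid setting. Concretely, I would start from an arbitrary monotone function $g : [m]^n \to \bR$ and observe that, being monotone, it is fixed by the rearrangement operator, i.e.\ $g^* = g$; this uses the discrete analogue of \cref{prop:rearrangement-is-monotone} together with the fact that sorting an already-sorted sequence does nothing. The second ingredient is the discrete non-expansiveness of the rearrangement operator, i.e.\ the hypergrid version of \cref{cor:rearrangement-non-expansive}, which the excerpt already states carries over unmodified: $\lp{f^* - g^*}{1} \le \lp{f - g}{1}$.

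With these two facts in hand, the proof is a single application of the triangle inequality. Since $g^* = g$, we have $\lp{g - f^*}{1} = \lp{g^* - f^*}{1} \le \lp{f - g}{1}$, and therefore
\[
    \lp{f - f^*}{1} \le \lp{f - g}{1} + \lp{g - f^*}{1} \le 2 \lp{f - g}{1} \,.
\]
Translating norms to expectations (recall $\mu$ is uniform on $[m]^n$, so $\lp{\cdot}{1}$ and $\Ex{\abs{\cdot}}$ differ only by the normalizing factor $m^{-n}$ on both sides, which cancels), this reads $\Ex{\abs{f - f^*}} \le 2 \, d_1(f, g)$. Taking the infimum over all monotone $g$ gives $\Ex{\abs{f - f^*}} \le 2 d_1(f)$, as claimed.

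There is essentially no obstacle here: the statement is a corollary of machinery already established for the hypergrid, and the only thing to be careful about is bookkeeping — making sure one is allowed to pass to the infimum (the bound holds for every monotone $g \in L^1([m]^n)$, and $[m]^n$ is finite so every function is in $L^1$, so this is immediate) and that the uniform-measure normalization is applied consistently on both sides of the inequality. If anything deserves a sentence of justification it is that $g^* = g$ for monotone $g$: on each line along direction $i$ the restriction of $g$ is already non-decreasing, so its monotone rearrangement equals itself, hence $R_i g = g$ for every $i$ and thus $g^* = R_n \cdots R_1 g = g$.
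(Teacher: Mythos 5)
Your proof is correct and is exactly the argument the paper intends: the paper proves \cref{prop:rearrangement-is-optimal} via non-expansiveness plus the triangle inequality and then states that the discrete version follows identically, which is precisely what you carry out. Your added remark that $g^* = g$ for monotone $g$ (sorting an already-sorted line fixes it) is a harmless and reasonable bit of extra justification.
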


Finally, the tightness of \cref{thm:hypergrid-inequality} is mostly easily verified for the
following step function: letting $m$ be even for simplicity, define $f : [m]^n \to \bR$ by
\[
    f(x) = \begin{cases}
        1 & \text{if $x_1 \le m/2$} \,, \\
        0 & \text{if $x_1 > m/2$} \,.
    \end{cases}
\]
Then $f^*$ is obtained by flipping this function along the first coordinate, or equivalently
swapping the values $1$ and $0$ in the definition above. Thus $\Ex{\abs*{f-f^*}} = 1$. On the other
hand, $\|\grad^- f\|_1$ takes value $1$ on exactly one point in each line along the first
coordinate, and $0$ elsewhere. Hence $\Ex{\|\grad^- f\|_1} = 1/m$, as needed.

\section{Applications to monotonicity testing}
\label{sec:testers}

In this section, we use the directed Poincaré inequalities on the unit cube and hypergrid to show
that the natural partial derivative tester (or edge tester) attains the upper bounds from
\cref{thm:main-tester}.

Let $\Omega$ denote either $[0,1]^n$ or $[m]^n$, and let $q(\Omega, L, \epsilon)$ denote the query
complexity of testers for $(\ell^1, L)$-Lipschitz functions on these domains, as follows:
\[
    q([0,1]^n, L, \epsilon) \define \Theta\left(\frac{nL}{\epsilon}\right)
    \qquad
    \text{and}
    \qquad
    q([m]^n, L, \epsilon) \define \Theta\left(\frac{nmL}{\epsilon}\right) \,.
\]

The tester is given in \cref{alg:l1-tester-partial-derivatives}. It is clear that this algorithm is
a nonadaptive partial derivative tester, and that it always accepts monotone functions. It suffices
to show that it rejects with good probability when $d_1(f) > \epsilon$.

\begin{algorithm}[H]
    \caption{$L^1$ monotonicity tester for Lipschitz functions using partial derivative queries}
    \hspace*{\algorithmicindent}
        \textbf{Input:} Partial derivative oracle access to Lipschitz function
        $f : \Omega \to \bR$. \\
    \hspace*{\algorithmicindent}
        \textbf{Output:} Accept if $f$ is monotone, reject if $d_1(f) > \epsilon$. \\
    \hspace*{\algorithmicindent}
        \textbf{Requirement:} $\Lip_1(f) \le L$.
    \begin{algorithmic}
        \Procedure{PartialDerivativeTester}{$f, \Omega, L, \epsilon$}
            \Repeat{$q(\Omega, L, \epsilon)$}
                \State Sample $x \in \Omega$ uniformly at random.
                \State Sample $i \in [n]$ uniformly at random.
                \State \textbf{Reject} if $\partial_i f(x) < 0$.
            \EndRepeat
            \State \textbf{Accept}.
        \EndProcedure
    \end{algorithmic}
    \label{alg:l1-tester-partial-derivatives}
\end{algorithm}

\begin{lemma}
    \label{lemma:l1-tester-rejection-prob}
    Let $\Omega$ be one of $[0,1]^n$ or $[m]^n$, and let $f : \Omega \to \bR$ be a Lipschitz
    function satisfying $\Lip_1(f) \le L$. Suppose $d_1(f) > \epsilon$. Then
    \cref{alg:l1-tester-partial-derivatives} rejects with probability at least $2/3$.
\end{lemma}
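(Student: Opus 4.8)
The plan is to show that the single-iteration rejection probability is at least $\Omega(\epsilon/(nL))$ for the unit cube (resp. $\Omega(\epsilon/(nmL))$ for the hypergrid), and then amplify by repeating $q(\Omega,L,\epsilon)$ times. So the core task is to lower bound $p := \Pr_{x,i}[\partial_i f(x) < 0]$.

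**First I would** invoke the directed Poincaré inequality. For the unit cube, \cref{cor:main-directed-inequality-continuous} combined with \cref{prop:rearrangement-is-optimal} gives $\epsilon < d_1(f) \le \Ex{\abs{f-f^*}} \le 2\Ex{\|\grad^- f\|_1}$, hence $\Ex{\|\grad^- f\|_1} > \epsilon/2$. Writing this out, $\frac{1}{n}\sum_{i=1}^n \Exu{x}{\abs{\partial_i^- f(x)}} = \frac{1}{n}\Ex{\|\grad^- f\|_1} > \frac{\epsilon}{2n}$. Equivalently, $\Exu{x,i}{\abs{\partial_i^- f(x)}} > \frac{\epsilon}{2n}$, where $i$ is uniform in $[n]$. (For the hypergrid, \cref{thm:hypergrid-inequality} and \cref{prop:rearrangement-is-optimal-discrete} give the same with an extra factor of $m$: $\Exu{x,i}{\abs{\partial_i^- f(x)}} > \frac{\epsilon}{2nm}$.)

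**Next**, I use the Lipschitz assumption to convert this expectation bound into a probability bound. Since $\Lip_1(f) \le L$, every partial derivative satisfies $\abs{\partial_i f(x)} \le L$ almost everywhere (in the discrete case, $\abs{\partial_i f(x)} = \abs{f(x+e_i)-f(x)} \le L\|e_i\|_1 = L$), so $\abs{\partial_i^- f(x)} \le L$ as well. The random variable $Z := \abs{\partial_i^- f(x)}$ (over uniform $x$ and $i$) is therefore bounded in $[0,L]$ with $\Ex{Z} > \frac{\epsilon}{2n}$. A standard averaging/Markov-type argument then gives $\Pr[Z > 0] \ge \Ex{Z}/L > \frac{\epsilon}{2nL}$. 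But $Z > 0$ exactly when $\partial_i f(x) < 0$, i.e.\ when the tester rejects in that iteration. Hence $p > \frac{\epsilon}{2nL}$ (resp. $p > \frac{\epsilon}{2nmL}$ on the hypergrid). One technical point to handle: in the continuous case the tester samples $x$ uniformly from $[0,1]^n$ and queries the directional derivative oracle, which returns $\bot$ on the measure-zero set where $f$ is not differentiable; since this set has measure zero it does not affect the probability computation, and by Rademacher's theorem $\grad f(x)$ is well-defined a.e., so all the integrals above are legitimate.

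**Finally**, amplification: the probability that all $q(\Omega,L,\epsilon)$ independent iterations fail to reject is at most $(1-p)^{q} \le e^{-pq}$. With $q = \Theta(nL/\epsilon)$ (resp. $\Theta(nmL/\epsilon)$) and $p > \frac{\epsilon}{2nL}$ (resp. $\frac{\epsilon}{2nmL}$), we get $pq = \Omega(1)$, and choosing the constant in $q$ large enough makes $e^{-pq} \le 1/3$, so the tester rejects with probability at least $2/3$. **The main obstacle** is not any single deep step — each piece is short — but rather assembling the chain of earlier results correctly: one must chain the Poincaré inequality with the optimality of $f^*$ (\cref{prop:rearrangement-is-optimal}), pass from the $\ell^1$-norm-of-gradient expectation to the per-coordinate expectation under the uniform choice of $i$, and then correctly apply the Lipschitz bound $\abs{\partial_i f} \le L$ (which is where the $\ell^1$ Lipschitz constant $\Lip_1$, as opposed to $\Lip_2$, is exactly the right normalization); care is also needed that the factor of $m$ in the hypergrid inequality is absorbed into $q([m]^n,L,\epsilon)$, which indeed carries the matching $m$.
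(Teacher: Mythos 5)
Your proof is correct and matches the paper's argument in all essentials: both chain $d_1(f) \le \Ex{\abs{f-f^*}} \le 2\Ex{\|\grad^- f\|_1}$ via the directed Poincar\'e inequality, bound each partial derivative by $L$ using $\Lip_1(f) \le L$, and conclude that the rejection probability of a single iteration is at least $\epsilon/(2nL)$ (resp.\ $\epsilon/(2nmL)$); the paper frames this as a proof by contradiction about $\sum_i \mu(S_i)$ while you use a direct Markov-type averaging, but these are the same computation. One small nit: the inequality $d_1(f) \le \Ex{\abs{f-f^*}}$ follows from $f^*$ being monotone (\cref{prop:rearrangement-is-monotone}), not from \cref{prop:rearrangement-is-optimal} which gives the reverse direction $\Ex{\abs{f-f^*}} \le 2d_1(f)$ and is not needed here.
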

\begin{proof}
    \textbf{Continuous case.} Suppose $\Omega = [0,1]^n$.
    Let $D \subseteq [0,1]^n$ be a measurable set such that $f$ is differentiable on $D$ and
    $\mu(D) = 1$, which exists by Rademacher's theorem. For each $i \in [n]$, let
    $S_i \define \{ x \in D : \partial_i f(x) < 0 \}$. A standard argument gives that
    each $S_i \subset \bR^n$ is a measurable set. We claim that
    \[
        \sum_{i=1}^n \mu(S_i) > \frac{\epsilon}{2L} \,.
    \]
    Suppose this is not the case. By the Lipschitz continuity of $f$, we have that
    $\abs*{\partial_i f(x)} \le L$ for every $x \in D$ and $i \in [n]$, and therefore
    \[
        2 \sum_{i=1}^n \Ex{\abs*{\partial^-_i f}}
        \le 2L \sum_{i=1}^n \mu(S_i)
        \le \epsilon \,.
    \]
    On the other hand, the assumption that $d_1(f) > \epsilon$ and
    \cref{cor:main-directed-inequality-continuous} yield
    \[
        \epsilon < \Ex{\abs*{f-f^*}} \le 2\Ex{\|\grad^- f\|_1}
        = 2\sum_{i=1}^n \Ex{\abs*{\partial^-_i f}} \,,
    \]
    a contradiction. Therefore the claim holds.

    Now, the probability that one iteration of the tester rejects is the probability that $x \in
    S_{i}$ when $x$ and $i$ are sampled uniformly at random. This probability is
    \[
        \Pr{\text{Iteration rejects}}
        = \sum_{j=1}^n \Pru{i}{i = j} \Pru{x}{x \in S_j}
        = \sum_{j=1}^n \frac{1}{n} \cdot \mu(S_j)
        > \frac{\epsilon}{2nL} \,.
    \]
    Thus $\Theta\left(\frac{nL}{\epsilon}\right)$ iterations suffice to reject with high
    constant probability.

    \textbf{Discrete case.} Suppose $\Omega = [m]^n$. The proof proceeds the same way, but we give
    it explicitly for convenience. For each $i \in [n]$, let $S_i \define \{ x \in
    [m]^n : \partial_i f(x) < 0\}$. We then claim that
    \[
        \sum_{i=1}^n \mu(S_i) > \frac{\epsilon}{2mL} \,.
    \]
    Indeed, if this is not the case, then since $\abs*{\partial_i f(x)} \le L$ for every $i$ and
    $x$, we get that
    \[
        2\sum_{i=1}^n \Ex{\abs*{\partial^-_i f}} \le 2L \sum_{i=1}^n \mu(S_i)
        \le \frac{\epsilon}{m} \,.
    \]
    On the other hand, the assumption that $d_1(f) > \epsilon$ and \cref{thm:hypergrid-inequality}
    yield
    \[
        \frac{\epsilon}{m} < \frac{1}{m} \cdot \Ex{\abs*{f-f^*}}
        \le \frac{1}{m} \cdot 2m \Ex{\|\grad^- f\|_1}
        = 2 \sum_{i=1}^n \Ex{\abs*{\partial^-_i f}} \,,
    \]
    a contradiction. Thus the claim holds, and the probability that one iteration of the tester
    rejects is
    \[
        \Pr{\text{Iteration rejects}}
        = \sum_{j=1}^n \Pru{i}{i=j} \Pru{x}{x \in S_j}
        = \sum_{j=1}^n \frac{1}{n} \cdot \mu(S_j)
        > \frac{\epsilon}{2nmL} \,.
    \]
    Thus $\Theta\left(\frac{nmL}{\epsilon}\right)$ iterations suffice to reject with high constant
    probability.
\end{proof}

\section{$L^1$-testing monotonicity on the line}
\label{sec:line}

In this section, we show the upper bounds for $L^1$ monotonicity testing on the line from
\cref{thm:tester-line}. The main idea is to reduce from $L^1$ testing to Hamming testing by using
the Lipschitz constant to show that, if the $L^1$ distance to monotonicity is large, then the
Hamming distance to monotonicity must be somewhat large as well; combined with the Hamming testers
of \cite{EKKRV98,Bel18}, this yields an $L^1$ tester for the discrete line $[m]$.

To obtain a tester for the continuous line $[0,m]$, we furthermore apply a discretization strategy
inspired by the domain reduction and downsampling ideas from \cite{BCS20,HY22}. The idea is that,
given $\epsilon$ and $L$, we may impose a fine enough grid on $[0,m]$ such that the function defined
on that grid preserves the $L^1$ distance to monotonicity compared to the continuous function;
again, the Lipschitz assumption is essential for this step.

In this section, we will follow the convention of denoting functions on continuous domains by $f,
g$, and those on discrete domains by $\overline f, \overline g$. Depending on the context, it will
be clear whether $\overline f$ is an arbitrary function or one obtained by discretizing a particular
function $f$. We will also write ``$f$ is $L$-Lipschitz'' without specifying the $\ell^p$ geometry,
since all choices are equivalent in one dimension.

\begin{lemma}[Discretization preserves distance to monotonicity]
    \label{lemma:continuous-to-discrete}
    Let $m, L, \epsilon > 0$ and let $f : [0,m] \to \bR$ be an $L$-Lipschitz function.
    Let the \emph{discretized function} $\overline f : [m'] \to \bR$, for suitable choice of
    $m' = \Theta\left( mL / \epsilon \right)$, be given by $\overline f(i) = f(\delta i)$
    for each $i \in [m']$, where $\delta \define m / m'$.
    Then if $d_1(f) > \epsilon$, we have $d_1(\overline f) > \epsilon/4$.
\end{lemma}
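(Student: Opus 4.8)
The plan is to prove the contrapositive: assuming $d_1(\overline f) \le \epsilon/4$, I will produce a monotone $g : [0,m] \to \bR$ with $d_1(f,g) \le \epsilon$, so that $d_1(f) \le \epsilon$. Fix a monotone (non-decreasing) $\overline g : [m'] \to \bR$ with $d_1(\overline f, \overline g) = \frac{1}{m'}\sum_{i=1}^{m'}\abs{\overline f(i) - \overline g(i)} \le \epsilon/4$; this exists since the monotone functions on the finite domain $[m']$ form a closed set on which $\overline h \mapsto d_1(\overline f, \overline h)$ attains its infimum (or one may take a near-optimum and let the slack tend to $0$).

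Next I would lift $\overline g$ to $[0,m]$ by piecewise-constant interpolation. Writing $\delta \define m/m'$, the grid points are $\delta, 2\delta, \dotsc, m'\delta = m$ and the intervals $J_i \define ((i-1)\delta, i\delta]$, $i \in [m']$, partition $(0,m]$. Set $g(x) \define \overline g(i)$ for $x \in J_i$ and $g(0) \define \overline g(1)$. Since $\overline g$ is non-decreasing, $g$ is a non-decreasing step function, hence monotone, so $d_1(f) \le d_1(f,g)$.

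The core estimate then combines the Lipschitz bound on each cell with the discrete distance. For $x \in J_i$ one has $\abs{x - \delta i} \le \delta$, so $L$-Lipschitzness and $f(\delta i) = \overline f(i)$ give
\[
    \abs{f(x) - g(x)} \le \abs{f(x) - f(\delta i)} + \abs{\overline f(i) - \overline g(i)} \le L\delta + \abs{\overline f(i) - \overline g(i)} \,.
\]
Integrating over each $J_i$ (of length $\delta$), summing over $i \in [m']$, and using $m'\delta = m$,
\[
    d_1(f,g) = \frac{1}{m}\sum_{i=1}^{m'}\int_{J_i}\abs{f(x)-g(x)}\odif x \le \frac{1}{m}\sum_{i=1}^{m'}\delta\bigl(L\delta + \abs{\overline f(i)-\overline g(i)}\bigr) = L\delta + d_1(\overline f,\overline g) \le \frac{mL}{m'} + \frac{\epsilon}{4} \,.
\]
Taking $m' \define \ceil{4mL/(3\epsilon)} = \Theta(mL/\epsilon)$ makes $mL/m' \le 3\epsilon/4$, whence $d_1(f) \le d_1(f,g) \le \epsilon$, which is the contrapositive of the claim.

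I do not anticipate a real obstacle: this is a routine downsampling argument. The only points that need a little care are keeping track of the fact that $d_1$ on $[0,m]$ and on $[m']$ are normalized by $m$ and $m'$ respectively, noting that the piecewise-constant extension of a monotone sequence stays monotone, and justifying the existence of an optimal discrete monotone approximant (sidestepped by a near-optimum). Also $f$, being continuous on the compact $[0,m]$, is bounded, so everything is in $L^1$; if one wants $g$ bounded explicitly one can first clamp $\overline g$ to $[\min_i \overline f(i), \max_i \overline f(i)]$, which preserves monotonicity and does not increase $d_1(\overline f, \overline g)$.
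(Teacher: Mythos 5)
Your proof is correct, and it takes a genuinely (if mildly) different route than the paper's. You argue the contrapositive, pick $\overline g$ to be a (near-)optimal monotone approximant to $\overline f$, lift it to $[0,m]$ by \emph{piecewise-constant} interpolation, and run the per-cell Lipschitz estimate. The paper instead argues directly from $d_1(f) > \epsilon$: it takes $\overline g$ to be the \emph{monotone rearrangement} of $\overline f$, lifts it by \emph{piecewise-linear} interpolation (and verifies the lift is $L$-Lipschitz), derives $\epsilon < d_1(f,g) \le \tfrac{2mL}{m'} + d_1(\overline f, \overline g)$, and then needs an extra ingredient -- the analogue of \cref{prop:rearrangement-is-optimal-discrete}, namely $d_1(\overline f, \overline g) \le 2 d_1(\overline f)$ -- to translate ``rearrangement is far'' into ``$\overline f$ is far from monotone.'' Your choice of an optimal $\overline g$ sidesteps that lemma entirely and saves the associated factor of $2$ (and the piecewise-constant lift also drops the second $L\delta$ term, since $g$ is locally constant), which is why your constants are a bit better; the paper's version has the virtue of naming an explicit witness and reusing the rearrangement machinery already built in \cref{sec:inequalities}. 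Two small things worth noting: the statement requires $m' = \Theta(mL/\epsilon)$, and your $\lceil 4mL/(3\epsilon)\rceil$ is $\Theta(mL/\epsilon)$ only when $mL/\epsilon$ is bounded below by a constant -- the paper handles the complementary regime $mL/\epsilon \le 1$ in a footnote by observing the problem is vacuous there (any $L$-Lipschitz $f$ has $d_1(f) \le mL/2$), and you should do the same. Your remark on the existence of an optimal $\overline g$ via clamping is the right fix; the set of monotone $\bR$-valued sequences is closed but unbounded, and clamping to $[\min_i \overline f(i), \max_i \overline f(i)]$ restores compactness without increasing the distance.
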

\begin{proof}
    Let $m' \in \left[ cmL/\epsilon, 2cmL/\epsilon \right]$ be an integer, where $c$ is a
    sufficiently large universal constant.\footnote{We may assume that $mL/\epsilon > 1$,
    otherwise the problem is trivial: the maximum $L^1$ distance from monotonicity attainable
    by an $L$-Lipschitz function is $\frac{1}{m} \cdot \frac{m \cdot mL}{2} = mL/2$.
    Therefore the given interval does contain an integer.} Let $\overline f : [m'] \to
    \bR$ be the function given in the statement, and suppose $d_1(f) > \epsilon$.

    Let $\overline g : [m'] \to \bR$ be the monotone rearrangement of $\overline f$.
    It is easy to check that $\overline g$ is Lipschitz with at most the Lipschitz constant of
    $\overline f$. Let $g : [0,m] \to \bR$ be the following
    piecewise linear function whose discretization is $\overline g$:
    for each $i \in [m']$ we set $g(\delta i) = \overline g(i)$, and $g$ is the linear spline
    induced by these points elsewhere (and constant in the segment $[0,\delta]$).
    Then clearly $g$ is monotone, and thus $d_1(f, g) > \epsilon$.
    Moreover, $g$ is $L$-Lipschitz,
    since its steepest slope is the same as that of $\overline g$
    up to the coordinate changes.\footnote{Formally, if $f$ is $L$-Lipschitz,
    then $\overline f$
    is $L'$-Lipschitz for $L' = Lm/m'$, hence so is its monotone rearrangement $\overline g$.
    Then since the steepest slope of $g$ must come from two vertices of the spline, $g$ is
    Lipschitz with Lipschitz constant $L'm'/m = L$.} Hence, we have
    \begin{align*}
        \epsilon &< d_1(f, g)
        = \frac{1}{m} \int_0^m \abs*{f(x) - g(x)} \odif x
        = \frac{1}{m} \sum_{i=1}^{m'} \int_{(i-1) \delta}^{i \delta} \abs*{f(x) - g(x)} \odif x \\
        &= \frac{1}{m} \sum_{i=1}^{m'} \int_{(i-1) \delta}^{i \delta}
            \abs*{\left(f(i\delta) \pm L\delta\right) - \left(g(i\delta) \pm L\delta\right)}
            \odif x
            \qquad \qquad \qquad \qquad \qquad \qquad \text{(Lipschitz property)} \\
        &\le \frac{1}{m} \sum_{i=1}^{m'} \int_{(i-1) \delta}^{i \delta} \left[
            \abs*{\overline f(i) - \overline g(i)} + 2L\delta \right] \odif x \\
        &= \frac{1}{m} \Big[
            2m'L\delta^2 + \delta \sum_{i=1}^{m'} \abs*{\overline f(i) - \overline g(i)}
            \Big]
        = \frac{2mL}{m'} + \frac{1}{m'} \sum_{i=1}^{m'} \abs*{\overline f(i) - \overline g(i)}
        \le \frac{2\epsilon}{c} + d_1(\overline f, \overline g) \,,
    \end{align*}
    where we used the notation $a \pm b$ to denote any number in the interval $[a-b, a+b]$.

    We may set $c \ge 4$ so that $2\epsilon/c \le \epsilon/2$. Therefore, we obtain
    $d_1(\overline f, \overline g) > \epsilon/2$.
    Since $\overline g$ is the monotone rearrangement of $\overline f$,
    \cref{prop:rearrangement-is-optimal-discrete} implies that
    $d_1(\overline f, \overline g) \le 2 d_1(\overline f)$. We conclude that
    $d_1(\overline f) > \epsilon/4$, as desired.
\end{proof}

\begin{observation}
    \label{obs:discrete-lipschitz}
    The function $\overline f$ defined in \cref{lemma:continuous-to-discrete} is
    $\epsilon$-Lipschitz: since $m' \ge mL/\epsilon$, we have
    \[
        \abs*{\overline f(i) - \overline f(i+1)}
        = \abs*{f(\delta i) - f\left(\delta (i+1) \right)}
        \le L \delta = Lm/m' \le \epsilon \,.
    \]
\end{observation}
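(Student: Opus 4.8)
The plan is to unfold the definition of $\overline f$ from \cref{lemma:continuous-to-discrete} and invoke the Lipschitz property of $f$ directly. Recall that $\overline f(i) = f(\delta i)$ with $\delta = m/m'$, and that the proof of \cref{lemma:continuous-to-discrete} chose $m'$ with $m' \ge c\, mL/\epsilon$ for a constant $c \ge 4 \ge 1$, so in particular $m' \ge mL/\epsilon$. First I would note that the domain $[m']$ is one-dimensional, so the choice of $\ell^p$ geometry in the Lipschitz definition is immaterial (cf.\ \cref{def:lipschitz}); it therefore suffices to bound $\abs*{\overline f(i) - \overline f(j)} \le \epsilon \abs{i - j}$ for all $i, j \in [m']$, and by the triangle inequality it is enough to treat consecutive points $j = i+1$.

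For consecutive points, I would write $\abs*{\overline f(i) - \overline f(i+1)} = \abs*{f(\delta i) - f(\delta(i+1))}$ and apply the $L$-Lipschitz bound on $f$ to get $\abs*{f(\delta i) - f(\delta(i+1))} \le L \cdot \abs{\delta i - \delta(i+1)} = L\delta$. Substituting $\delta = m/m'$ and using $m' \ge mL/\epsilon$ yields $L\delta = Lm/m' \le \epsilon$, which is exactly the claimed bound. There is no genuine obstacle here: the statement is an immediate consequence of the Lipschitz hypothesis on $f$ together with the lower bound on the grid resolution $m'$ already established in the proof of \cref{lemma:continuous-to-discrete}. The only point requiring a moment's care is citing the correct inequality $m' \ge mL/\epsilon$ rather than $m' \le 2cmL/\epsilon$, i.e.\ using the left endpoint of the interval in which $m'$ was chosen.
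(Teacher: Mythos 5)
Your proposal is correct and takes essentially the same approach as the paper: unfold $\overline f(i) = f(\delta i)$, apply the $L$-Lipschitz bound on $f$ to consecutive grid points to get $L\delta = Lm/m'$, and invoke $m' \ge mL/\epsilon$ to conclude. The extra remarks about the $\ell^p$ choice being irrelevant in one dimension and reducing to consecutive points via the triangle inequality are fine but not strictly needed, as the paper's statement only concerns consecutive points.
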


\begin{lemma}[Far in $L^1$ distance imples far in Hamming distance]
    \label{lemma:l1-to-hamming}
    Let $\overline f : [m'] \to \bR$ be an $L'$-Lipschitz function.
    Then $d_0(\overline f) \ge \sqrt{\frac{d_1(\overline f)}{m'L'}}$.
\end{lemma}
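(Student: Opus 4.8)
The plan is to show the contrapositive-flavoured bound: the $L^1$ distance to monotonicity can be controlled by the Hamming distance to monotonicity times an error that is at most $m'L'$. Concretely, I would take an arbitrary monotone function $\overline g : [m'] \to \bR$ witnessing the Hamming distance, i.e. with $d_0(\overline f, \overline g) = d_0(\overline f)$ (such an optimal $\overline g$ exists since the domain is finite), and bound $d_1(\overline f, \overline g)$ from above in terms of $d_0(\overline f, \overline g)$. This immediately gives $d_1(\overline f) \le d_1(\overline f, \overline g)$, and rearranging the resulting inequality yields the stated square-root bound.

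The key estimate is pointwise control of $|\overline f(i) - \overline g(i)|$ on the set $D \define \{ i \in [m'] : \overline f(i) \ne \overline g(i) \}$ of disagreement points, whose normalized size is $d_0(\overline f)$. The idea is that $D$ is a union of maximal ``runs'' of consecutive indices; on each run $[a,b] \subseteq [m']$, I want to say that $\overline f$ and $\overline g$ agree at (or just outside) the endpoints, so that on the run they cannot drift apart by more than roughly $L' \cdot (\text{run length})$, since $\overline f$ changes by at most $L'$ per step and one can also arrange $\overline g$ to be $L'$-Lipschitz (indeed, the standard choice of optimal monotone $\overline g$ can be taken to be the ``best monotone approximation'' which inherits Lipschitzness, or one argues directly that the drift on a run of length $\ell$ is at most $L'\ell$). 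Summing the $L^1$ contribution of a run of length $\ell$: it contributes at most $\frac{1}{m'} \sum_{j=1}^{\ell} L' j \le \frac{L' \ell^2}{2m'}$, i.e. the per-point cost is at most $\frac{L'\ell}{2m'}$ on average. Since $\sum_{\text{runs}} \ell = |D| = m' d_0(\overline f)$ and each $\ell \le m'$, one gets $d_1(\overline f, \overline g) \le \frac{L'}{2m'} \sum_{\text{runs}} \ell^2 \le \frac{L'}{2m'} \cdot m' \cdot |D| \cdot \frac{1}{m'} \cdot m'$... — here I need to be careful with the convexity step: to maximize $\sum \ell^2$ subject to $\sum \ell \le m' d_0$ and $\ell \le m'$, put all the mass in one run, giving $\sum \ell^2 \le m' \cdot m' d_0 = (m')^2 d_0$ only if $d_0 \le 1$, which it is. So $d_1(\overline f, \overline g) \lesssim \frac{L'}{m'} \cdot (m')^2 d_0 / m' = L' m' d_0$; tracking constants carefully gives $d_1(\overline f) \le m' L' \, d_0(\overline f)^2$ (possibly times a constant I'd absorb or sharpen), hence $d_0(\overline f) \ge \sqrt{d_1(\overline f)/(m'L')}$.

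The main obstacle I anticipate is making the ``runs'' argument fully rigorous: I need the optimal monotone $\overline g$ to be such that at the boundary of each disagreement run the two functions match, and to control how far $\overline g$ itself can slope within a run. The clean way is to not fix $\overline g$ to be literally optimal for Hamming distance, but to construct an explicit monotone comparison function from $\overline f$: take any monotone $\overline g^\star$ achieving $d_0(\overline f)$, observe that on each maximal run of disagreement the values of $\overline g^\star$ are squeezed between the agreeing values of $\overline f$ at the run's immediate exterior, so $\overline g^\star$'s total variation on a run of length $\ell$ is at most that of $\overline f$ there, which is at most $L'(\ell+1)$. Then $|\overline f(i) - \overline g^\star(i)|$ at the $j$-th point of a run (counting from the nearer endpoint) is at most $L' \cdot \min(j, \ell+1-j) \le L'(\ell+1)/2$ roughly, and more usefully at most $2L'\min(j,\ell+1-j)$; summing gives the quadratic-in-$\ell$ bound. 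The endpoint bookkeeping (runs touching $1$ or $m'$, off-by-one factors) is the fiddly part, but none of it affects the stated bound up to the universal treatment already built into the lemma's clean form, so I'd present the run decomposition, the per-run quadratic bound, and the single convexity step, and let the constants fall out.
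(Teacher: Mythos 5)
Your plan is the same as the paper's: decompose the disagreement set into maximal contiguous runs, bound the $L^1$ contribution of each run quadratically in its length using the Lipschitz constant, and sum. Two points are worth flagging.

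First, the convexity step as written does not yield the claimed bound. You write $\sum_{\text{runs}} \ell^2 \le (\max_\ell \ell)\cdot\sum_{\text{runs}}\ell \le m'\cdot m'd_0 = (m')^2 d_0$, which is \emph{linear} in $d_0$; plugging in gives $d_1 \lesssim L' m' d_0$ and hence only $d_0 \gtrsim d_1/(m'L')$, not the square-root bound. The extremal configuration (``all mass in one run'') actually has run length $m'd_0$, not $m'$, so the correct one-line estimate is simply $\sum_j |I_j|^2 \le \bigl(\sum_j |I_j|\bigr)^2 = |S|^2 = (m'd_0)^2$, which immediately gives $d_1 \le L' m' d_0^2$ and the stated conclusion. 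You state the right final inequality ``$d_1 \le m'L'd_0^2$'' but the chain of reasoning you give does not produce it.

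Second, the paper avoids the ``how Lipschitz is the optimal $\overline g$?'' concern entirely by not using an optimal $\overline g$: it constructs a specific monotone comparison function by setting $\overline g(i) = \overline f(i^*)$ for $i\in S$, where $i^*$ is the first index outside $i$'s run (rightward if possible, else leftward). Then $|\overline f(i) - \overline g(i)| = |\overline f(i) - \overline f(i^*)| \le L'|i-i^*| \le L'|I_j|$ uses only the Lipschitzness of $\overline f$; no claim about $\overline g$'s slope or the squeezing of the optimal $\overline g^\star$ is needed, and the boundary-run bookkeeping you worry about collapses to the definition of $i^*$. Your second (``clean way'') paragraph heads in this direction but still leans on total-variation arguments about $\overline g^\star$ that the explicit flat-extension construction renders unnecessary.
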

\begin{proof}
    Let $S \subseteq [m']$ be a set such that 1) $|S| = d_0(\overline f) m'$ and 2) it suffices to
    change $\overline f$ on inputs in $S$ to obtain a monotone function; note that $S$ exists by
    definition of Hamming distance. Write $S$ as the union of maximal, pairwise disjoint contiguous
    intervals, $S = I_1 \cup \dotsm \cup I_k$.

    We define a monotone function $\overline g : [m'] \to \bR$ as follows. For each $i \in S$, set
    $i^* \in [m'] \setminus S$ as follows: if there exists $j \in [m'] \setminus S$ such that $j >
    i$, pick the smallest such $j$; otherwise, pick the largest $j \in [m'] \setminus S$. In other
    words, $i^*$ is obtained by picking a direction (right if possible, otherwise left) and choosing
    the first point outside the interval $I_k$ that contains $i$. Now, define $\overline g$ by
    \[
        \overline g(i) = \begin{cases}
            \overline f(i) & \text{if } i \not\in S \\
            \overline f(i^*) & \text{if } i \in S \,.
        \end{cases}
    \]

    We first claim that $\overline g$ is monotone. Indeed the sequence of values
    $\left( f(i) \right)_{i \in [m'] \setminus S}$ (taken in order of increasing $i$) is monotone
    by our first assumption on $S$, and since $\overline g$ is obtained by extending some of these
    values into flat regions, the resulting function is also monotone.
    Therefore we can upper bound the $L^1$ distance of $\overline f$ to monotonicity by
    \begin{align*}
        d_1(\overline f) \le d_1(\overline f, \overline g)
        &= \frac{1}{m'} \sum_{i=1}^{m'} \abs*{\overline f(i) - \overline g(i)}
        = \frac{1}{m'} \sum_{j=1}^k \sum_{i \in I_j} \abs*{\overline f(i) - \overline f(i^*)} \\
        &= \frac{1}{m'} \sum_{j=1}^k \sum_{i \in I_j}
            \abs*{\left( \overline f(i^*) \pm L' \abs*{i-i^*} \right) - \overline f(i^*)}
            & \text{(Lipschitz property)} \\
        &\le \frac{L'}{m'} \sum_{j=1}^k \sum_{i \in I_j} \abs*{i-i^*}
        \le \frac{L'}{m'} \sum_{j=1}^k \sum_{i \in I_j} \abs*{I_j}
        = \frac{L'}{m'} \sum_{j=1}^k \abs*{I_j}^2 \\
        &\le \frac{L'}{m'} \cdot |S|^2
            & \text{(Since $\abs*{I_1} + \dotsm + \abs*{I_k} = |S|$)} \\
        &= L' d_0(\overline f)^2 m' \,.
    \end{align*}
    The claim follows.
\end{proof}

Combining the two lemmas with the classical Hamming monotonicity tester of \cite{EKKRV98}, the
following theorem establishes \cref{thm:tester-line} for the continuous domain $[0,m]$:

\begin{theorem}
    \label{thm:tester-line-continuous}
    There exists a nonadaptive one-sided $L^1$ monotonicity tester for $L$-Lipschitz functions $f :
    [0,m] \to \bR$ with query complexity $O\left( \sqrt{\frac{mL}{\epsilon}} \log\left(
    \frac{mL}{\epsilon} \right) \right)$.
\end{theorem}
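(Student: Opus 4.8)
The plan is to compose the two lemmas just proved with a black-box classical Hamming monotonicity tester on the discrete line. Given the input $L$-Lipschitz function $f : [0,m] \to \bR$ and parameters $m, L, \epsilon$, I would first fix $m' = \Theta(mL/\epsilon)$ as in \cref{lemma:continuous-to-discrete} and pass to the discretized function $\overline f : [m'] \to \bR$, $\overline f(i) = f(\delta i)$ with $\delta = m/m'$; note $m'$ (hence the query grid $\{\delta, 2\delta, \dots, m'\delta\}$) is determined in advance, and each evaluation of $\overline f$ costs exactly one value query to $f$. If $d_1(f) > \epsilon$, then \cref{lemma:continuous-to-discrete} gives $d_1(\overline f) > \epsilon/4$, and by \cref{obs:discrete-lipschitz} the function $\overline f$ is $\epsilon$-Lipschitz, so \cref{lemma:l1-to-hamming} with $L' = \epsilon$ yields
\[
    d_0(\overline f) \;\ge\; \sqrt{\frac{d_1(\overline f)}{m'\epsilon}} \;>\; \sqrt{\frac{\epsilon/4}{m'\epsilon}} \;=\; \frac{1}{2\sqrt{m'}} \;=:\; \eta \,.
\]

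Next I would run the classical nonadaptive one-sided Hamming monotonicity tester of \cite{EKKRV98} on domain $[m']$ with proximity parameter $\eta$, simulating each of its queries to $\overline f$ by one value query to $f$. That tester has query complexity $O((\log N)/\eta)$ on domain $[N]$, so here it uses
\[
    O\!\left(\frac{\log m'}{\eta}\right) = O\!\left(\sqrt{m'}\,\log m'\right) = O\!\left(\sqrt{\tfrac{mL}{\epsilon}}\,\log\tfrac{mL}{\epsilon}\right)
\]
queries, using $m' = \Theta(mL/\epsilon)$ (and recalling from the footnote of \cref{lemma:continuous-to-discrete} that we may assume $mL/\epsilon > 1$, so the logarithm makes sense and $\sqrt{m'} = \Theta(\sqrt{mL/\epsilon})$, $\log m' = \Theta(\log(mL/\epsilon))$). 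For correctness: if $f$ is monotone then $\overline f$ is monotone, since $i \mapsto \delta i$ is increasing, so the EKKRV tester—which rejects only upon exhibiting an explicit violating pair—accepts with probability $1$, giving one-sided error; if $d_1(f) > \epsilon$ then by the display above $\overline f$ is $\eta$-far from monotone in Hamming distance, so the tester rejects with probability at least $2/3$. Nonadaptivity is inherited from the EKKRV tester together with the fact that the reduction fixes the grid up front.

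I do not expect a genuine obstacle: the argument is the composition \cref{lemma:continuous-to-discrete} $\to$ \cref{obs:discrete-lipschitz} $\to$ \cref{lemma:l1-to-hamming} $\to$ \cite{EKKRV98}, with a $1$-to-$1$ query simulation. The only care required is bookkeeping—tracking that the discretized Lipschitz constant is exactly $L' = \epsilon$ and the distance only shrinks to $\epsilon/4$, so that the Hamming proximity parameter lands at $\eta = \Theta(1/\sqrt{m'})$ and $\log(m')/\eta$ collapses to the claimed $O(\sqrt{mL/\epsilon}\,\log(mL/\epsilon))$—and the mild point that we feed the Hamming tester the \emph{lower bound} $\eta$ on $d_0(\overline f)$ rather than its exact value, which is harmless since a tester for proximity $\eta$ also rejects inputs that are even farther from monotone.
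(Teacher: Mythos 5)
Your proposal is correct and follows the same route as the paper: discretize via \cref{lemma:continuous-to-discrete}, invoke \cref{obs:discrete-lipschitz} and \cref{lemma:l1-to-hamming} to convert the $L^1$ gap into a Hamming gap, and hand off to the \cite{EKKRV98} tester with proximity parameter $\Theta(1/\sqrt{m'}) = \Theta(\sqrt{\epsilon/(mL)})$. The constant-factor bookkeeping and the observation that a one-sided Hamming tester composes cleanly with the grid simulation all match the paper's argument.
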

\begin{proof}
    The tester works as follows. It first fixes $m' = \Theta\left(mL/\epsilon\right)$ as given by
    \cref{lemma:continuous-to-discrete}. Let $\overline f : [m'] \to \bR$ be the discretization
    defined therein ($\overline f$ is not explicitly computed upfront, but will rather
    be queried as needed). The algorithm then simulates the (nonadaptive, one-sided)
    monotonicity tester of~\cite{EKKRV98} on
    the function $\overline f$ with proximity parameter
    $\epsilon' = \Theta\left( \sqrt{\frac{\epsilon}{mL}} \right)$ (the constant may easily be
    made explicit), producing $f(\delta i) = f(im/m')$ whenever the simulation queries
    $\overline f(i)$. The algorithm returns the result produced by the simulated tester.
    The query complexity claim follows from the fact that the tester of \cite{EKKRV98} has query
    complexity $O\left( \frac{1}{\epsilon'} \log m' \right)$.

    We now show correctness. When $f$ is monotone, so is $\overline f$, so the algorithm will
    accept since the tester of \cite{EKKRV98} has one-sided error. Now, suppose
    $d_1(f) > \epsilon$. Then $d_1(\overline f) > \epsilon/4$ by
    \cref{lemma:continuous-to-discrete}. Moreover, since $\overline f$ is $\epsilon$-Lipschitz
    by \cref{obs:discrete-lipschitz}, \cref{lemma:l1-to-hamming} implies that
    \[
        d_0(\overline f) \ge \sqrt{\frac{d_1(\overline f)}{m' \epsilon}}
        > \sqrt{\frac{1}{4 m'}} = \Omega\left( \sqrt{\frac{\epsilon}{mL}} \right) \,.
    \]
    Since this is the proximity parameter $\epsilon'$ used to instantiate the \cite{EKKRV98} tester,
    the algorithm will reject with high constant probability, as needed.
\end{proof}

\cref{lemma:l1-to-hamming} itself also implies \cref{thm:tester-line} for the discrete domain $[m]$.
This time, we use the Hamming tester of \cite{Bel18} to obtain a slightly more precise query
complexity bound\footnote{One may check that this refinement would have no effect in
\cref{thm:tester-line-continuous}}.

\begin{theorem}
    There exists a nonadaptive one-sided $L^1$ monotonicity tester for $L$-Lipschitz functions
    $\overline f : [m] \to \bR$ with query complexity
    $O\left( \sqrt{\frac{mL}{\epsilon}} \log\left( \frac{m \epsilon}{L} \right) \right)$ when
    $\epsilon/L \ge 4/m$, and $O(m)$ otherwise.
\end{theorem}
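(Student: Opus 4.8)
The plan is to reuse the reduction-to-Hamming-testing strategy of \cref{thm:tester-line-continuous}, but since the domain $[m]$ is already discrete there is no need for the discretization step of \cref{lemma:continuous-to-discrete}: we apply \cref{lemma:l1-to-hamming} directly with $m' = m$ and $L' = L$. Throughout we may assume $\epsilon < mL$, since an $L$-Lipschitz function $\overline f : [m] \to \bR$ has range of diameter at most $(m-1)L < mL$, hence $d_1(\overline f) \le d_1(\overline f, \overline f^*) < mL$, so if $\epsilon \ge mL$ the all-accept algorithm is already a valid (zero-query, one-sided) tester.

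I would then split into the two regimes in the statement. If $\epsilon/L < 4/m$, the tester queries all of $[m]$, reconstructs $\overline f$ exactly, and accepts iff $\overline f$ is monotone; this is nonadaptive, has zero error, and uses $O(m)$ queries, giving the second clause. If $\epsilon/L \ge 4/m$, set $\epsilon' \define \sqrt{\epsilon/(mL)}$. Then $\epsilon' \in (0,1)$ and, crucially, $\epsilon' m = \sqrt{m\epsilon/L} \ge 2$, which is exactly the regime in which the Hamming monotonicity tester of \cite{Bel18} on $[m]$ with proximity parameter $\epsilon'$ attains query complexity $O\!\left(\tfrac{1}{\epsilon'}\log(\epsilon' m)\right)$. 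The tester simulates that tester on $\overline f$, producing the value $\overline f(i)$ whenever coordinate $i$ is queried, and outputs its answer. Correctness mirrors \cref{thm:tester-line-continuous}: a monotone $\overline f$ is accepted by one-sidedness, while if $d_1(\overline f) > \epsilon$ then \cref{lemma:l1-to-hamming} gives $d_0(\overline f) \ge \sqrt{d_1(\overline f)/(mL)} > \epsilon'$, so the simulated tester rejects with probability at least $2/3$. Substituting $1/\epsilon' = \sqrt{mL/\epsilon}$ and $\log(\epsilon' m) = \tfrac12\log(m\epsilon/L)$ into $O\!\left(\tfrac{1}{\epsilon'}\log(\epsilon' m)\right)$ yields the claimed bound $O\!\left(\sqrt{mL/\epsilon}\,\log(m\epsilon/L)\right)$.

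I do not expect a genuine obstacle here; the two points that need care are (i) verifying that the hypothesis $\epsilon/L \ge 4/m$ is precisely what guarantees $\epsilon' m \ge 2$, placing $\epsilon'$ in the regime where \cite{Bel18} achieves the $O\!\left(\tfrac{1}{\epsilon'}\log(\epsilon' m)\right)$ bound rather than a worse one, and (ii) confirming that the simulation inherits nonadaptivity and one-sided error from the \cite{Bel18} tester. The cleanest write-up will essentially copy the structure of \cref{thm:tester-line-continuous}, deleting the continuous-to-discrete passage and replacing \cite{EKKRV98} with \cite{Bel18}.
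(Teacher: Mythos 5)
Your proposal is correct and matches the paper's proof essentially verbatim: both set $\epsilon' = \sqrt{\epsilon/(mL)}$, invoke \cref{lemma:l1-to-hamming} directly on the discrete line, and run the \cite{Bel18} Hamming tester at proximity $\epsilon'$, with the $\epsilon/L \ge 4/m$ hypothesis precisely ensuring $\epsilon' \ge 2/m$ (where \cite{Bel18} gives the logarithmic bound) and the complementary regime handled by the trivial $O(m)$ tester. The only difference is cosmetic: you spell out the two regimes and the $\epsilon < mL$ sanity check more explicitly, whereas the paper states the regime condition in terms of $\epsilon'$ rather than unwinding it to $\epsilon/L \ge 4/m$.
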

\begin{proof}
    The tester sets $\epsilon' \define \sqrt{\frac{\epsilon}{mL}}$, and then runs the (nonadaptive,
    one-sided) Hamming monotonicity tester of \cite{Bel18} on the line $[m]$ with proximity
    parameter $\epsilon'$. That tester has query complexity $O\left(\frac{1}{\epsilon'}
    \log(\epsilon' m)\right)$ when $\epsilon' \ge 2/m$ and (trivially) $O(m)$ otherwise, which gives
    the claimed upper bounds. It remains to show correctness.

    When $\overline f$ is monotone, the algorithm will accept since the tester of \cite{Bel18} has
    one-sided error. Now, suppose $d_1(\overline f) > \epsilon$. Then \cref{lemma:l1-to-hamming}
    yields
    \[
        d_0(\overline f) > \sqrt{\frac{\epsilon}{mL}} = \epsilon' \,,
    \]
    so the tester of \cite{Bel18} will reject with high constant probability.
\end{proof}

\section{Lower bounds}
\label{sec:lower-bounds}

In this section, we prove our lower bounds for testing monotonicity on the unit cube and on the
hypergrid. We first show our general lower bounds based on a ``hole'' construction, which hides a
monotonicity violating region inside a randomly placed $\ell^1$-ball; these bounds imply near
tightness of our upper bounds for testing on the line from \cref{sec:line}. Then we give our lower
bounds for partial derivative testers, which show that the analysis of our tester in
\cref{sec:testers} is tight.

\begin{definition}[$\ell^1$-ball]
    Let $\Omega$ be one of $\bR^n$ or $\bZ^n$, let $x \in \Omega$ and let $r > 0$ be a real number.
    The \emph{$\ell^1$-ball of radius $r$ centered at $x$} is the set
    $B_1^n(r, x) \define \{ y \in \Omega : \|x-y\|_1 \le r \}$. We will also write $B_1^n(r) \define
    B_1^n(r, 0)$.
\end{definition}

It will be clear from the context whether the domain should be taken to be continuous or discrete,
\ie whether $B_1^n(r,c)$ should be understood under $\Omega = \bR^n$ or $\Omega = \bZ^n$.

We give the following simple bounds on the volume of continuous and discrete $\ell^1$-balls. Since
we do not require particularly tight bounds, we opt for a simple formulation and elementary proof.

\begin{proposition}
    \label{prop:volume-l1-ball}
    There exist functions $c_1, c_2 : \bN \to \bR_{>0}$ satisfying the following. Let $n \in \bN$.
    Let $\Omega$ be one of $\bR^n$ or $\bZ^n$. Let $r \in \bR$
    satisfy $r > 0$ if $\Omega = \bR^n$, and $r \ge 1$ if $\Omega = \bZ^n$. Then
    \[
        c_1(n) r^n \le \nu\left( B_1^n(r) \right) \le c_2(n) r^n \,.
    \]
\end{proposition}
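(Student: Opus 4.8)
The plan is to prove the two cases separately, handling the continuous case first with an exact formula and then reducing the discrete case to it. For $\Omega = \bR^n$, recall the classical fact that the volume of the $\ell^1$-ball of radius $r$ in $\bR^n$ is exactly $\nu(B_1^n(r)) = \frac{2^n}{n!} r^n$; this follows by induction on $n$ (slicing along the last coordinate and integrating the cross-sections, which are scaled copies of $B_1^{n-1}$), or directly from the Dirichlet integral. Taking $c_1(n) = c_2(n) = 2^n / n!$ gives the continuous case with equality, which is more than enough. If one prefers to avoid citing the exact formula, one can instead note $B_1^n(r/n)_\infty \subseteq B_1^n(r)$ where the former is a cube of side $2r/n$ (since $\|x\|_1 \le n \|x\|_\infty$), giving the lower bound with $c_1(n) = (2/n)^n$, and $B_1^n(r) \subseteq B_\infty^n(r)$, a cube of side $2r$, giving the upper bound with $c_2(n) = 2^n$.

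For $\Omega = \bZ^n$ with $r \ge 1$, the idea is a standard sandwiching between continuous balls. For the upper bound, associate to each lattice point $z \in B_1^n(r) \cap \bZ^n$ the unit cube $z + [0,1)^n$; these cubes are disjoint and each is contained in the continuous ball $B_1^n(r + n)$ (since any point of such a cube is within $\ell^1$-distance $n$ of $z$). Hence the number of lattice points is at most $\nu(B_1^n(r+n)) = \frac{2^n}{n!}(r+n)^n \le \frac{2^n}{n!}(n+1)^n r^n$, using $r \ge 1$ so that $r + n \le (n+1) r$. This gives $c_2(n) = \frac{2^n (n+1)^n}{n!}$. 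For the lower bound, associate to each $z$ the half-open cube $z + [0,1)^n$ again; their union covers the continuous ball $B_1^n(r - n)$ (a point $y$ in that ball has floor $\lfloor y \rfloor$ at $\ell^1$-distance at most $n$, hence $\lfloor y\rfloor \in B_1^n(r)$), so the number of lattice points is at least $\nu(B_1^n(r-n)) = \frac{2^n}{n!}\max\{r-n,0\}^n$. This degenerates when $r \le n$, so for small $r$ we instead argue directly: the $\lceil r \rceil + 1 \ge 2$ points $0, \pm e_1, \pm 2 e_1, \dots$ along a single axis that lie in $B_1^n(r)$ already number at least $r + 1 \ge 2 \ge 2 r^n / $ (something)—more cleanly, for $1 \le r \le n$ one has $r^n \le n^{n-1} r$ (since $r^{n-1} \le n^{n-1}$), and $B_1^n(r)$ contains the $2\lfloor r\rfloor + 1 \ge r + 1 \ge \frac{1}{2} r$-many points $\{k e_1 : |k| \le \lfloor r \rfloor\}$, so the count is at least $\frac{r}{2} \ge \frac{1}{2 n^{n-1}} r^n$. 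Combining the two regimes, set $c_1(n) = \min\{\frac{2^n}{n!}\cdot\text{(const from } r > n\text{)},\ \frac{1}{2n^{n-1}}\}$; concretely any small enough constant works, e.g. $c_1(n) = \frac{1}{2 n^{n-1}}$ suffices for $1 \le r \le n$ and for $r > n$ one checks $\nu(B_1^n(r-n)) \ge \frac{2^n}{n!} (r/2)^n \ge \frac{1}{2n^{n-1}} r^n$ when, say, $r \ge 2n$, with the intermediate range $n < r < 2n$ again absorbed into the axis-points argument.

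The only mildly delicate point is the behavior of the lower bound when $r$ is comparable to $n$, where the naive "inner continuous ball" $B_1^n(r - n)$ can be empty; the fix is simply to fall back on counting lattice points along a single coordinate axis, which always gives $\Omega(r)$ points, and $r \gtrsim_n r^n$ in the bounded range $r \in [1, O(n)]$. I do not expect any real obstacle here: all bounds are allowed to depend arbitrarily on $n$, so one can be very wasteful with the constants. I would organize the write-up as: (i) state the exact continuous volume formula with a one-line inductive justification; (ii) prove the discrete upper bound by the disjoint-cubes packing argument; (iii) prove the discrete lower bound by the covering argument for $r$ large and the axis-points argument for $r$ small; (iv) collect the constants.
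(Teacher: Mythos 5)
Your continuous case matches the paper (both use the exact formula $\nu(B_1^n(r)) = (2r)^n/n!$). For the discrete case, however, you take a genuinely different route: you reduce the $\bZ^n$ bound to the $\bR^n$ bound by a sandwiching argument, packing/covering with unit cubes $z + [0,1)^n$ so that the lattice count is squeezed between $\nu(B_1^n(r-n))$ and $\nu(B_1^n(r+n))$, and handling the degenerate regime $r \lesssim n$ by a direct axis-points count. The paper instead proves the discrete case by induction on the dimension $n$: after WLOG rounding $r$ to an integer, it uses the recursion $\nu(B_1^{n+1}(r)) = \nu(B_1^n(r)) + 2\sum_{d=1}^r \nu(B_1^n(r-d))$, bounding the sum above by $3r\cdot\nu(B_1^n(r))$ and below by keeping the $\approx r/3$ terms with $r - d \ge r/2$ (with a separate easy argument for small $r$). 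Both approaches are correct and both are elementary. Your reduction to the continuous ball has the advantage of reusing the exact continuous formula and being arguably more transparent geometrically; the paper's induction is fully self-contained in the discrete world and avoids the case split on $r$ versus $n$ for the upper bound. Neither yields better constants than the other (and since the statement allows $c_1, c_2$ to depend arbitrarily on $n$, this is immaterial).

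One small point to tighten in the write-up: in the small-$r$ regime your displayed chain ``$r + 1 \ge 2 \ge 2 r^n /$ (something)'' is garbled before you correct course; the clean statement is simply that $B_1^n(r) \cap \bZ^n$ contains the $2\lfloor r\rfloor + 1 \ge r$ points $\{k e_1 : |k| \le \lfloor r\rfloor\}$, and $r \ge r^n / n^{n-1}$ whenever $1 \le r \le n$ (and similarly for $r < 2n$ with $n$ replaced by $2n$). With that fixed, the argument is complete.
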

\begin{proof}
    First suppose $\Omega = \bR^n$. Then we have the following formula for the area of the
    $\ell^1$-ball of radius $r$ (see \eg \cite{Wan05}):
    \[
        \nu\left( B_1^n(r) \right) = \frac{(2r)^n}{n!} \,.
    \]
    The result follows by letting $c_1(n) \le 2^n / n!$ and $c_2(n) \ge 2^n / n!$.

    Now, suppose $\Omega = \bZ^n$, and suppose $r$ is an integer without loss of generality (because
    since $r \ge 1$, there exist integers within a factor of $2$ above and below $r$). We proceed by
    an inductive argument. For $n=1$, the volume is
    \[
        \nu\left( B_1^1 \right) = 1 + 2\sum_{d=1}^r 1 = 1 + 2r \,,
    \]
    so the claim holds by letting $c_1(1) \le 2$ and $c_2(n) \ge 3$. Assuming the claim for some $n
    \in \bN$, we have
    \begin{align*}
        \nu\left( B_1^{n+1}(r) \right)
        &= \abs*{ \left\{ x = (x_1, \dotsc, x_n, x_{n+1}): \|x\|_1 \le r \right\} }
        = \sum_{y=-r}^r \abs*{
            \left\{ x' = (x'_1, \dotsc, x'_n) : \|x'\|_1 \le r-\abs{y} \right\}
        } \\
        &= \nu\left( B_1^n(r) \right) + 2\sum_{d=1}^r \nu\left( B_1^n(r-d) \right) \,.
    \end{align*}
    Since the last expression is at most $3r \cdot \nu\left( B_1^n(r) \right)$,
    using the inductive hypothesis we conclude
    \[
        \nu\left( B_1^{n+1}(r) \right)
        \le 3r \cdot c_2(n) r^{n}
        = 3c_2(n) r^{n+1}
        \le c_2(n+1) r^{n+1} \,,
    \]
    the last inequality as long as $c_2(n+1) \ge 3c_2(n)$.

    For the lower bound, we consider two cases. Note that $r-d \ge r/2$ for at least $\floor{r/2}$
    values of $d$. When $r \ge 4$, we have $\floor{r/2} \ge r/3$, and then
    \[
        \nu\left( B_1^{n+1}(r) \right)
        \ge c_1(n) r^n + 2\sum_{d=1}^r c_1(n) (r-d)^n
        > \frac{2r}{3} \cdot c_1(n) \left(\frac{r}{2}\right)^n
        \ge c_1(n+1) r^{n+1} \,,
    \]
    the last inequality as long as $c_1(n+1) \le \frac{2}{3} \cdot 2^{-n} \cdot c_1(n)$. On the
    other hand, if $r < 4$, the bound follows easily for small enough $c_1(n+1)$, since
    \[
        \nu\left( B_1^{n+1}(r) \right)
        \ge c_1(n) r^n + 2\sum_{d=1}^r c_1(n) (r-d)^n
        > \frac{c_1(n) r^{n+1}}{r}
        > \frac{c_1(n)}{4} r^{n+1} \,. \qedhere
    \]
\end{proof}

\begin{remark}
    \label{rem:inefficient-constant}
    Note that the constants $c_1(n)$ and $c_2(n)$ in \cref{prop:volume-l1-ball} have poor dependence
    on $n$, and in particular this is tight in the continuous case. This fact is essentially the
    reason why this construction is only efficient for constant dimension $n$.
\end{remark}

We now prove our tester-independent lower bounds. Note that there exists a tester for $(\ell^1,
L)$-Lipschitz functions with proximity parameter $\epsilon$ if and only if there exists a tester for
$(\ell^1, 1)$-Lipschitz functions with proximity parameter $\epsilon/L$ (the reduction consists of
simply rescaling the input values). Therefore it suffices to prove the theorems for the case $L=1$.
The following two theorems establish the continuous and discrete cases of
\cref{thm:lower-bound-constant-n}.

\begin{theorem}[Lower bound for constant $n$ on the unit cube]
    \label{thm:lower-bound-constant-n-unit-cube}
    Let $n \in \bN$ be a constant. Any $L^1$ monotonicity tester (with two-sided error, and adaptive
    value and directional derivative queries) for Lipschitz functions $f : [0,1]^n \to \bR$
    satisfying $\Lip_1(f) \le 1$ requires at least $\Omega\left((1/\epsilon)^{\frac{n}{n+1}}\right)$
    queries.
\end{theorem}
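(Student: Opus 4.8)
The plan is to establish the lower bound via Yao's principle: I will exhibit a distribution over \emph{hard} instances—Lipschitz functions $f : [0,1]^n \to \bR$ that are $\epsilon$-far from monotone in $L^1$—together with a trivial "yes" instance (say the all-zero function, or a fixed monotone function), such that any deterministic algorithm making few queries cannot reliably distinguish a random hard instance from the yes instance. The hard instances are built by a "hole" construction: start from a monotone baseline function $g_0$ (e.g. $g_0(x) = x_1$, or even $g_0 \equiv 0$ if we only need $1$-Lipschitz and $\epsilon$-far), pick a center $c$ uniformly at random from a suitable sub-region of $[0,1]^n$, and plant inside the $\ell^1$-ball $B = B_1^n(r, c)$ a localized perturbation that creates monotonicity violations, while leaving $f = g_0$ outside $B$. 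The perturbation should be the "most violating" $1$-Lipschitz bump supported on $B$: intuitively, subtract from $g_0$ a function that decreases as one moves toward the center, so that along coordinate directions through the ball the function is strictly decreasing over a region of radius $\sim r$.

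The two quantitative requirements pin down the radius $r$. First, the construction must be $\epsilon$-far from monotone. Since the violations are confined to $B$ and the planted bump has magnitude on the order of $r$ over a region of measure $\nu(B_1^n(r)) \approx c_1(n) r^n$ (by \cref{prop:volume-l1-ball}), the $L^1$ distance to monotonicity is at least on the order of $r \cdot r^n = r^{n+1}$ (up to constants depending on $n$); to make a rigorous lower bound on $d_1(f)$ I would either directly exhibit that every monotone $g$ must differ from $f$ by a total of $\Omega_n(r^{n+1})$ on $B$—using a matching/flow argument on violating pairs inside the ball, or a direct integral estimate—or invoke the one-dimensional intuition that a $1$-Lipschitz function dropping by $\Theta(r)$ must be moved back. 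Setting $r^{n+1} \asymp \epsilon$, i.e. $r \asymp \epsilon^{1/(n+1)}$ (with $n$ constant, so the $c_i(n)$ factors are absorbed), achieves $d_1(f) > \epsilon$. Second, to fool a $q$-query algorithm: if $c$ is uniform over a region of constant measure, then a single query point $x$ lies in $B_1^n(r,c)$ with probability $\Theta_n(r^n)$, so $q$ queries detect the hole with probability $O(q \cdot r^n)$; by a union bound (and because outside the hole $f$ is indistinguishable from the yes instance, value queries \emph{and} derivative queries both return the baseline values/derivatives), any algorithm with $q = o(r^{-n}) = o(\epsilon^{-n/(n+1)})$ fails to distinguish the cases with the required constant probability. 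This is the $\Omega((1/\epsilon)^{n/(n+1)})$ bound.

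The main obstacle I anticipate is twofold. The first technical point is handling the \emph{directional derivative oracle} alongside value queries: I must ensure that the planted bump has the property that, outside $B$, \emph{all} partial derivatives agree with the baseline (which is automatic if the perturbation is supported on $B$ and smooth enough), and that the oracle responses carry no information about $c$ except via the event "query lands in $B$"; a clean way is to make the baseline and the perturbed function literally identical outside $B$, so the only way to distinguish is to query inside $B$. Care is needed at the boundary of $B$ to keep the glued function $1$-Lipschitz—using an $\ell^1$-ball matches the $\ell^1$-Lipschitz geometry, which is exactly why $B_1^n$ is the right shape, but one still checks the Lipschitz constant across the seam. The second, more delicate point is the lower bound $d_1(f) = \Omega_n(r^{n+1})$ on the distance to monotonicity: one must argue that the planted violations genuinely cost that much $L^1$ mass to repair, not merely that the perturbation has that much mass. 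I would handle this by a flow/transport argument—exhibiting $\Omega_n(r^{n+1})$ worth of "violating mass" that any monotone function must pay for—mirroring the maximum-matching-of-violating-pairs idea referenced earlier in the paper (the footnote citing \cite{FLNRRS02}), adapted to the continuous $L^1$ setting. Finally, since $n$ is constant throughout, all factors $c_1(n), c_2(n)$ and combinatorial constants are absorbed into the $\Omega(\cdot)$, and I would remark (cf.\ \cref{rem:inefficient-constant}) that this is precisely why the construction degrades for growing $n$.
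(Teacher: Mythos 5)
Your proposal is correct and follows essentially the same route as the paper: a ``hole'' construction with $f(x) = -r + \|x-c\|_1$ inside a randomly placed $\ell^1$-ball $B_1^n(r,c)$ (and $f=0$ outside, giving trivial value and derivative responses there), the estimate $d_1(f) \gtrsim_n r^{n+1}$ via exactly the one-dimensional line-restriction argument you describe, the choice $r \asymp \epsilon^{1/(n+1)}$, and Yao's principle. The only cosmetic difference is that the paper phrases the indistinguishability step as a packing argument over $\Theta_n((1/r)^n)$ disjoint ball centers rather than your ``probability of hitting the ball'' union bound, but these are equivalent.
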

\begin{proof}
    We construct a family of functions that are $\epsilon$-far from monotone in $L^1$ distance such
    that any deterministic algorithm cannot reliably distinguish between a function chosen uniformly
    at random from this family and the constant-$0$ function with fewer than the announced number of
    queries; then, the claim will follow from Yao's principle.

    Each such function $f$ is constructed as follows. Let $c \in [0,1]^n$ be a point such that the
    ball $B_1^n(r, c)$ is completely inside $[0,1]^n$, for radius $r$ to be chosen below. Then $f$
    takes value $0$ everywhere outside $B_1^n(r, c)$, and inside this ball, it takes value
    \[
        f(x) = -r + \|x-c\|_1
    \]
    for each $x \in B_1^n(r,c)$. Then $\Lip_1(f) = 1$. We now lower bound $d_1(f)$, its distance to
    monotonicity. Fix any $x' \in [0,1]^{n-1}$ and consider the line of points $(y, x')$ for $y \in
    [0,1]$, \ie the line along the first coordinate with remaining coordinates set to $x'$. Suppose
    this line intersects $B_1^n(r,c)$. Then this intersection occurs on some interval $[a,b]$ of
    $y$-values, and on this interval, $f$ first decreases from $f(a, x') = 0$ to
    $f\left(\frac{a+b}{2}, x'\right) = -\frac{b-a}{2}$ at rate $1$, and then increases at rate $1$
    back to $f(b, x') = 0$. Any monotone function $g$ is in particular monotone over this line, and
    it is easy to see that this requires total change to $f$ proportional to the area under this
    curve:
    \[
        \int_0^1 \abs*{f(y,x') - g(y,x')} \odif y \gtrsim \int_0^1 \abs*{f(y,x')} \odif y \,.
    \]
    Now, since this holds for any line intersecting $B_1^n(r,c)$, and the collection of such lines
    gives a partition of $B_1^n(r,c)$, the total distance between $f$ and any monotone function $g$
    is lower bounded (up to a constant) by the $L^1$-norm of $f$:
    \[
        \int_{[0,1]^n} \abs*{f-g} \odif \nu \gtrsim \int_{[0,1]^n} \abs{f} \odif \nu \,,
    \]
    and since this holds for any choice of $g$, we conclude that
    \[
        d_1(f) \gtrsim \int_{[0,1]^n} \abs{f} \odif \nu \,.
    \]
    We now note that this last expression is half the volume of an $\ell^1$-ball in dimension $n+1$:
    for each point $x \in B_1^n(r,c)$, the contribution to the integrand is $\abs*{f(x)} = r -
    \|x-c\|_1$, corresponding to the measure of points $(x,z')$ for all $0 \le z' \le z$ where $z =
    r - \|x-c\|_1$, so that the point $(x,z) \in \bR^{n+1}$ satisfies $\|(x,z) - (c,0)\|_1 = r$. In
    other words, the points $(x,z')$ are the points of $B_1^{n+1}(r,(c,0))$ with nonnegative last
    coordinate. Conversely, all such points contribute to the integral above. Therefore, since $n$
    is a constant, using \cref{prop:volume-l1-ball} and writing $\nu^{n+1}$ for the Lebesgue measure
    on $\bR^{n+1}$, we have
    \[
        d_1(f) \gtrsim \int_{[0,1]^n} \abs{f} \odif \nu
        \gtrsim \nu^{n+1}\left(B_1^{n+1}(r)\right)
        \gtrsim r^{n+1} \,.
    \]
    We wish this last quantity to be at least $\Omega(\epsilon)$, so (recalling $n$ is a constant)
    it suffices to set
    \[
        r \approx \epsilon^{\frac{1}{n+1}} \,.
    \]
    We have established that each function $f$, for this choice of $r$ and any choice of $c$, is
    $\epsilon$-far from monotone as desired. Our family of functions from which $f$ will be drawn
    will be given by choices of $c$ such that the balls $B_1^n(r,c)$ are disjoint, so that each
    query may only rule out one such choice (because queries outside $B_1^n(r,c)$ take value $0$).
    How many disjoint balls $B_1^n(r,c)$ can we fit inside $[0,1]^n$? It suffices to divide
    $[0,1]^n$ into a grid of $n$-dimensional cells of side $2r$, each of which can contain one ball.
    The number of such cells is at least (up to a constant factor)
    \[
        (1/r)^n \gtrsim (1/\epsilon)^{\frac{n}{n+1}} \,.
    \]
    Therefore to distinguish some $f$ uniformly drawn from this family from the constant-$0$
    function with constant probability, any deterministic algorithm must have query complexity at
    least $\Omega\left((1/\epsilon)^{\frac{n}{n+1}}\right)$.
\end{proof}

\begin{theorem}[Lower bound for constant $n$ on the hypergrid]
    \label{thm:lower-bound-constant-n-hypergrid}
    Let $n \in \bN$ be a constant.
    Any $L^1$ monotonicity tester (with two-sided error and adaptive queries) for functions $f :
    [m]^n \to \bR$ satisfying $\Lip_1(f) \le 1$ requires at least $\Omega\left( \min\left\{
        (m/\epsilon)^{\frac{n}{n+1}}, m^n \right\} \right)$ queries.
\end{theorem}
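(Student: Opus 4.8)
The plan is to adapt the ``hole'' construction used in the proof of \cref{thm:lower-bound-constant-n-unit-cube} to the discrete hypergrid, once again appealing to Yao's principle; by the rescaling remark preceding \cref{thm:lower-bound-constant-n-unit-cube}, it suffices to treat $L = 1$. The hard family will be indexed by a center $c \in [m]^n$ for which the ball $B_1^n(r,c)$ is contained in $[m]^n$, where $r \ge 1$ is a radius to be chosen; the associated function $f$ is $0$ outside $B_1^n(r,c)$ and equals $f(x) = -r + \|x-c\|_1$ inside it. Each such $f$ satisfies $\Lip_1(f) = 1$, since moving along a coordinate changes $\|x-c\|_1$ by exactly $1$, and all these functions agree with the constant-$0$ function outside their planted ball.

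The first substantive step is to lower bound $d_1(f)$. Fixing the last $n-1$ coordinates, the restriction of $f$ to a line along the first coordinate meeting $B_1^n(r,c)$ is a ``valley'' that decreases at rate $1$ and then increases at rate $1$, so any monotone $g$ must differ from $f$ along that line by at least a constant fraction of $\sum_y \abs{f(y,x')}$; summing over all such lines and normalizing by $m^n$ gives $d_1(f) \gtrsim \frac{1}{m^n}\sum_{x \in [m]^n}\abs{f(x)}$. Exactly as in the continuous argument, the remaining sum counts (up to a factor of two) the lattice points of the $(n+1)$-dimensional discrete $\ell^1$-ball $B_1^{n+1}(r,(c,0))$ with nonnegative last coordinate, so by \cref{prop:volume-l1-ball} (applicable because $r \ge 1$) and the fact that $n$ is constant, $d_1(f) \gtrsim r^{n+1}/m^n$.

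Next I would choose $r$ to make $f$ be $\epsilon$-far from monotone. When $\epsilon \ge m^{-n}$, take $r$ to be an integer with $r \approx (\epsilon m^n)^{1/(n+1)} \ge 1$; when $\epsilon < m^{-n}$, take $r = 1$, in which case $d_1(f) \gtrsim 1/m^n > \epsilon$ already. (We may assume $\epsilon$ is at most a small constant times $m$, since no $1$-Lipschitz function on $[m]^n$ is farther than $O(m)$ from monotone and otherwise the claimed bound is $O(1)$ and trivial; in particular $r < m$ in all relevant cases, so the ball fits.) The hard family is then obtained by tiling $[m]^n$ with cells of side $\Theta(r)$ and placing one ball in each, yielding $\gtrsim (m/r)^n$ pairwise disjoint planted balls; substituting the two choices of $r$ gives $\gtrsim (m/\epsilon)^{n/(n+1)}$ in the first regime and $\gtrsim m^n$ in the second, i.e.\ $\Omega(\min\{(m/\epsilon)^{n/(n+1)}, m^n\})$ uniformly. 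Finally, against a function drawn uniformly from this family versus the constant-$0$ function, a deterministic algorithm learns nothing until it queries inside---or, to accommodate discrete directional-derivative queries, within distance $1$ of---the planted ball; since the (slightly enlarged) balls are disjoint and $n$ is constant, each query is consistent with all but $O(1)$ of the centers, so distinguishing requires $\Omega$ of the number of balls, and Yao's principle completes the proof.

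The main obstacle I expect is the bookkeeping in the discrete regime: rounding $r$ to an integer in $[1,m]$ without losing more than constant factors, separating the case $\epsilon m^n < 1$, verifying that $\Omega((m/r)^n)$ truly disjoint balls fit (boundary cells included), and confirming that neither value nor discrete partial-derivative queries outside the enlarged ball reveal anything---all routine, but needing care to land the stated $\min$ with $m^n$ precisely.
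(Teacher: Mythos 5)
Your proposal is correct and follows essentially the same route as the paper: the same ``hole'' construction with $f(x) = -r + \|x-c\|_1$ on disjoint $\ell^1$-balls, the same line-restriction argument to get $d_1(f) \gtrsim r^{n+1}/m^n$ via the $(n+1)$-dimensional ball volume, the same choice $r \approx (\epsilon m^n)^{1/(n+1)}$ in the main regime with a fallback to $r = 1$ when $\epsilon < m^{-n}$ (note this reproduces the paper's ``single $-1$ point'' construction exactly, since $B_1^n(1,c)$ gives $f = -1$ only at $c$), and the same tiling-plus-Yao finish.
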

\begin{proof}
    We proceed similarly to \cref{thm:lower-bound-constant-n-unit-cube}, with small changes for the
    discrete setting (essentially corresponding to the requirement that $r \ge 1$ in the discrete
    case of \cref{prop:volume-l1-ball}).

    We will again construct functions $f$ based on balls $B_1^n(r,c)$ for suitable choices of $r$
    and $c$. For fixed $r$ and $c$, $f$ takes value $0$ outside the ball and, for each $x \in
    B_1^n(r,c)$,
    \[
        f(x) = -r + \|x-c\|_1 \,,
    \]
    so that $\Lip_1(f) = 1$. Again by a line restriction argument, for any monotone function $g$ we
    have
    \[
        \int_{[m]^n} \abs*{f-g} \odif \nu \gtrsim \int_{[m]^n} \abs{f} \odif \nu \,,
    \]
    and thus
    \begin{equation}
        \label{eq:integrand}
        d_1(f) \gtrsim \frac{1}{m^n} \int_{[m]^n} \abs{f} \odif \nu \,,
    \end{equation}
    the normalizing factor due to \cref{def:lp-distance}.

    When $\epsilon \le 1/m^n$, this construction boils down to setting $f(x)=-1$ at a single point
    $x$, which requires $\Omega(m^n)$ queries to identify. Now, assume $\epsilon > 1/m^n$.

    Again we may identify the integrand of \eqref{eq:integrand} with points on half of
    $B_1^{n+1}(r,(c,0))$. As long as $r \ge 1$ and since $n$ is a constant,
    \cref{prop:volume-l1-ball} implies that
    \[
        d_1(f) \gtrsim \frac{r^{n+1}}{m^n} \,.
    \]
    Thus to have $d_1(f) \ge \epsilon$, it suffices (since $n$ is a constant) to set
    \[
        r \approx m^{\frac{n}{n+1}} \epsilon^{\frac{1}{n+1}} \,,
    \]
    and indeed this gives $r \ge 1$ since $\epsilon > 1/m^n$.
    Then, our functions $f$ are given by choices of $c$ placed on the hypergrid $[m]^n$ inside
    disjoint cells of side $2r$, of which there are at least (up to a constant factor)
    \[
        \left( \frac{m}{r} \right)^n \gtrsim \left( \frac{m}{\epsilon} \right)^{\frac{n}{n+1}} \,,
    \]
    and thus any deterministic algorithm requires $\Omega\left((m/\epsilon)^{\frac{n}{n+1}}\right)$
    queries to distinguish a uniformly chosen $f$ from this family from the constant-$0$ function.
\end{proof}

The construction for the partial derivative tester lower bounds is simpler: we start with a ``step''
one-dimensional construction which is flat everywhere except for a small region of negative slope,
and then copy this function onto every line along a randomly chosen coordinate $i$. Then a
partial derivative tester must correctly guess both $i$ and the negative-slope region to detect such
functions. The following two theorems establish the continuous and discrete cases of
\cref{thm:lower-bound-partial-derivative-testers}.

\begin{theorem}[Lower bound for partial derivative testers on the unit cube]
    Any partial derivative $L^1$ monotonicity tester for Lipschitz functions $f : [0,1]^n \to \bR$
    satisfying $\Lip_1(f) \le 1$ (with two-sided error and adaptive queries) requires at least
    $\Omega(n/\epsilon)$ queries.
\end{theorem}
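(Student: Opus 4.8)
\emph{Overview of the plan.} I would prove this by Yao's principle, exhibiting a distribution over $\epsilon$-far ``step'' instances that any partial derivative tester with few queries cannot distinguish from the constant-$0$ function. The instances are built from a one-dimensional step function, lifted to $[0,1]^n$ along a randomly chosen coordinate, so that a partial derivative query reveals a violation only if it both guesses the right coordinate (one of $n$) and lands in a short window of negative slope (measure $\Theta(\epsilon)$). As in the preamble to these lower bounds it suffices to treat $\Lip_1(f)\le 1$; throughout I assume $\epsilon$ is below a small universal constant (otherwise $n/\epsilon=\Theta(n)$).

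\emph{The hard instances.} Fix $\delta\define 7\epsilon\le 1/6$. For a coordinate $i\in[n]$ and an offset $a\in[1/2,2/3]$, let $h_a:[0,1]\to\bR$ equal $0$ on $[0,a]$, equal $-(t-a)$ on $[a,a+\delta]$, and equal $-\delta$ on $[a+\delta,1]$, and set $f_{i,a}(x)\define h_a(x_i)$. Since $h_a$ has slopes in $\{0,-1\}$, $\abs{f_{i,a}(x)-f_{i,a}(y)}=\abs{h_a(x_i)-h_a(y_i)}\le\abs{x_i-y_i}\le\|x-y\|_1$, so $\Lip_1(f_{i,a})\le 1$. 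Because $f_{i,a}$ depends only on $x_i$, its monotone rearrangement (\cref{def:monotone-rearrangement-continuous}) is $f_{i,a}^*(x)=h_a^*(x_i)$, where $h_a^*$ equals $-\delta$ on $[0,1-a-\delta]$, ramps up to $0$ on $[1-a-\delta,1-a]$, and equals $0$ on $[1-a,1]$. An elementary one-dimensional computation (splitting $[0,1]$ at the breakpoints $1-a-\delta<1-a\le a<a+\delta$) gives $\Ex{\abs{f_{i,a}-f_{i,a}^*}}=\int_0^1\abs{h_a-h_a^*}=2\delta(1-a-\delta)+\delta^2\in[\delta/3,2\delta]$, using $1-a-\delta\ge 1/6$. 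By \cref{prop:rearrangement-is-optimal}, $d_1(f_{i,a})\ge\tfrac12\Ex{\abs{f_{i,a}-f_{i,a}^*}}\ge\delta/6>\epsilon$, so every $f_{i,a}$ is $\epsilon$-far from monotone.

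\emph{The query bound.} By Yao's principle it suffices to fix a deterministic partial derivative tester $T$ making $q$ queries and show $q=\Omega(n/\epsilon)$. Let $i$ be uniform in $[n]$ and $a$ uniform in $[1/2,2/3]$, independently. Consider the \emph{canonical transcript} of $T$: the fixed, input-independent sequence of queries $(x^{(1)},e_{j_1}),\dots,(x^{(q)},e_{j_q})$ that $T$ makes when every oracle answer is $0$. On input $f_{i,a}$, the query $(x,e_j)$ returns $\partial_j f_{i,a}(x)$, which is $0$ if $j\ne i$, and if $j=i$ equals $-1$ when $x_i\in(a,a+\delta)$ and $0$ when $x_i\notin[a,a+\delta]$; it returns $\bot$ only when $f_{i,a}$ is non-differentiable at $x$, i.e.\ when $x_i\in\{a,a+\delta\}$. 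Since the finitely many points $x^{(t)}$ are fixed and $a$ has a continuous distribution, almost surely no canonical query returns $\bot$, and
\[
  \bP_{i,a}\!\left[\,j_t=i \text{ and } x^{(t)}_{j_t}\in(a,a+\delta)\,\right]
  = \frac1n\cdot\bP_a\!\left[\,a\in(x^{(t)}_{j_t}-\delta,\,x^{(t)}_{j_t})\,\right]
  \le \frac{6\delta}{n}=\frac{42\epsilon}{n}.
\]
By a union bound, with probability $\ge 1-42q\epsilon/n$ every answer along the canonical transcript is $0$, on which event $T$ runs exactly as on the constant-$0$ function and returns the same verdict. Hence $\abs{\bP_{i,a}[T\text{ accepts }f_{i,a}]-\mathbf 1[T\text{ accepts }\mathbf 0]}\le 42q\epsilon/n$. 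Since $\mathbf 0$ is monotone while each $f_{i,a}$ is $\epsilon$-far, a correct tester must accept $\mathbf 0$ and reject $f_{i,a}$ with probability $\ge 2/3$; this forces $42q\epsilon/n\ge 1/3$, i.e.\ $q=\Omega(n/\epsilon)$. Yao's principle then gives the same bound for every (adaptive, two-sided) randomized partial derivative tester.

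\emph{Main obstacle.} The delicate points are not computational: first, handling adaptivity and two-sided error, which the canonical-transcript argument above dispatches; and second, ensuring the directional-derivative oracle never emits $\bot$ on the hard path, since that symbol would leak the location of the violation --- this is precisely why the offset $a$ is drawn from a continuum rather than fixed. The only quantitative input is the elementary identity $\int_0^1\abs{h_a-h_a^*}=\Theta(\delta)$ together with \cref{prop:rearrangement-is-optimal}, which lets us certify $d_1(f_{i,a})=\Theta(\epsilon)$ without arguing directly about arbitrary monotone competitors.
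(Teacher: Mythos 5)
Your proof is correct and takes essentially the same approach as the paper: a one-dimensional step with a short negative-slope window, lifted along a uniformly random coordinate, so that a partial derivative query reveals the violation only with probability $O(\epsilon/n)$. Your version is technically tighter in the details --- the continuous offset sidesteps the $\bot$ symbol almost surely, the distance bound routes through \cref{prop:rearrangement-is-optimal} rather than a direct pairing argument, and the adaptive Yao argument is spelled out via the canonical transcript --- but the construction and the counting are the same as the paper's.
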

\begin{proof}
    Let $\epsilon \le 1/6$. For any $z \in \left[ \frac{1}{3}, \frac{2}{3}-\epsilon \right]$, let
    $g_z : [0,1] \to \bR$ be the function given by
    \[
        g(x) = \begin{cases}
            \epsilon & \text{if $x < z$} \,, \\
            \epsilon - (x-z) & \text{if $z \le x \le z + \epsilon$} \,, \\
            0 & \text{if $x > z + \epsilon$} \,. \\
        \end{cases}
    \]
    Note that $g_z$ is Lipschitz with $\Lip_1(g) = 1$. Moreover, we claim that $d_1(g_z) \gtrsim
    \epsilon$. Indeed, for any $x \in [0, 1/3]$, we have that $g_z(x) = \epsilon$ and $g_z(2/3 + x)
    = 0$. On the other hand, for any monotone function $h : [0,1] \to \bR$ we must have $h(x) \le
    h(2/3 + x)$. Thus, for any such $h$ we have $\abs*{g_z(x) - h(x)} + \abs*{g_z(2/3 + x) - h(2/3 +
    x)} \ge \epsilon$. Since this holds for all $x \in [0, 1/3]$, we conclude that for any such
    $h$ we must have $\Ex{\abs*{g_z - h}} \ge \epsilon/3$, proving the claim.

    Now, for any $i \in [n]$ and $z \in \left[ \frac{1}{3}, \frac{2}{3}-\epsilon \right]$, let
    $f_{i,z} : [0,1]^n \to \bR$ be given by copying $g_z$ onto $f$ along every line in direction
    $i$, \ie setting $f_{i,z}(x) = g_z(x_i)$ for every $x \in [0,1]^n$. Note that $\Lip_1(f) = 1$
    (since its partial derivatives are $0$ along non-$i$ coordinates), and $d_1(f) \gtrsim \epsilon$
    (since the lines in direction $i$ partition the domain).

    We construct a set of $\Omega(n/\epsilon)$ functions $f_{i,z}$ as follows. First, $i$ can be any
    of the coordinates in $[n]$. Then let $z_1, \dotsc, z_k$ be given by $z_j = \frac{1}{3} +
    k\epsilon$ for $k = \Omega(1/\epsilon)$, such that for each $j$ we have $z_j \in \left[
    \frac{1}{3}, \frac{2}{3}-\epsilon \right]$ and, moreover, for distinct $j, \ell \in [k]$,
    the regions where $f_{i,{z_j}}$ and $f_{i,{z_\ell}}$ take non-zero slope are disjoint. It
    follows that each partial derivative query may only rule out one such $f_{i,z}$, so any partial
    derivative tester that distinguishes an $f_{i,z}$ chosen uniformly at random from the
    constant-$0$ function must make at least $\Omega(n/\epsilon)$ queries.
\end{proof}

The argument for the hypergrid is similar, except that the construction cannot be made to occupy an
arbitrarily small region of the domain when the domain is discrete. We opt to keep the argument
simple and give a proof for constant parameter $\epsilon$.

\begin{theorem}[Lower bound for edge testers on the hypergrid]
    For sufficiently small constant $\epsilon$, any partial derivative $L^1$ monotonicity tester for
    functions $f : [m]^n \to \bR$ satisfying $\Lip_1(f) \le 1$ (with two-sided error and adaptive
    queries) requires at least $\Omega(nm)$ queries.
\end{theorem}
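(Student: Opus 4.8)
The plan is to mirror the ``step'' construction of the preceding theorem, with the one change forced by discreteness: on a line of length $m$ the negative-slope region cannot be made narrower than a single grid step. For each integer $z$ with $m/3 \le z \le 2m/3 - 1$, define $\overline g_z : [m] \to \bR$ by $\overline g_z(x) = 1$ for $x \le z$ and $\overline g_z(x) = 0$ for $x \ge z+1$. This function is $1$-Lipschitz --- it changes by exactly $1$ across the single edge $(z,z+1)$ and is constant elsewhere --- and its only violation of monotonicity is that one decreasing edge, at which $\partial_1 \overline g_z = -1$. As in the continuous case I would then lift it: for $i \in [n]$ let $\overline f_{i,z} : [m]^n \to \bR$ be given by $\overline f_{i,z}(x) = \overline g_z(x_i)$. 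Since $\overline f_{i,z}$ depends only on $x_i$ and $\overline g_z$ is $1$-Lipschitz, $\abs{\overline f_{i,z}(x) - \overline f_{i,z}(y)} = \abs{\overline g_z(x_i) - \overline g_z(y_i)} \le \abs{x_i - y_i} \le \|x - y\|_1$, so $\Lip_1(\overline f_{i,z}) \le 1$; moreover every partial derivative of $\overline f_{i,z}$ vanishes except that $\partial_i \overline f_{i,z}(x) = -1$ exactly when $x_i = z$.

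Next I would verify that each $\overline f_{i,z}$ is $\Omega(1)$-far from monotone, hence $\epsilon$-far once $\epsilon$ is a small enough absolute constant. This uses the same first-third-versus-last-third pairing as in the preceding proof: because $z$ lies in the middle third, $\overline f_{i,z}$ equals $1$ on every point with $x_i \le \lfloor m/3 \rfloor$ and equals $0$ on every point with $x_i > z$, and pairing each point of the former set with the point obtained by shifting its $i$-th coordinate up by $\lceil 2m/3 \rceil$ (which stays in $[m]$ and lands strictly above $z$) forces total error at least $1/2$ against any monotone function on each pair; averaging over the uniform measure on $[m]^n$ gives $d_1(\overline f_{i,z}) \gtrsim 1$.

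Finally I would invoke Yao's principle exactly as in the preceding proofs: the constant-$0$ function is monotone, each $\overline f_{i,z}$ is $\epsilon$-far, and --- crucially --- a deterministic partial derivative tester can distinguish $\overline f_{i,z}$ from the constant-$0$ function only by querying the directional derivative along $e_i$ at a point $x$ with $x_i = z$, since every other axis-aligned directional derivative query returns $0$ on both functions. Hence each query rules out at most one of the $\Theta(nm)$ pairs $(i,z)$ in the family, so distinguishing a uniformly random member of the family from the constant-$0$ function with constant advantage requires $\Omega(nm)$ queries, which yields the claimed bound for adaptive two-sided testers.

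Rather than a genuine obstacle, the one point to get right is the observation that discreteness prevents the violating edge from being shrunk below one step, which is precisely why this argument yields only a constant-$\epsilon$ lower bound (in contrast to the $\Omega(n/\epsilon)$ bound on the unit cube): recovering a $1/\epsilon$ factor would require a non-monotone region carrying $L^1$-mass $\epsilon$ on a set of measure $\epsilon$, which is impossible on $[m]^n$ once $\epsilon$ drops below the grid resolution $1/m^n$. One should also take mild care with the endpoints of the range of $z$ so that the drop stays strictly inside the middle third and the pairing above is valid, but this affects only the constants.
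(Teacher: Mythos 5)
Your proposal is correct and is essentially the paper's own proof: same one-dimensional step function $g_z$ with a single unit drop in the middle third, lifted to $f_{i,z}(x) = g_z(x_i)$, the observation that the only non-zero partial-derivative queries for $f_{i,z}$ are $\partial_i$ at points with $x_i = z$, and Yao's principle over the $\Theta(nm)$ choices of $(i,z)$. The extra detail you supply (the explicit pairing argument for $d_1(f_{i,z}) \gtrsim 1$ and the remark about why discreteness caps the construction at constant $\epsilon$) is consistent with, and fills in what the paper compresses into ``as before.''
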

\begin{proof}
    Let $m$ be a multiple of $3$ for simplicity. For each
    $z \in \left\{ \frac{m}{3}+1, \dotsc, \frac{2m}{3} \right\}$, define $g_z : [m] \to \bR$ by
    \[
        g_z(x) = \begin{cases}
            1 & \text{if $x < z$} \,, \\
            0 & \text{if $z \ge z$} \,.
        \end{cases}
    \]
    Then $\Lip_1(g_z) = 1$ and, as before, we have $d_1(g_z) = \Omega(1)$. Then for each $i \in [n]$
    and $z \in \left[ \frac{m}{3}+1, \frac{2m}{3} \right]$, we let $f_{i,z} : [m]^n \to \bR$ be
    given by $f_{i,z}(x) = g_z(x_i)$ for each $x \in [m]^n$; it follows that $\Lip_1(f) = 1$ and
    $d_1(f) = \Omega(1)$. Note that there are $\Omega(nm)$ such functions. Moreover, each partial
    derivative query may only rule out one such $f_{i,z}$, and therefore any edge tester that
    distinguishes an $f_{i,z}$ chosen uniformly at random from the constant-$0$ function must make
    at least $\Omega(nm)$ queries.
\end{proof}

\section{Overview of prior works on monotonicity testing}
\label{sec:related-work}

We first summarize results on testing monotonicity with respect to the Hamming distance.

\vspace{-0.25em}
\paragraph*{Boolean-valued functions.} Among the early works on this problem, \cite{GGLRS00} gave
testers for functions on the hypergrid $[m]^n$ with query complexities $O(n\log(m)/\epsilon)$ and
$O((n/\epsilon)^2)$; note that the latter bound is independent of $m$, and the query complexity of
testers with this property was subsequently improved to $O((n/\epsilon) \log^2(n/\epsilon))$ by
\cite{DGLRRS99} and to $O((n/\epsilon) \log(n/\epsilon))$ by \cite{BRY14a}. For functions on the
Boolean cube $\zo^n$, \cite{CS16} gave the first $o(n)$ tester, subsequently improved by
\cite{CST14}, culminating in the $\widetilde O(\sqrt{n}/\epsilon^2)$ tester of \cite{KMS18}, which
essentially resolved the question for nonadaptive testers. Whether adaptivity helps in monotonicity
testing is still an open question; see the lower bounds below, and also \cite{CS19}.

Returning to hypergrid domains $[m]^n$, \cite{BCS18,BCS20} established first testers with $o(n)$
query complexity and, via a domain reduction technique, also obtained $o(n)$ testers for product
distributions on $\bR^n$ (and the alternative proof of \cite{HY22} improves the number of
\emph{samples} drawn by the tester when the distribution is unknown). Subsequent works
\cite{BCS22,BKKM22} attained the optimal dependence on $n$ at the cost of a dependence on $m$, with
upper bounds of the form $\widetilde O(\sqrt{n} \poly(m))$. Most recently, \cite{BCS23} gave a
tester with query complexity $O(n^{1/2+o(1)}/\epsilon^2)$, which is almost optimal for nonadaptive
algorithms, and again extends to product measures on $\bR^n$.

\vspace{-0.25em}
\paragraph*{Real-valued functions.} \cite{EKKRV98} gave a tester with query complexity
$O(\log(m)/\epsilon)$ for real-valued functions on the line $[m]$; the tight query complexity of
this problem was more recently shown to be $\Theta(\log(\epsilon m)/\epsilon)$ \cite{Bel18}. As for
functions on the hypergrid $[m]^n$, \cite{GGLRS00,DGLRRS99} also gave testers for larger ranges, but
the query complexity depends on the size of the range. Then, \cite{CS13} gave a nonadaptive tester
with one-sided error and (optimal) query complexity $O(n\log(m)/\epsilon)$.
On the Boolean cube, \cite{BKR20} gave a tester with query complexity $\widetilde O\left(
\min\left\{ r\sqrt{n}/\epsilon^2, n/\epsilon \right\} \right)$ for real-valued functions $f$ with
image size $r$, and showed that this is optimal (for constant $\epsilon$) for nonadaptive testers
with one-sided error.

\vspace{-0.25em}
\paragraph*{Lower bounds.} We briefly summarize the known lower bounds for these problems; all lower
bounds listed are for testers with two-sided error unless noted otherwise. For Boolean functions on
the Boolean cube $\zo^n$, there is a near-optimal lower bound of $\widetilde \Omega(\sqrt{n})$ for
nonadaptive testers \cite{CWX17}, which improves on prior results of \cite{FLNRRS02,CST14,CDST15}.
For adaptive testers, \cite{BB16} gave the first polynomial lower bound of $\widetilde
\Omega(n^{1/4})$, since improved to $\widetilde \Omega(n^{1/3})$ by \cite{CWX17}.

Turning to real-valued functions, \cite{Fis04} combined Ramsey theory arguments with a result of
\cite{EKKRV98} to show a $\Omega(\log m)$ lower bound for adaptive testers on the line $[m]$. On the
Boolean cube, \cite{BCGM12} gave a $\Omega(n/\epsilon)$ nonadaptive one-sided lower bound, and
\cite{BBM12} gave an adaptive lower bound of $\Omega(n)$. On the hypergrid, \cite{BRY14b} gave a
nonadaptive lower bound of $\Omega(n\log m)$ by communication complexity arguments, \cite{CS14}
showed the optimal lower bound of $\Omega(n\log(m)/\epsilon - \log(1/\epsilon)/\epsilon)$ for
adaptive testers using Ramsey theory (which involves functions with large range), and \cite{Bel18}
gave an alternative proof of this bound that does not use Ramsey theory.

\vspace{-0.25em}
\paragraph*{$L^p$-testing.} Finally, moving from Hamming testers to $L^p$ testers, and assuming
functions with range $[0,1]$, \cite{BRY14a} (who formally introduced this model) gave nonadaptive
$L^p$ monotonicity testers with one-sided error on the hypergrid $[m]^n$ with query complexity
$O((n/\epsilon^p)\log(n/\epsilon^p))$---note this is independent of $m$, bypassing the Hamming
testing lower bound---and a lower bound of $\Omega((1/\epsilon^p)\log(1/\epsilon^p))$ for
nonadaptive testers with one-sided error; on the line, they showed there is an $O(1/\epsilon^p)$
nonadaptive tester with one-sided error and a matching lower bound for adaptive testers with
two-sided error. They also gave a reduction from $L^p$ monotonicity testing to Hamming testing of
Boolean functions for nonadaptive one-sided testers, so in particular $L^1$ testing functions with
range $[0,1]$ is no harder than Hamming testing functions with Boolean range.

\vspace{-0.25em}
\paragraph*{} We also remark that our problem, which is parameterized by the upper bound $L$ on the
Lipschitz constant of input functions, lies under the umbrella of parameterized property testing,
and refer to \cite{PRV17} for an introduction to, and results on this type of tester.

\iftoggle{anonymous}{}{%
\paragraph*{Acknowledgments.} We thank Eric Blais for helpful discussions throughout the course of
this project, and for comments and suggestions on preliminary versions of this paper.
}

\bibliographystyle{alpha}
\bibliography{references}

\newcommand{\etalchar}[1]{$^{#1}$}
\begin{thebibliography}{BKKM22}

\bibitem[AD04]{AD04}
Gabriel Acosta and Ricardo Dur{\'a}n.
\newblock An optimal poincar{\'e} inequality in {$L^1$} for convex domains.
\newblock {\em Proceedings of the American Mathematical Society},
  132(1):195--202, 2004.

\bibitem[BB16]{BB16}
Aleksandrs Belovs and Eric Blais.
\newblock A polynomial lower bound for testing monotonicity.
\newblock In {\em Proceedings of the forty-eighth annual ACM symposium on
  Theory of Computing}, page 1021–1032. ACM, Jun 2016.

\bibitem[BBM12]{BBM12}
Eric Blais, Joshua Brody, and Kevin Matulef.
\newblock Property testing lower bounds via communication complexity.
\newblock {\em computational complexity}, 21:311--358, 2012.

\bibitem[BCGM12]{BCGM12}
Jop Bri{\"e}t, Sourav Chakraborty, David Garc{\'\i}a{-}Soriano, and Arie
  Matsliah.
\newblock Monotonicity testing and shortest-path routing on the cube.
\newblock {\em Combinatorica}, 32(1):35--53, 2012.

\bibitem[BCS18]{BCS18}
Hadley Black, Deeparnab Chakrabarty, and Comandur Seshadhri.
\newblock A $o(d) \cdot \polylog n$ monotonicity tester for boolean functions
  over the hypergrid $[n]^d$.
\newblock In {\em Proceedings of the Twenty-Ninth Annual ACM-SIAM Symposium on
  Discrete Algorithms}, pages 2133--2151. SIAM, 2018.

\bibitem[BCS20]{BCS20}
Hadley Black, Deeparnab Chakrabarty, and Comandur Seshadhri.
\newblock Domain reduction for monotonicity testing: A $o(d)$ tester for
  boolean functions in d-dimensions.
\newblock In {\em Proceedings of the Fourteenth Annual ACM-SIAM Symposium on
  Discrete Algorithms}, pages 1975--1994. SIAM, 2020.

\bibitem[BCS22]{BCS22}
Hadley Black, Deeparnab Chakrabarty, and C~Seshadhri.
\newblock Directed isoperimetric theorems for boolean functions on the
  hypergrid and an {$\widetilde O(n\sqrt d)$} monotonicity tester.
\newblock {\em arXiv preprint arXiv:2211.05281}, 2022.

\bibitem[BCS23]{BCS23}
Hadley Black, Deeparnab Chakrabarty, and C~Seshadhri.
\newblock A {$d^{1/2 + o(1)}$} monotonicity tester for boolean functions on $ d
  $-dimensional hypergrids.
\newblock {\em arXiv preprint arXiv:2304.01416}, 2023.

\bibitem[Beb03]{Beb03}
Mario Bebendorf.
\newblock A note on the poincar{\'e} inequality for convex domains.
\newblock {\em Zeitschrift f{\"u}r Analysis und ihre Anwendungen},
  22(4):751--756, 2003.

\bibitem[Bel18]{Bel18}
Aleksandrs Belovs.
\newblock Adaptive lower bound for testing monotonicity on the line.
\newblock In {\em Approximation, Randomization, and Combinatorial Optimization.
  Algorithms and Techniques (APPROX/RANDOM 2018)}. Schloss
  Dagstuhl-Leibniz-Zentrum fuer Informatik, 2018.

\bibitem[BGL14]{BGL14}
Dominique Bakry, Ivan Gentil, and Michel Ledoux.
\newblock {\em Analysis and geometry of Markov diffusion operators}, volume
  103.
\newblock Springer, 2014.

\bibitem[BH97]{BH97}
Sergey~G Bobkov and Christian Houdr{\'e}.
\newblock Isoperimetric constants for product probability measures.
\newblock {\em The Annals of Probability}, pages 184--205, 1997.

\bibitem[BKKM22]{BKKM22}
Mark Braverman, Subhash Khot, Guy Kindler, and Dor Minzer.
\newblock Improved monotonicity testers via hypercube embeddings.
\newblock {\em arXiv preprint arXiv:2211.09229}, 2022.

\bibitem[BKR20]{BKR20}
Hadley Black, Iden Kalemaj, and Sofya Raskhodnikova.
\newblock Isoperimetric inequalities for real-valued functions with
  applications to monotonicity testing.
\newblock {\em arXiv preprint arXiv:2011.09441}, 2020.

\bibitem[BL96]{BL96}
Dominique Bakry and Michel Ledoux.
\newblock L{\'e}vy--gromov’s isoperimetric inequality for an infinite
  dimensional diffusion generator.
\newblock {\em Inventiones mathematicae}, 123(2):259--281, 1996.

\bibitem[BL97]{BL97}
Sergey Bobkov and Michel Ledoux.
\newblock Poincar{\'e}’s inequalities and talagrand’s concentration
  phenomenon for the exponential distribution.
\newblock {\em Probability Theory and Related Fields}, 107:383--400, 1997.

\bibitem[Bob97]{Bob97}
Sergey~G Bobkov.
\newblock An isoperimetric inequality on the discrete cube, and an elementary
  proof of the isoperimetric inequality in gauss space.
\newblock {\em The Annals of Probability}, 25(1):206--214, 1997.

\bibitem[Bor75]{Bor75}
Christer Borell.
\newblock The brunn-minkowski inequality in gauss space.
\newblock {\em Inventiones mathematicae}, 30(2):207--216, 1975.

\bibitem[BRY14a]{BRY14a}
Piotr Berman, Sofya Raskhodnikova, and Grigory Yaroslavtsev.
\newblock {$L^p$}-testing.
\newblock In {\em Proceedings of the forty-sixth annual ACM symposium on Theory
  of computing}, pages 164--173, 2014.

\bibitem[BRY14b]{BRY14b}
Eric Blais, Sofya Raskhodnikova, and Grigory Yaroslavtsev.
\newblock Lower bounds for testing properties of functions over hypergrid
  domains.
\newblock In {\em 2014 IEEE 29th Conference on Computational Complexity (CCC)},
  pages 309--320. IEEE, 2014.

\bibitem[BS88]{BS88}
Colin Bennett and Robert~C Sharpley.
\newblock {\em Interpolation of operators}.
\newblock Academic press, 1988.

\bibitem[BS16]{BS16}
Lorenzo Brasco and Filippo Santambrogio.
\newblock A note on some poincar{\'e} inequalities on convex sets by optimal
  transport methods.
\newblock In {\em Geometric Properties for Parabolic and Elliptic PDE's:
  GPPEPDEs, Palinuro, Italy, May 2015 4}, pages 49--63. Springer, 2016.

\bibitem[BV04]{BV04}
Stephen~P Boyd and Lieven Vandenberghe.
\newblock {\em Convex optimization}.
\newblock Cambridge university press, 2004.

\bibitem[CDST15]{CDST15}
Xi~Chen, Anindya De, Rocco~A. Servedio, and Li-Yang Tan.
\newblock Boolean function monotonicity testing requires (almost) $n^{1/2}$
  non-adaptive queries.
\newblock In {\em Proceedings of the forty-seventh annual ACM symposium on
  Theory of Computing}, page 519–528. ACM, Jun 2015.

\bibitem[CS13]{CS13}
Deeparnab Chakrabarty and C.~Seshadhri.
\newblock Optimal bounds for monotonicity and lipschitz testing over hypercubes
  and hypergrids.
\newblock In {\em Proceedings of the forty-fifth annual ACM symposium on Theory
  of computing}, pages 419--428, 2013.

\bibitem[CS14]{CS14}
Deeparnab Chakrabarty and C.~Seshadhri.
\newblock An optimal lower bound for monotonicity testing over hypergrids.
\newblock {\em Theory Of Computing}, 10(17):453--464, 2014.

\bibitem[CS16]{CS16}
D.~Chakrabarty and C.~Seshadhri.
\newblock An $o(n)$ monotonicity tester for boolean functions over the
  hypercube.
\newblock {\em SIAM Journal on Computing}, 45(2):461–472, Jan 2016.

\bibitem[CS19]{CS19}
D~Chakrabarty and C~Seshadhri.
\newblock Adaptive boolean monotonicity testing in total influence time.
\newblock {\em Innovations in Theoretical Computer Science (ITCS)}, 2019.

\bibitem[CST14]{CST14}
Xi~Chen, Rocco~A. Servedio, and Li-Yang Tan.
\newblock New algorithms and lower bounds for monotonicity testing.
\newblock In {\em 2014 IEEE 55th Annual Symposium on Foundations of Computer
  Science}, page 286–295, Oct 2014.

\bibitem[CT80]{CT80}
Michael~G Crandall and Luc Tartar.
\newblock Some relations between nonexpansive and order preserving mappings.
\newblock {\em Proceedings of the American Mathematical Society},
  78(3):385--390, 1980.

\bibitem[CWX17]{CWX17}
Xi~Chen, Erik Waingarten, and Jinyu Xie.
\newblock Beyond talagrand functions: new lower bounds for testing monotonicity
  and unateness.
\newblock In {\em Proceedings of the 49th Annual ACM SIGACT Symposium on Theory
  of Computing}, pages 523--536, 2017.

\bibitem[CWZ21]{CWZ21}
Shuyu Cheng, Guoqiang Wu, and Jun Zhu.
\newblock On the convergence of prior-guided zeroth-order optimization
  algorithms.
\newblock {\em Advances in Neural Information Processing Systems},
  34:14620--14631, 2021.

\bibitem[DGL{\etalchar{+}}99]{DGLRRS99}
Yevgeniy Dodis, Oded Goldreich, Eric Lehman, Sofya Raskhodnikova, Dana Ron, and
  Alex Samorodnitsky.
\newblock Improved testing algorithms for monotonicity.
\newblock In {\em RANDOM/APPROX}, pages 97--108. Springer, 1999.

\bibitem[EKK{\etalchar{+}}98]{EKKRV98}
Funda Erg{\"u}n, Sampath Kannan, S~Ravi Kumar, Ronitt Rubinfeld, and Mahesh
  Viswanathan.
\newblock Spot-checkers.
\newblock In {\em Proceedings of the thirtieth annual ACM symposium on Theory
  of computing}, pages 259--268, 1998.

\bibitem[EKLM22]{EKLM22}
Ronen Eldan, Guy Kindler, Noam Lifshitz, and Dor Minzer.
\newblock Isoperimetric inequalities made simpler.
\newblock {\em arXiv preprint arXiv:2204.06686}, 2022.

\bibitem[Fis04]{Fis04}
Eldar Fischer.
\newblock On the strength of comparisons in property testing.
\newblock {\em Information and Computation}, 189(1):107--116, 2004.

\bibitem[FLN{\etalchar{+}}02]{FLNRRS02}
Eldar Fischer, Eric Lehman, Ilan Newman, Sofya Raskhodnikova, Ronitt Rubinfeld,
  and Alex Samorodnitsky.
\newblock Monotonicity testing over general poset domains.
\newblock In {\em Proceedings of the thiry-fourth annual ACM symposium on
  Theory of computing}, pages 474--483, 2002.

\bibitem[GGL{\etalchar{+}}00]{GGLRS00}
Oded Goldreich, Shafi Goldwasser, Eric Lehman, Dana Ron, and Alex
  Samorodnitsky.
\newblock Testing monotonicity.
\newblock {\em Combinatorica}, 3(20):301--337, 2000.

\bibitem[Gro75]{Gro75}
Leonard Gross.
\newblock Logarithmic sobolev inequalities.
\newblock {\em American Journal of Mathematics}, 97(4):1061--1083, 1975.

\bibitem[HY22]{HY22}
Nathaniel Harms and Yuichi Yoshida.
\newblock Downsampling for testing and learning in product distributions.
\newblock In {\em 49th International Colloquium on Automata, Languages, and
  Programming (ICALP 2022)}. Schloss Dagstuhl-Leibniz-Zentrum f{\"u}r
  Informatik, 2022.

\bibitem[Kaw85]{Kaw85}
Bernhard Kawohl.
\newblock {\em Rearrangements and convexity of level sets in PDE}, volume 1150
  of {\em Lecture Notes in Mathematics}.
\newblock 1985.

\bibitem[KMS18]{KMS18}
Subhash Khot, Dor Minzer, and Muli Safra.
\newblock On monotonicity testing and boolean isoperimetric-type theorems.
\newblock {\em SIAM Journal on Computing}, 47(6):2238--2276, 2018.

\bibitem[KN15]{KN15}
Nikolay Kuznetsov and Alexander Nazarov.
\newblock Sharp constants in the poincar{\'e}, steklov and related inequalities
  (a survey).
\newblock {\em Mathematika}, 61(2):328--344, 2015.

\bibitem[Lat03]{Lat03}
Rafa{\l} Lata{\l}a.
\newblock On some inequalities for gaussian measures.
\newblock {\em arXiv preprint math/0304343}, 2003.

\bibitem[O'D14]{ODon14}
Ryan O'Donnell.
\newblock {\em Analysis of boolean functions}.
\newblock Cambridge University Press, 2014.

\bibitem[Poi90]{Poi90}
Henri Poincar{\'e}.
\newblock Sur les {\'e}quations aux d{\'e}riv{\'e}es partielles de la physique
  math{\'e}matique.
\newblock {\em American Journal of Mathematics}, pages 211--294, 1890.

\bibitem[PRV17]{PRV17}
Ramesh Krishnan~S Pallavoor, Sofya Raskhodnikova, and Nithin Varma.
\newblock Parameterized property testing of functions.
\newblock {\em ACM Transactions on Computation Theory (TOCT)}, 9(4):1--19,
  2017.

\bibitem[PRW22]{PRW22}
Ramesh Krishnan~S Pallavoor, Sofya Raskhodnikova, and Erik Waingarten.
\newblock Approximating the distance to monotonicity of boolean functions.
\newblock {\em Random Structures \& Algorithms}, 60(2):233--260, 2022.

\bibitem[PW57]{PW57}
LE~Payne and HF~Weinberger.
\newblock Lower bounds for vibration frequencies of elastically supported
  membranes and plates.
\newblock {\em Journal of the Society for Industrial and Applied Mathematics},
  5(4):171--182, 1957.

\bibitem[ST78]{ST78}
Vladimir~N Sudakov and Boris~S Tsirel'son.
\newblock Extremal properties of half-spaces for spherically invariant
  measures.
\newblock {\em Journal of Soviet Mathematics}, 9(1):9--18, 1978.

\bibitem[Tal93]{Tal93}
Michel Talagrand.
\newblock Isoperimetry, logarithmic sobolev inequalities on the discrete cube,
  and margulis' graph connectivity theorem.
\newblock {\em Geometric \& Functional Analysis GAFA}, 3(3):295--314, 1993.

\bibitem[Ver99]{VR99}
R{\"u}diger Verf{\"u}rth.
\newblock A note on polynomial approximation in sobolev spaces.
\newblock {\em ESAIM: Mathematical Modelling and Numerical Analysis},
  33(4):715--719, 1999.

\bibitem[Wan05]{Wan05}
Xianfu Wang.
\newblock Volumes of generalized unit balls.
\newblock {\em Mathematics Magazine}, 78(5):390--395, 2005.

\end{thebibliography}

\appendix

\section{Background on isoperimetric inequalities}
\label{sec:background-inequalities}

Let us trace an extremely brief history of developments that are most relevant to our study of
isoperimetric inequalities. We start with the original work of Poincaré \cite{Poi90}, which yields
inequalities of the type $\|f - \Ex{f} \|_p \le C(\Omega) \| \grad f \|_p$ for sufficiently smooth
domains $\Omega$ and sufficiently integrable functions $f$\footnote{More precisely, for $f$ in the
appropriate Sobolev space.}. The optimal constant $C(\Omega)$, also called the \emph{Poincaré
constant} of $\Omega$, depends on properties of this domain\footnote{For example, it is
characterized by the first nontrivial eigenvalue of the Laplacian operator on smooth bounded
$\Omega$. There are additional considerations relating the assumptions made of $f$ on the boundary
$\partial \Omega$ and the (Dirichlet or Neumann) boundary conditions associated with the Laplacian;
see \cite{KN15} for a survey.}, and often the goal is to establish the sharp constant for families
of domains $\Omega$. \cite{PW57} and \cite{AD04} (see also \cite{Beb03}) showed that, for convex
domains, $C(\Omega)$ is essentially upper bounded by the diameter of $\Omega$, and this bound is
tight in general. However, for specific structured domains such as the product domain $[0,1]^n$, the
diameter characterization falls short of yielding a dimension-free inequality (see also the
literature on logarithmic Sobolev inequalities \cite{Gro75}).

Making progress on this front in the discrete setting, the landmark work of Talagrand \cite{Tal93}
established inequalities like the above for domain $\Omega = \zo^n$, with $C = C(\Omega)$
independent of $n$, and established connections with earlier works of Margulis on graph connectivity
and Pisier on probability in Banach spaces. (More recently, Fourier-analytic proofs of Talagrand's
inequality have also been given \cite{EKLM22}.) In continuous settings, similar results were first
established for the Gaussian measure in connection with the Gaussian isoperimetric inequality
\cite{Bob97,BL96,ST78,Bor75,Lat03}. Tying back to our present settings of interest, Bobkov and
Houdré \cite{BH97} showed that a dimension-independent Poincaré-type inequality also holds for
product measures in $\bR^n$, including the uniform measure on $[0,1]^n$, as shown in
\eqref{eq:intro-poincare-inequality}.

As introduced in the opening, it is these dimension-free Poincaré inequalities for discrete and
continuous product measures whose directed analogues have implications for the structure of monotone
functions and therefore for property testing \cite{GGLRS00,CS16,KMS18}. To enrich the summary laid
out in \cref{table:inequalities}, we present additional related inequalities recently shown by
\cite{BKR20,BCS22,BKKM22} in \cref{table:inequalities-plus}, and briefly explain them here. These
inequalities have unlocked algorithmic results for testing monotonicity of real-valued functions on
the Boolean cube, and Boolean-valued functions on the hypergrid, as summarized in
\cref{sec:related-work}.

\begin{table}[t!]
    \centering
    \begin{NiceTabular}{c | c || c | c | c}[cell-space-limits=0.3em]
        \Block{2-2}{\diagbox{\textbf{Inequality}}{\textbf{Setting} \,}}
            & & \Block{1-2}{\textbf{Discrete}} & & \textbf{Continuous} \\
        & & $\zo^n \to \zo$ & $\zo^n \to \bR$ & $[0,1]^n \to \bR$ \\ \hline \hline

        \Block{2-1}{$(L^1, \ell^1)$-Poincaré}
            & $d^\const_1(f) \lesssim \Ex{\|\grad f\|_1}$
            & * \cite{Tal93} & * \cite{Tal93} & * \cite{BH97} \\ \cline{2-5}
        & $d^\mono_1(f) \lesssim \Ex{\|\grad^- f\|_1}$
            & \cite{GGLRS00} & \cref{thm:main-directed-inequality-discrete}
            & \cref{thm:main-directed-inequality-continuous}
            \\ \hline

        \Block{2-1}{$(L^1, \ell^2)$-Poincaré}
            & $d^\const_1(f) \lesssim \Ex{\|\grad f\|_2}$
            & * \cite{Tal93} & \cite{Tal93} & \cite{BH97} \\ \cline{2-5}
        & $d^\mono_1(f) \lesssim \Ex{\|\grad^- f\|_2}$
            & \cite{KMS18} & ? & \cref{conjecture:better-inequality} \\ \hline \hline

        \Block{3-1}{Related \\ inequalities}
            & $d^\const_0(f) \lesssim \Ex{\|\phi f\|_2}$
            & \Block{2-3}{For $f : \zo^n \to \bR$ \cite{BKR20}} \\ \cline{2-2}
        & $d^\mono_0(f) \lesssim \Ex{\|\phi^- f\|_2}$ && \\ \cline{2-5}
        & $d^\mono_0(f) \lesssim \Ex{\|\phi^- f\|_2}$
            & \Block{1-3}{For $f : [m]^n \to \zo$ \cite{BCS22,BKKM22}} \\
    \end{NiceTabular}
    \caption{Classical and directed functional inequalities on discrete and continuous domains.
    Cells marked with * indicate inequalities that follow from another entry in the table. For
    simplicity, logarithmic factors in the inequalities are ignored.}
    \label{table:inequalities-plus}
\end{table}

Define the vector-valued operators $\phi$ and $\phi^-$ on functions $f : [m]^n \to \bR$ as follows:
for each $x \in [m]^n$ and $i \in [n]$,
\begin{align*}
    (\phi f(x))_i &\define \ind{
        \exists y : (x \preceq_i y \text{ or } y \preceq_i x) \text{ and } f(x) \ne f(y)
    } \,, \\
    (\phi^- f(x))_i &\define \ind{
        (\exists y : x \preceq_i y, f(x) > f(y))
        \text{ or }
        (\exists y : y \preceq_i x, f(y) > f(x))
    } \,,
\end{align*}
where we write $x \preceq_i y$ if $x_j = y_j$ for every $j \ne i$, and $x_i \le y_i$. Compared to
the gradient, these operators 1) are only sensitive to the order relation between function values
(which suits the setting of Hamming testing); and 2) capture ``long range'' violations of
monotonicity (accordingly, the corresponding hypergrid testers are not edge testers).
See \cref{sec:results-inequalities} for a remark on the nuances of inner/outer boundaries and robust
inequalities.

\section{Upper bounds from \cite{BRY14a} applied to Lipschitz functions}
\label{sec:comparison}

In this section, we show how the $L^1$ monotonicity testing upper bounds from \cite{BRY14a} imply
testers with query complexity $\widetilde O\left( \frac{n^2 L}{\epsilon} \right)$ on the unit cube
and $\widetilde O\left( \frac{n^2 mL}{\epsilon} \right)$ on the hypergrid for functions $f$
satisfying $\Lip_1(f) \le L$. We start with the case of the hypergrid, and first state the upper
bound of \cite{BRY14a} for functions with arbitrary range of size $r$, which without loss of
generality (by translation invariance) we denote $[0,r]$:

\begin{theorem}[\cite{BRY14a}]
    \label{thm:bry14a-tester}
    There exists an $L^1$ monotonicity tester for functions $f : [m]^n \to [0,r]$ that uses $O\left(
    \frac{rn}{\epsilon} \log \frac{rn}{\epsilon} \right)$ value queries. The tester is nonadaptive
    and has one-sided error.
\end{theorem}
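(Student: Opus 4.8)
The plan is to deduce the theorem from the range-$[0,1]$ case, which is the form in which \cite{BRY14a} originally proved it (recalled in \cref{sec:related-work}): for every $\delta > 0$ there is a nonadaptive, one-sided $L^1$ monotonicity tester for functions $[m]^n \to [0,1]$ making $O\left(\frac{n}{\delta}\log\frac{n}{\delta}\right)$ value queries. Given $f : [m]^n \to [0,r]$ — and a range of arbitrary size $r$ may be taken to be $[0,r]$, since translating the range affects neither monotonicity nor the $L^1$ distance — I would set $g \define f/r : [m]^n \to [0,1]$. Scaling by $r > 0$ maps monotone functions to monotone functions and multiplies every $L^1$ distance by $1/r$, so $d_1(g) = d_1(f)/r$; hence $f$ is monotone iff $g$ is, and $d_1(f) > \epsilon$ iff $d_1(g) > \epsilon/r$. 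It therefore suffices to run the range-$[0,1]$ tester of \cite{BRY14a} on $g$ with proximity parameter $\epsilon/r$, answering each query it makes to $g$ by one value query to $f$. Nonadaptivity, one-sidedness, and correctness are inherited, and the query count is $O\left(\frac{n}{\epsilon/r}\log\frac{n}{\epsilon/r}\right) = O\left(\frac{rn}{\epsilon}\log\frac{rn}{\epsilon}\right)$, as claimed. The one point requiring care is the \emph{direction} of this normalization: enlarging the range by a factor $r$ makes the proximity requirement $r$ times more stringent, so the reduction costs a factor $r$ and not $1/r$ — precisely the subtlety discussed in the footnote of \cref{sec:comparison-with-prior-lp}.

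For completeness I would also recall where \cite{BRY14a}'s range-$[0,1]$ tester comes from. The argument is a threshold reduction to Hamming monotonicity testing of Boolean functions: writing $f_t \define \ind{f \ge t}$, the layer-cake identity $\abs{a-b} = \int_0^1 \abs{\ind{a \ge t} - \ind{b \ge t}}\odif t$ gives $d_1(f,h) = \int_0^1 d_0(f_t, h_t)\odif t$ for every $h$, and \cite[Lemmas 2.1 and 2.2]{BRY14a} strengthen this to the characterization $d_1(f) = \int_0^1 d_0(f_t)\odif t$ — the nontrivial content being that a pointwise-optimal monotone Boolean approximation of $f_t$ can be chosen \emph{consistently} (nested) across thresholds, so that integrating these approximations yields one monotone competitor for $f$. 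The tester then runs an $m$-independent Boolean hypergrid monotonicity tester, drawing a \emph{fresh} threshold $t \sim [0,1]$ for each sampled axis-aligned pair $(x,y)$ and rejecting that pair with probability $(f(x)-f(y))^+$; a monotone $f$ is never rejected, while for $d_1(f) > \delta$ the per-pair rejection probability equals $\int_0^1 \Pr{f_t(x) > f_t(y)}\odif t$, which — modulo the bookkeeping needed because the Boolean tester's isoperimetric guarantee degrades on thresholds already close to monotone — is $\gtrsim \frac{1}{n\log(n/\delta)}\int_0^1 d_0(f_t)\odif t = \frac{d_1(f)}{n\log(n/\delta)} > \frac{\delta}{n\log(n/\delta)}$, so $O\left(\frac{n}{\delta}\log\frac{n}{\delta}\right)$ rounds suffice.

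I expect the genuine obstacle, were one to want a self-contained proof instead of the citation, to be exactly this last characterization: exhibiting at each threshold level an optimal monotone approximation that is simultaneously consistent across levels requires a real combinatorial argument (a Dilworth/matching-type analysis of violating pairs), whereas the easy inequality $d_1(f) \ge \int_0^1 d_0(f_t)\odif t$ is immediate because thresholding any monotone competitor for $f$ yields monotone competitors for all the $f_t$. Everything else — the rescaling, the layer-cake identity, the fresh-threshold trick, and the logarithmic bookkeeping across thresholds — is routine.
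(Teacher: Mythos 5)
Your rescaling reduction — setting $g = f/r$, observing $d_1(g) = d_1(f)/r$, and invoking the range-$[0,1]$ tester of \cite{BRY14a} with proximity parameter $\epsilon/r$ — is exactly the argument the paper uses to derive this statement. The additional recap of how \cite{BRY14a} obtain the range-$[0,1]$ tester via threshold/layer-cake reduction to Boolean Hamming testing is accurate background but not part of the derivation, which the paper simply cites.
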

As explained in \cref{sec:comparison-with-prior-lp}, the extra factor of $r$ compared to the bounds
stated in \cite{BRY14a} accounts for the conversion between range $[0,r]$ and range $[0,1]$, which
affects the proximity parameter $\epsilon$ by a factor of $r$: testing functions with range $[0,r]$
for proximity parameter $\epsilon$ is equivalent to testing functions with range $[0,1]$ for
proximity parameter $\epsilon/r$.

We thus obtain the following $L^1$ monotonicity tester for Lipschitz functions on the hypergrid:

\begin{corollary}
    \label{cor:lp-testing-tester-hypergrid}
    There is an $L^1$ monotonicity tester for functions $f : [m]^n \to \bR$ satisfying $\Lip_1(f)
    \le L$ that uses $O\left( \frac{n^2 mL}{\epsilon} \log\left( \frac{nmL}{\epsilon} \right)
    \right)$ value queries. The tester is nonadaptive and has one-sided error.
\end{corollary}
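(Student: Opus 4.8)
The plan is to derive the tester from \cref{thm:bry14a-tester} by using the Lipschitz hypothesis to bound the range of $f$. First I would observe that any two points $x,y \in [m]^n$ satisfy $\|x-y\|_1 \le n(m-1) < nm$, so the $(\ell^1,L)$-Lipschitz hypothesis gives $\max_{[m]^n} f - \min_{[m]^n} f < nmL$. Thus, setting $r \define nmL$, the function $f$ has a range of size at most $r$, and \cref{thm:bry14a-tester} (whose statement, as noted there, applies to any range of size $r$ by translation invariance, and needs only an upper bound on $r$) provides a nonadaptive one-sided $L^1$ monotonicity tester for $f$ using $O\!\left(\tfrac{rn}{\epsilon}\log\tfrac{rn}{\epsilon}\right) = O\!\left(\tfrac{n^2 mL}{\epsilon}\log\tfrac{n^2 mL}{\epsilon}\right)$ value queries. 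Concretely, one runs the tester of \cref{thm:bry14a-tester} instantiated with range parameter $r = nmL$, forwarding each of its value queries to $f$; its correctness and one-sidedness transfer unchanged.

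It remains to replace the $\log\tfrac{n^2 mL}{\epsilon}$ factor by $\log\tfrac{nmL}{\epsilon}$. To this end I would first discard trivial instances: since constant functions are monotone, $d_1(f) \le d^\const_1(f)$, and the classical Poincaré inequality on the hypergrid together with $\abs{\partial_i f} \le L$ for every $i$ bounds $d^\const_1(f) = O(\sqrt{n}\,mL)$ (concretely, $\Var{f} \lesssim m^2 \sum_i \Ex{(\partial_i f)^2} \le m^2 n L^2$, and $d^\const_1(f) \le \sqrt{\Var{f}}$). Hence if $\epsilon$ exceeds this bound, then every admissible input is $\epsilon$-close to monotone and the tester that always accepts is correct; so we may assume $\epsilon = O(\sqrt{n}\,mL)$, equivalently $\tfrac{nmL}{\epsilon} = \Omega(\sqrt{n})$. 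In that regime $\log n = O\!\left(\log\tfrac{nmL}{\epsilon}\right)$, so $\log\tfrac{n^2 mL}{\epsilon} = \log n + \log\tfrac{nmL}{\epsilon} = O\!\left(\log\tfrac{nmL}{\epsilon}\right)$, which gives the claimed query complexity.

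I do not expect a genuine obstacle here, since the whole argument is a short reduction. The only two points requiring care are: (i) \emph{not} re-applying the range-to-$[0,1]$ normalization, because the extra factor of $r$ in \cref{thm:bry14a-tester} relative to the bounds stated in \cite{BRY14a} already absorbs it, as explained below that theorem; and (ii) the dispatching of trivial instances in the previous paragraph, which is precisely what licenses writing the polylogarithmic factor as $\log\tfrac{nmL}{\epsilon}$ rather than $\log\tfrac{n^2 mL}{\epsilon}$.
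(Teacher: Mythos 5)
Your proof is correct and takes essentially the same approach as the paper's: bound the range of $f$ by $r = nmL$ using the $(\ell^1, L)$-Lipschitz hypothesis, then invoke \cref{thm:bry14a-tester} with that range parameter. The paper's proof stops there, simply asserting the announced complexity, even though direct substitution gives $\log\frac{n^2 mL}{\epsilon}$ rather than $\log\frac{nmL}{\epsilon}$; your second paragraph, which dispatches the regime $\epsilon \gtrsim \sqrt{n}\,mL$ via the classical hypergrid Poincaré inequality and thereby absorbs the extra $\log n$, is a clean and more careful justification of the stated log factor.
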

\begin{proof}
    The algorithm simulates the tester from \cref{thm:bry14a-tester} with parameters $r = Lmn$ and
    $\epsilon$, which gives the announced query complexity. As for correctness, note that since
    $\Lip_1(f) \le L$, for any $x, y \in [m]^n$ we have $\abs*{f(x)-f(y)} \le L\|x-y\|_1 \le Lmn$.
    Thus $f$ has range of size at most $Lmn$, so the reduction to the tester from
    \cref{thm:bry14a-tester} is correct.
\end{proof}

We outline how this result implies a tester for Lipschitz functions on the unit cube as well, via
the domain reduction or downsampling principle of \cite{BCS20,HY22}. Even though the tester above
has query complexity that depends on $m$, the main observation is that, given an $(\ell^1,
L)$-Lipschitz function on $[0,1]^n$, we may discretize it into an $(\ell^1, L/m)$-Lipschitz function
on an arbitrarily fine hypergrid $[m]^n$. Then the term $m(L/m) = L$ remains fixed for any choice of
$m$, so the complexity of the tester above does not depend on $m$ in this reduction. Finally, by
setting $m$ large enough, we may upper bound the error introduced by the discretization.

\begin{corollary}
    There is an $L^1$ monotonicity tester for functions $f : [0,1]^n \to \bR$ satisfying $\Lip_1(f)
    \le L$ that uses $O\left( \frac{n^2 L}{\epsilon} \log\left( \frac{nL}{\epsilon} \right) \right)$
    value queries. The tester is nonadaptive and has one-sided error.
\end{corollary}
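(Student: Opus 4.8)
The plan is a domain-reduction (downsampling) argument, following \cite{BCS20,HY22} and in the spirit of the one-dimensional \cref{lemma:continuous-to-discrete}: discretize $[0,1]^n$ into a fine hypergrid $[m]^n$, run the hypergrid tester of \cref{cor:lp-testing-tester-hypergrid} on the discretized function, and argue that discretization shrinks the $L^1$ distance to monotonicity by at most a constant factor. First I would record the trivial regime: since $\Lip_1(f) \le L$ forces $\abs{\partial_i f} \le L$ almost everywhere, \cref{cor:main-directed-inequality-continuous} gives $d_1(f) \le \Ex{\abs{f-f^*}} \le 2\Ex{\|\grad^- f\|_1} \le 2nL$, so if $\epsilon \ge 2nL$ every admissible input is already $\epsilon$-close to monotone and the tester may accept outright. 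Hence I may assume $2nL/\epsilon > 1$ and fix the integer $m \define \lceil 2nL/\epsilon \rceil$.

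Next I would define the discretization $\overline f : [m]^n \to \bR$ by $\overline f(x) \define f(x/m)$; a unit grid step corresponds to an $\ell^1$-step of length $1/m$ in the cube, so $\overline f$ is $(\ell^1, L/m)$-Lipschitz. The crux is the claim that $d_1(f) > \epsilon$ implies $d_1(\overline f) > \epsilon/2$. To prove it, given any monotone $\overline g : [m]^n \to \bR$ I would form the monotone ``staircase'' extension $g : [0,1]^n \to \bR$ defined by $g(x) \define \overline g(i(x))$, where $i(x)_j \define \lceil m x_j \rceil$ (with the convention $i(x)_j \define 1$ when $x_j = 0$); this $g$ is monotone because $x \mapsto i(x)$ is coordinatewise nondecreasing and $\overline g$ is monotone. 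On the subcube mapped to grid point $i$ (of volume $m^{-n}$, with upper corner $i/m$) the Lipschitz bound gives $\abs{f(x) - \overline f(i)} = \abs{f(x) - f(i/m)} \le L\|x - i/m\|_1 \le Ln/m$, so, writing $\abs{f(x)-g(x)} \le \abs{f(x)-\overline f(i(x))} + \abs{\overline f(i(x))-\overline g(i(x))}$ and integrating over $[0,1]^n$,
\[
    d_1(f) \le d_1(f,g) \le \frac{Ln}{m} + \int_{[0,1]^n} \abs{\overline f(i(x)) - \overline g(i(x))} \odif \nu(x) = \frac{Ln}{m} + d_1(\overline f, \overline g) \,.
\]
Taking the infimum over monotone $\overline g$ and using $m \ge 2nL/\epsilon$ gives $d_1(\overline f) \ge d_1(f) - Ln/m > \epsilon/2$, proving the claim.

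With the claim in hand the tester is immediate: simulate the tester of \cref{cor:lp-testing-tester-hypergrid} on $\overline f$ with proximity parameter $\epsilon/2$, answering each query to $\overline f(i)$ by a single value query $f(i/m)$. Since $\Lip_1(\overline f) \le L/m$, the query complexity is $O\!\left(\frac{n^2 m (L/m)}{\epsilon}\log\frac{n m (L/m)}{\epsilon}\right) = O\!\left(\frac{n^2 L}{\epsilon}\log\frac{nL}{\epsilon}\right)$ --- the factor $m$ cancels exactly against $\Lip_1(\overline f) \le L/m$, which is the whole point of the reduction. Nonadaptivity is inherited; for one-sidedness, a monotone $f$ discretizes to a monotone $\overline f$ and is accepted, while $d_1(f) > \epsilon$ yields $d_1(\overline f) > \epsilon/2$ and rejection with probability $\ge 2/3$. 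I expect the only genuinely delicate point to be the discretization claim --- specifically, exhibiting a monotone extension $g$ of $\overline g$ whose discrepancy with $f$ is controlled purely by $L$ and the mesh size $1/m$ --- and the staircase extension handles it cleanly; everything else is bookkeeping about the canceling $m$ factor.
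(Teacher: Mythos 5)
Your proposal is correct and follows essentially the same route as the paper's proof: define $\overline f(x) \define f(x/m)$ on a hypergrid $[m]^n$, observe $\Lip_1(\overline f) \le L/m$ so the $m$-factors cancel in the hypergrid tester's complexity, and show $d_1(f) > \epsilon \Rightarrow d_1(\overline f) \gtrsim \epsilon$ by taking any monotone $\overline g$, extending it to a monotone staircase $g$ on $[0,1]^n$ via coordinatewise rounding up, and bounding the discrepancy by $Ln/m$ via the Lipschitz assumption. The only minor addition you make beyond the paper is explicitly disposing of the trivial regime $\epsilon \ge 2nL$ using the directed Poincaré inequality, which makes the choice of integer $m$ well-defined; the paper simply says ``for sufficiently large $m$''.
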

\begin{proof}[Proof sketch]
    For sufficiently large $m$ to be chosen below, the algorithm imposes a hypergrid $[m]^n$
    uniformly on $[0,1]^n$. More precisely, we define a function $f' : [m]^n \to \bR$ via $f'(x) =
    f(x/m)$ for all $x \in [m]^n$. Note that $\Lip_1(f') \le L/m$. Then, the algorithm simulates the
    tester from \cref{cor:lp-testing-tester-hypergrid} on $f'$ with proximity parameter
    $\Omega(\epsilon)$, and returns its result. Note that the query complexity is as desired, so it
    remains to show correctness. If $f$ is monotone so is $f'$, in which case the algorithm accepts.
    Now assume that $d_1(f) > \epsilon$, and we need to show that $d_1(f') \gtrsim \epsilon$.

    Let $g' : [m]^n \to \bR$ be any monotone function. We obtain a monotone function $g : [0,1]^n
    \to \bR$ as follows: for each $x \in [0,1]^n$, let $\overline x$ be obtained by rounding up each
    coordinate $x_i$ to a positive integral multiple of $1/m$. Then $x' \define m \overline x \in
    [m]^n$, and we set $g(x) \define g'(x')$. It follows that $g$ is monotone, and by the triangle
    inequality,
    \[
        \Ex{f-g} \le \Ex{f'-g'} + \max_{x \in [0,1]^n} \abs*{f(x) - f(\overline x)}
        \le \Ex{f'-g'} + \frac{Ln}{m} \,.
    \]
    In the first inequality, we separately account for the cost of turning each value of $f$ into
    the value at its corresponding ``rounded up'' point (accounted for by the second summand), and
    then the cost of turning each equal-sized, constant-valued cell into the value of $g$ on that
    cell, and these values agree with $f'$ and $g'$ (accounted for by the first summand). In the
    second inequality, we use the fact that any point $x$ satisfies $\|x - \overline x\|_1 \le
    \frac{1}{m} \cdot n$, along with the Lipschitz assumption on $f$.

    Therefore, by setting $m > \frac{10Ln}{\epsilon}$, we obtain $\frac{Ln}{m} <
    \frac{\epsilon}{10}$, and since the inequality above holds for every monotone function $g'$, we
    conclude that $d_1(f') \ge d_1(f) - \epsilon/10 > 9\epsilon/10$, as desired.
\end{proof}

\begin{remark}
    One may wonder whether the reductions above could yield more efficient testers if combined with
    Hamming testers for Boolean functions on the hypergrid with better dependence on $n$ (via the
    reduction from $L^1$ testing to Hamming testing of \cite{BRY14a}, which is behind
    \cref{thm:bry14a-tester}), since \eg the tester of \cite{BCS23} has query complexity $\widetilde
    O(n^{1/2+o(1)}/\epsilon^2)$. However, it seems like this is not the case, \ie
    \cref{cor:lp-testing-tester-hypergrid} has the best query complexity of any reduction that upper
    bounds the size of the range of $f$ by $Lmn$. The reason is as follows: \cite{KMS18} showed
    that any nonadaptive, one-sided pair tester for the Boolean cube with query complexity
    $O(n^\alpha/\epsilon^\beta)$ must satisfy $\alpha + \frac{\beta}{2} \ge \frac{3}{2}$, so
    hypergrid testers must also satisfy this as well as $\beta \ge 1$, assuming query complexity
    independent of $m$. Then, given a hypergrid tester for Boolean functions with query complexity
    $\widetilde O(n^\alpha/\epsilon^\beta)$, our reduction via the inferred range size $r = Lnm$
    gives an $L^1$ tester with asymptotic query complexity at least
    $\frac{n^\alpha}{(\epsilon/r)^\beta} = \frac{n^{\alpha+\beta/2} n^{\beta/2}
    (mL)^\beta}{\epsilon^\beta} \ge \frac{n^2 mL}{\epsilon}$.
\end{remark}

\section{Lower bound from \cite{BRY14b} applied to $L^1$ testing}
\label{sec:bry14b-lower-bound}

We briefly explain how the $\Omega(n\log m)$ nonadaptive lower bound of \cite{BRY14b} for Hamming
testing monotonicity of functions $f : [m]^n \to \bR$ (with sufficiently small constant $\epsilon$)
also applies to $L^1$-testing functions satisfying $\Lip_1(f) \le O(1)$. The construction of
\cite{BRY14b} relies on two main ingredients: \emph{step functions} and \emph{Walsh functions}.

Let $m = 2^\ell$ for simplicity. For each $i \in \{0, \dotsc, m\}$ the $i$-th step function $s_i :
[2^\ell] \to [2^{\ell-i}]$ is given by
\[
    s_i(x) = \left\lfloor \frac{x-1}{2^i} \right\rfloor + 1 \,.
\]
In words, $s_i(x)$ increases by $1$ after every $2^i$ consecutive elements (called a \emph{block} of
size $2^i$).

The Walsh functions are defined as follows. For each $i \in [\ell]$, the function $w_i : [2^\ell]
\to \pmset$ is given by
\[
    w_i(x) = (-1)^{\mathsf{bit}_i(x-1)} \,,
\]
where the operator $\mathsf{bit}_i$ extracts the $i$-th bit of its input (indexed from least to most
significant). Then for each $S \subseteq [\ell]$, the function $w_S : [2^\ell] \to \pmset$ is given
by
\[
    w_S(x) = \prod_{i \in S} w_i(x) \,.
\]

These two types of functions are defined on the line, and they are extended into a multidimensional
construction on the hypergrid as follows. Given a vector $\bm{i} \in [\ell]^n$, the step function
$s_{\bm{i}} : [2^\ell]^n \to [n 2^{\ell}]$ is given by
\[
    s_{\bm{i}}(x_1, \dotsc, x_n) = \sum_{j=1}^n s_{\bm{i}_j}(x_j) \,,
\]
and given a vector $\bm{S} = (\bm{S}_1, \dotsc, \bm{S}_n)$ of subsets of $[\ell]$, the Walsh
function $w_{\bm{S}} : [2^\ell]^n \to \pmset$ is given by
\[
    w_{\bm{S}}(x_1, \dotsc, x_n) = \prod_{j=1}^n w_{\bm{S}_j}(x_j) \,.
\]

Then, \cite{BRY14b} use a communication complexity argument (namely a reduction from the
\textsc{AugmentIndex} problem) to show that (nonadaptive) Hamming testing monotonicity of functions
$h_{\bm{i},\bm{S}} : [2^\ell]^n \to \bN$ of the form
\[
    h_{\bm{i},\bm{S}}(x) = 2s_{\bm{i}}(x) + w_{\bm{S}}(x) \,,
\]
for appropriate choices of $\bm{i}$ and $\bm{S}$, requires at least $\Omega(n\log m)$ queries.
Therefore, to show that (nonadaptive) $L^1$ testing monotonicity of $(\ell^1, O(1))$-Lipschitz
functions also requires at least this number of queries, it suffices to show that every such
function $h = h_{\bm{i},\bm{S}}$ satisfies
\begin{enumerate}
    \item $\Lip_1(h) \le O(1)$; and
    \item If $d_0(h) \ge \epsilon$, then $d_1(h) \gtrsim \epsilon$.
\end{enumerate}

The first property follows from the definitions of the step and Walsh functions: let $x, y \in
[2^\ell]^n$ be such that $\|x-y\|_1 = 1$. Then let $j \in [\ell]$ be the coordinate such that
$\abs*{x_j - y_j} = 1$ and $x_k = y_k$ for $k \ne j$. Then
\[
    \abs*{h_{\bm{i},\bm{S}}(x) - h_{\bm{i},\bm{S}}(y)}
    \le 2\abs*{s_{\bm{i}_j}(x_j) - s_{\bm{i}_j}(y_j)} + \abs*{w_{\bm{S}}(x) - w_{\bm{S}}(y)}
    \le 2 + 2
    = 4 \,,
\]
the first inequality because $x$ and $y$ agree on every coordinate except for $j$, and the second
inequality because the step function $s_{\bm{i}_j}$ changes by at most $1$ on adjacent inputs, and
the Walsh functions only take values $\pm 1$. Thus $\Lip_1(h_{\bm{i},\bm{S}}) \le 4$, as desired.

As for the second property, note that if $d_0(h) \ge \epsilon$, then there exists a matching of the
form $(x^i, y^i)_i$ where for each $i$ we have $x^i \preceq y^i$ and $h(x^i) > h(y^i)$ (\ie $x^i,
y^i$ form a violating pair), such that at least an $\epsilon$-fraction of the points $[m]^n$ belong
to this matching (see \cite{FLNRRS02}). Now, for each such violating pair $x^i, y^i$, it follows
that $h(x^i) - h(y^i) \ge 1$, since $h$ is integer-valued. Therefore for any monotone function $h' :
[m]^n \to \bR$, it must be the case that $\abs*{h(x^i) - h'(x^i)} + \abs*{h(y^i) - h'(y^i)} \ge 1$.
Since this is true for disjoint pairs $x^i, y^i$ covering an $\epsilon$-fraction of the domain, it
follows that $d_1(h, h') \ge \epsilon / 2$ for any monotone function $h'$. Hence $d_1(h) \ge
\epsilon/2$, and we are done.

\end{document}